\DeclarePairedDelimiter\ceil{\lceil}{\rceil}
\newcommand{\scaffold}{\ensuremath{\mathcal{T}}}
\newcommand{\supplydemandflow}{\ensuremath{\mathit{f}_{\scaffold}^{\rightarrow}}}
\crefname{claim}{Claim}{Claims}
\title{Efficiently Reconfiguring a Connected Swarm of Labeled Robots}
\titlerunning{Efficiently Reconfiguring a Connected Swarm of Labeled Robots}
\author{Sándor P. Fekete}{Department of Computer Science, TU Braunschweig, Braunschweig, Germany}{s.fekete@tu-bs.de}{https://orcid.org/0000-0002-9062-4241}{}
\author{Peter Kramer}{Department of Computer Science, TU Braunschweig, Braunschweig, Germany}{p.kramer@tu-bs.de}{https://orcid.org/0000-0001-9635-5890}{}
\author{Christian Rieck}{Department of Computer Science, TU Braunschweig, Braunschweig, Germany}{rieck@ibr.cs.tu-bs.de}{https://orcid.org/0000-0003-0846-5163}{}
\author{Christian Scheffer}{Faculty of Electrical Engineering and Computer Science, Bochum University of Applied Sciences, Bochum, Germany}{christian.scheffer@hs-bochum.de}{https://orcid.org/0000-0002-3471-2706}{}
\author{Arne Schmidt}{Department of Computer Science, TU Braunschweig, Braunschweig, Germany}{arne.schmidt@tu-bs.de}{https://orcid.org/0000-0001-8950-3963}{}
\authorrunning{S.\,P. Fekete, P. Kramer, C. Rieck, C. Scheffer, and A. Schmidt}
\keywords{Motion planning, parallel motion, bounded stretch, makespan, connectivity, swarm robotics}
\begin{document}

\maketitle

\begin{abstract}
When considering motion planning for a swarm of $n$ labeled robots, we need to rearrange a given start configuration into a desired target configuration via a sequence of parallel, collision-free robot motions.
The objective is to reach the new configuration in a minimum amount of time; an important constraint is to keep the swarm connected at all times.
Problems of this type have been considered before, with recent notable results achieving \emph{constant stretch} for not necessarily connected reconfiguration: If mapping the start configuration to the target configuration requires a maximum Manhattan distance of $d$, the total duration of an overall schedule can be bounded to~$\mathcal{O}(d)$, which is optimal up to constant factors.
However, constant stretch could only be achieved if \emph{disconnected} reconfiguration is allowed, or for scaled configurations (which arise by increasing all dimensions of a given object by the same multiplicative factor) of \emph{unlabeled} robots.

We resolve these major open problems by (1) establishing a lower bound of $\Omega(\sqrt{n})$ for connected, labeled reconfiguration and, most importantly, by (2) proving that for scaled arrangements, constant stretch for connected reconfiguration can be achieved.
In addition, we show that (3) it is \NP-complete to decide whether a makespan of 2 can be achieved, while it is possible to check in polynomial time whether a makespan of 1 can be achieved.
\end{abstract}

\section{Introduction}
\label{sec:introduction}

{
Motion planning for sets of objects is a theoretical and practical problem of  great importance.
A typical task arises from relocating a large collection of agents from a given start into a desired target configuration, while avoiding collisions between objects or with obstacles.
Previous work has largely focused on achieving reconfiguration via sequential schedules, where one agent moves at a time; however, reconfiguring \emph{efficiently} requires reaching the target configuration in a timely or energy-efficient manner, with a natural objective of minimizing the time until completion, called \emph{makespan}.
Problems of labeled reconfiguration play an important role whenever
the involved agents need to be distinguished, such as in automated warehouses and autonomous vehicles;
see Stern et al.~\cite{SternSFK0WLA0KB19} for an overview, and the classic works with further applications
by {\v{S}}vestka and Overmars~\cite{vsvestka1998coordinated} (multi-robot motion planning),
Casal and Yim~\cite{casal1999self} (modular robotics),
and Kornhauser et al.~\cite{kornhauser1984coordinating} (memory management). The algorithmic difficulty
of optimal labeled reconfiguration was established as early as 1979 by Reif~\cite{reif1979complexity},
who gave a proof of \PSPACE-completeness.
Achieving minimum makespan for reconfiguring a swarm of labeled robots was the
subject of the 2021~Computational Geometry Challenge~\cite{challenge2021};
see~\cite{shadoks,gitastrophe,UNIST} for successful contributions.

Exploiting parallelism in a robot swarm to achieve an efficient schedule was
studied in recent seminal work by Demaine et al.~\cite{dfk+-cmprs-19}:
Under certain conditions, a labeled set of robots can be reconfigured with
bounded \emph{stretch}, i.e., there is a collision-free motion plan such that
the makespan of the schedule remains within a constant of the lower bound that
arises from the maximum distance between origin and destination of individual
agents; see also the video by Becker et al.~\cite{coordinated_video} that illustrates these results.

A second important aspect for many applications is \emph{connectivity}~of the
swarm throughout the reconfiguration, because
disconnected pieces may not be able to regain connectivity, and also
because of small-scale swarm robots (such as catoms in claytronics~\cite{goldstein2004claytronics}), which
need connectivity for local motion, electric power and communication; see
the video by Bourgeois~et~al.~\cite{spaceants2}. Connectivity
is not necessarily preserved in the schedules by Demaine~et~al.~\cite{dfk+-cmprs-19}.
In more recent work,
Fekete et al.~\cite{connected-motion-journal} presented an
approach that does achieve constant stretch for \emph{unlabeled} swarms of
robots for the class of \emph{scaled} arrangements; such arrangements
arise by increasing all dimensions of a given object by the same multiplicative
factor and have been considered in previous seminal work on self-assembly,
often with unbounded or logarithmic scale factors (along the lines of
what has been considered in self-assembly~\cite{soloveichik2007complexity}).
The method by Fekete et al.~\cite{connected-motion-journal}
relies strongly on the exchangeability of indistinguishable agents, which
allows a high flexibility in allocating agents to target positions.
However, the labeled setting cannot exploit this flexibility, making it significantly more complex.

These results have left two major open problems.
\begin{enumerate}
    \item Can efficient reconfiguration be achieved in a \emph{connected} manner for a swarm of \emph{labeled} robots in a not necessarily scaled arrangement?
    \item Is it possible to achieve constant stretch for connected reconfiguration of {scaled arrangements} of \emph{labeled} objects?
\end{enumerate}

\subsection{Our contributions}
We resolve both of these open problems.
\begin{enumerate}
    \item We show that connected reconfiguration of a swarm of $n$ labeled robots may require a stretch factor of at least~$\Omega(\sqrt{n})$.
    \item On the positive side, we present an approach to \emph{constant} stretch for connected reconfiguration of scaled arrangements of labeled objects.
    \item In addition, we prove \NP-completeness of deciding whether a makespan of~2 for labeled connected reconfiguration can be achieved.
\end{enumerate}

\subsection{Related work}
Research for multi-agent coordination dates back to the early
years of algorithmic research, such as the seminal
works by Reif~\cite{reif1979complexity} with his proof of
\PSPACE-completeness of optimal labeled reconfiguration, and Schwartz and Sharir~\cite{ss-pmpcbpb-83}
with their results on motion planning for multiple geometric objects.
Efficiently coordinating the motion of many agents 
arises in a large spectrum of applications, such as 
air traffic control~\cite{delahaye2014mathematical},
vehicular traffic networks~\cite{fhtwhfe-mift-11,ss-hbtn-04},
ground swarm robotics~\cite{rubenstein2014programmable,sw-sr-08}, or aerial 
swarm robotics~\cite{cpdsk-saesr-18,kumar}. 
In both discrete and geometric variants of the problem, the objects can be
\emph{labeled}, \emph{colored} or \emph{unlabeled}.  In the \emph{labeled}
case, the objects are all distinguishable and each object has its own, uniquely
defined target position.  In the \emph{colored} case, the objects are
partitioned into $k$ groups and each target position can only be covered by an
object with the right color; see Solovey and Halperin~\cite{sh-kcmrmp-14}.
In the \emph{unlabeled} case, objects are indistinguishable and target positions can be covered by any object;
see Kloder and Hutchinson~\cite{kh-ppimf-06}, Turpin~et~al.~\cite{tmk-tpams-13},
Adler et al.~\cite{adh+-emmpudsp-15}, and Solovey et al.~\cite{syz+-mpudog-15}.
On the negative side, Solovey and Halperin~\cite{sh-hummp-15} prove that the
unlabeled multiple-object motion planning problem is \PSPACE-hard. 
Calinescu, Dumitrescu, and Pach~\cite{CalinescuDP08} consider the sequential reconfiguration of objects lying on vertices of a graph. They give \NP-hardness and inapproximability results for several variants, a 3-approximation algorithm for the unlabeled variant, as well as upper and lower bounds on the number of sequential moves needed.
Geft and Halperin~\cite{GeftH22} show that distance-optimal multi-agent path finding remains \NP-hard on $2$-dimensional grids and multiple empty vertices.
They obtain their result by a linear reduction from \textsc{$3$Sat}, which allows them to exploit the Exponential Time Hypothesis~\cite{ImpagliazzoP01} and therefore to obtain an exponential lower bound on the running time of the problem.

We already mentioned the work by Demaine et al.~\cite{dfk+-cmprs-19} for achieving constant stretch for coordinated motion planning, as well as the recent practical 2021 Computational Geometry  Challenge~\cite{shadoks,challenge2021,gitastrophe,UNIST}.
None of these approaches satisfy the crucial connectivity constraint, which has previously been investigated in terms of decidability and feasibility by Dumitrescu and Pach~\cite{pushing-squares} and Dumitrescu, Suzuki, and Yamashita~\cite{DumitrescuSY04-metamorphic}.
Furthermore, these authors have also proposed efficient patterns for fast swarm locomotion in the plane using sequential moves that allow preservation of connectivity~\cite{DumitrescuSY04-fast-locomotion}.
A closely related body of research concerns itself with sequential pivoting moves that require additional space around moving agents, limiting feasibility and reachability of target states, see publications by Akitaya et al.~\cite{constant-musketeers,compacting-squares}.

Very recently, Fekete et al.~\cite{connected-motion-journal} presented
a number of new results for connected, but unlabeled reconfiguration.
In addion to complexity results for small makespan, they
showed that there is a constant~$c^*$ such that for any pair of start and
target configurations with a (generalized) \emph{scale} of at least~$c^*$, a schedule with
constant stretch can be computed in polynomial time. The involved concept of scale
has received considerable attention in self-assembly; achieving constant scale
has required special cases or operations. Soloveichik and Winfree~\cite{soloveichik2007complexity}
showed that the minimal number of distinct tile types necessary to self-assemble a shape, at
some scale, can be bounded both above and below in terms of the shape’s
Kolmogorov complexity, leading to unbounded scale in general. Demaine et al.~\cite{demaine2011self}
showed that allowing to destroy tiles can be exploited to achieve a scale
that is only bounded by a logarithmic factor, beating the linear bound without such operations.
In a setting of recursive, multi-level \emph{staged} assembly with a logarithmic
number of stages (i.e., ``hands'' for handling subassemblies), Demaine et al.~\cite{ddf-ssnas-08}
achieved logarithmic scale, and constant scale for more constrained classes of polyomino shapes;
this was later improved by Demaine et al.~\cite{dfs+-ngafc-17} to constant scale for a logarithmic
number of stages.
More recently, Luchsinger et al.~\cite{luchsinger2019self} employed repulsive forces between tiles
to achieve constant scale in two-handed self-assembly.

For an extensive overview of multi-agent path finding we refer to~\cite{SternSFK0WLA0KB19}.
Agarwal et al.~\cite{AgarwalGHT23} consider the motion planning problem for unit discs agents in polygonal
domains, with each agent having a \emph{revolving area} around their start and target positions.
They study \emph{weakly-monotone} motion plans, i.e., the agents are ordered and move iteratively with respect to this ordering from their start to their respective target.
To avoid collisions, all other agents are allowed to move only within their revolving area.
They show \APX-hardness for minimizing the total distance traveled, and complementary provide a constant-factor approximation.
Yu and LaValle~\cite{YuL12} discuss the relationship of multi-agent path planning and flow problems in \emph{collision-free unit-distance graphs}.
Charrier et al.~\cite{CharrierQSS19,CharrierQSS19A,CharrierQSS20} study reachability and coverage planning problems for connected agents.
Queffelec, Sankur, and Schwarzentruber~\cite{QueffelecSS23} study the connected multi-agent path finding problem in partially known environments in which the  graph is not known entirely in advance.
Despite the fact that all of this is related to our work, a crucial difference is that we consider the stretch factor as the main performance measure.
}

\subsection{Preliminaries}

We consider \emph{agents} at nodes of the integer infinite grid $G=(V,E)$, where two nodes are connected if and only if they are in unit distance, where distances are measured in the $L_1$~metric.
A~\emph{configuration} is a mapping $C: V\rightarrow \{1,\dots, n,\bot\}$, i.e., each node is mapped injectively to one of the $n$ labeled agents, or to $\bot$ if the node is empty.
The configuration~$C$ is \emph{connected} if the grid graph $H$ that is induced by occupied nodes in~$C$ is connected.
The \emph{silhouette} of a configuration~$C$ is the respective unlabeled configuration, i.e., $C$ without labeling.
Unless stated otherwise, we consider labeled connected configurations.

Two agents are \emph{adjacent} if their positions $v_1,v_2$ are adjacent, i.e.,
$\{v_1,v_2\} \in E(H)$.
An~agent can move in discrete time steps by changing its location from a grid position~$v$ to an adjacent grid position $w$, denoted by $v \rightarrow w$.
Two moves $v_1 \rightarrow w_1$ and $v_2 \rightarrow w_2$ are \emph{collision-free} if $v_1 \neq v_2$ and $w_1 \neq w_2$.
We assume that a \emph{swap} operation, i.e., two moves~${v_1\rightarrow v_2}$ and~${v_2\rightarrow v_1}$, causes a collision and is therefore not allowed in our model.
Note that an agent is allowed to hold its position.
A \emph{transformation} between two configurations $C_1$ and~$C_2$ is a set of collision-free moves $\{ v \rightarrow w \mid C_1(v) = C_2(w)\neq \bot\}$.
For $M \in \mathbb{N}$, a \emph{schedule} with a \emph{makespan} of~$M$ is a sequence $C_1 \rightarrow \cdots \rightarrow C_{M+1}$ (abbreviated as $C_1 \rightrightarrows C_{M+1}$) of transformations. 
A \emph{stable} schedule $C_1 \rightrightarrows_{\chi} C_{M+1}$ uses only connected configurations. 
In the context of this paper, we use these notations equivalently.

Let $C_{s}$ and $C_{t}$ be two connected
configurations with equally many agents called \emph{start} and \emph{target configuration}, respectively.
The \emph{diameter} $d$ of the pair $(C_s,C_t)$ is the maximal Manhattan distance between an agent's start and target position.
The \emph{stretch factor} (or simply \emph{stretch}) of a schedule $C_1 \rightrightarrows_{\chi} C_{M+1}$ is the ratio between its makespan~$M$ and the diameter~$d$ of $(C_s,C_t)$.

\smallskip
With these definitions we can state our problem, called the \textsc{Labeled Connected Coordinated Motion Planning Problem}, as follows. 
Given a pair $(C_s, C_t)$ of labeled connected configurations with equally many agents, and an integer $k$, we are asked to decide whether there is a stable schedule with a makespan of at most $k$ transforming $C_s$ into $C_t$.

\section{Fixed makespan}
\label{sec:fixed-makespan}

Given two labeled configurations, it is easy to see that it can be determined in linear time whether there is a schedule with a makespan of 1 that transforms one into the other:
For every robot, check whether its target position is in distance at most 1;
furthermore, check that no two robots want to swap their positions.
This involves $\mathcal{O}(1)$ checks for every robot, thus $\mathcal{O}(n)$ checks in total.
We obtain the following.

\begin{theorem}
	\label{thm:makespan1}
	It can be decided in $\mathcal{O}(n)$ time whether there is a schedule $C_s \rightrightarrows_\chi C_t$ with makespan $1$ for any pair $(C_s, C_t)$ of labeled connected configurations with $n$ robots.
\end{theorem}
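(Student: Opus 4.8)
The plan is to exploit the fact that a makespan-$1$ schedule $C_s \rightrightarrows_\chi C_t$, if one exists at all, is \emph{forced}: it must consist of the single transformation in which, for every label $i$, the robot with label $i$ moves from its start node $s_i := C_s^{-1}(i)$ to its target node $t_i := C_t^{-1}(i)$. There is no freedom in the assignment, so the decision problem collapses to verifying that this one candidate transformation is legal, which I would carry out in two linear passes over the $n$ robots.

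First I would test reachability: for each label $i$, look up $s_i$ and $t_i$ and check that $\|s_i - t_i\|_1 \le 1$; if this fails for some robot, the move $s_i \rightarrow t_i$ is not a single-step move and the answer is \textsc{No}. Second I would exclude swaps, which the bare collision-free condition does not forbid but which the model prohibits. To do this I store the two maps ``grid node $\mapsto$ label'' induced by $C_s$ and by $C_t$ in a dictionary keyed on the $\mathcal{O}(n)$ coordinates that occur (a one-time radix sort of these coordinates makes each lookup worst-case constant time); then, for every \emph{moving} robot $i$, i.e.\ $s_i \neq t_i$, I check whether node $t_i$ is occupied in $C_s$, say by label $j = C_s(t_i)$, and whether $C_t(s_i) = j$. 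If so, robots $i$ and $j$ attempt to swap and the answer is \textsc{No}. If both passes succeed, the answer is \textsc{Yes}.

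For correctness I would argue both implications. If both tests pass, the set $T = \{\, s_i \rightarrow t_i \mid 1 \le i \le n \,\}$ consists of valid unit (or stationary) moves; it is collision-free because $C_s$ and $C_t$ are injective on occupied nodes, so no two of these moves share a source or a target, and the swap test rules out the only remaining forbidden interaction; and the schedule $C_s \rightarrow C_t$ visits only the configurations $C_s$ and $C_t$, both connected by hypothesis, hence it is stable. Conversely, every stable makespan-$1$ schedule transforming $C_s$ into $C_t$ \emph{is} the transformation $T$, so it exists only if $T$ is legal, which is exactly what the two tests certify. Each pass touches every robot a constant number of times and performs only dictionary operations, for an overall running time of $\mathcal{O}(n)$.

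There is no genuine obstacle here; the only subtleties worth stating explicitly are that connectivity imposes no extra work because a makespan-$1$ schedule has no intermediate configurations and its two endpoints are connected by assumption, and that a position swap must be tested separately since it is collision-free in the literal sense but disallowed in the model.
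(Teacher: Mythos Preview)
Your proposal is correct and follows essentially the same approach as the paper: observe that a makespan-$1$ schedule is uniquely determined, then verify in $\mathcal{O}(n)$ that every robot's target is within distance~$1$ and that no pair of robots swaps. You are slightly more explicit than the paper about the data structures needed for constant-time lookups and about why no intermediate connectivity check is required, but the argument is the same.
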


\begin{proof}
	Without loss of generality, we assume that the matching between positions from~$C_s$ and~$C_t$ is perfect.
	Otherwise, a position is occupied multiple times so no feasible reconfiguration is possible.
	We can confirm that $C_s$ and $C_t$ are connected by comparing $n$ to the size of an arbitrary connected component of each.
	The latter can be computed using a graph exploration algorithm such as breadth-first-search in $\mathcal{O}(n)$ time, as the respective dual graph is sparse.

	To check whether there is a schedule with a makespan of $1$, we check for every position~${p\in C_s}$ whether its matched position $p'\in C_t$ is in distance at most $1$; this can be done in~$\mathcal{O}(1)$ time for each position.
	Swaps are forbidden, so we also check if any pair of robots need to swap their positions.
	Because we want to decide whether a schedule with a makespan of~$1$ exists, this only affects pairs of robots that share their neighborhoods.
	Thus, this can be done in~$\mathcal{O}(1)$ time for each position.
	Because a configuration has $n$ vertices, this takes a total of $\mathcal{O}(n)$ time.
\end{proof}

As a direct consequence of~\Cref{thm:makespan1}, we observe the following.

\begin{corollary}\label{con:np}
	Solutions for the \textsc{Labeled Connected Coordinated Motion Planning Problem} can be verified in polynomial time, thus, the problem is in~\NP.
\end{corollary}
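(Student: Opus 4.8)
The plan is to place the problem in \NP\ by exhibiting, for every yes-instance, a polynomial-size certificate together with a polynomial-time verifier, and the natural certificate is simply a witnessing stable schedule written out as the explicit sequence of configurations $C_1 \rightarrow C_2 \rightarrow \cdots \rightarrow C_{M+1}$ with $C_1 = C_s$, $C_{M+1} = C_t$, and $M \le k$. First I would check that this certificate has polynomial size: it consists of at most $k+1$ configurations, each given by the $n$ grid positions of the labeled agents, and since every agent moves by at most one unit per transformation, all coordinates occurring along the schedule differ from those of $C_s$ and $C_t$ by at most $k$. Thus each configuration is described by $\mathcal{O}(n)$ integers of magnitude polynomial in the input, so the whole certificate is polynomial in $n$ and $k$, and hence in the input size.

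The verification then reduces to $\mathcal{O}(k)$ independent makespan-$1$ tests, which is exactly where \Cref{thm:makespan1} is used: a schedule of makespan $M$ is nothing but a concatenation of $M$ single transformations, so I would (i) check $C_1 = C_s$, $C_{M+1} = C_t$, and $M \le k$, and (ii) for each $i \in \{1,\dots,M\}$ run the linear-time procedure of \Cref{thm:makespan1} on the pair $(C_i, C_{i+1})$. A single such call already confirms that the label mapping between the two configurations is a bijection, that every agent moves to an adjacent node or stays put, that no two agents swap, that the moves are collision-free (which follows from injectivity of $C_i$ and $C_{i+1}$), and that both configurations are connected; running it over all consecutive pairs therefore certifies that the sequence is a valid stable schedule, and in particular that \emph{every} intermediate configuration is connected, since each $C_i$ occurs once as a source and once as a target. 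The total running time is $\mathcal{O}(Mn) = \mathcal{O}(kn)$.

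Since yes-instances admit such a certificate and any sequence accepted by the verifier is by construction a stable schedule of makespan at most $k$, the claim follows. There is no real obstacle here — this is essentially immediate from \Cref{thm:makespan1} — but two small points deserve care: arguing that the explicit schedule is genuinely of polynomial size (which hinges on unit-speed motion keeping all coordinates within an additive $k$ of the input coordinates), and confirming that the single-transformation test of \Cref{thm:makespan1} really does capture all constraints of the model, namely adjacency of consecutive positions, collision-freeness, the prohibition of swaps, and connectivity of both endpoints.
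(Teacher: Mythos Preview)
Your proposal is correct and follows precisely the route the paper intends: the paper states the corollary as a direct consequence of \Cref{thm:makespan1} without spelling out details, and your argument fills these in exactly as expected, using the explicit schedule as certificate and applying the linear-time makespan-$1$ check to each consecutive pair. Your extra care about certificate size and about what the single-step test actually verifies is appropriate and does not deviate from the paper's approach.
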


On the other hand, Fekete et al.~\cite{connected-motion-journal} showed that it is already \NP-complete to decide whether a schedule with a makespan of 2 can be achieved, if the robot swarm is unlabeled.
\begin{figure}[htb]
	\centering
	\def\svgscale{0.9}
\begingroup%
  \makeatletter%
  \providecommand\color[2][]{%
    \errmessage{(Inkscape) Color is used for the text in Inkscape, but the package 'color.sty' is not loaded}%
    \renewcommand\color[2][]{}%
  }%
  \providecommand\transparent[1]{%
    \errmessage{(Inkscape) Transparency is used (non-zero) for the text in Inkscape, but the package 'transparent.sty' is not loaded}%
    \renewcommand\transparent[1]{}%
  }%
  \providecommand\rotatebox[2]{#2}%
  \newcommand*\fsize{\dimexpr\f@size pt\relax}%
  \newcommand*\lineheight[1]{\fontsize{\fsize}{#1\fsize}\selectfont}%
  \ifx\svgwidth\undefined%
    \setlength{\unitlength}{408.75bp}%
    \ifx\svgscale\undefined%
      \relax%
    \else%
      \setlength{\unitlength}{\unitlength * \real{\svgscale}}%
    \fi%
  \else%
    \setlength{\unitlength}{\svgwidth}%
  \fi%
  \global\let\svgwidth\undefined%
  \global\let\svgscale\undefined%
  \makeatother%
  \begin{picture}(1,1.02752294)%
    \lineheight{1}%
    \setlength\tabcolsep{0pt}%
    \put(0,0){\includegraphics[width=\unitlength,page=1]{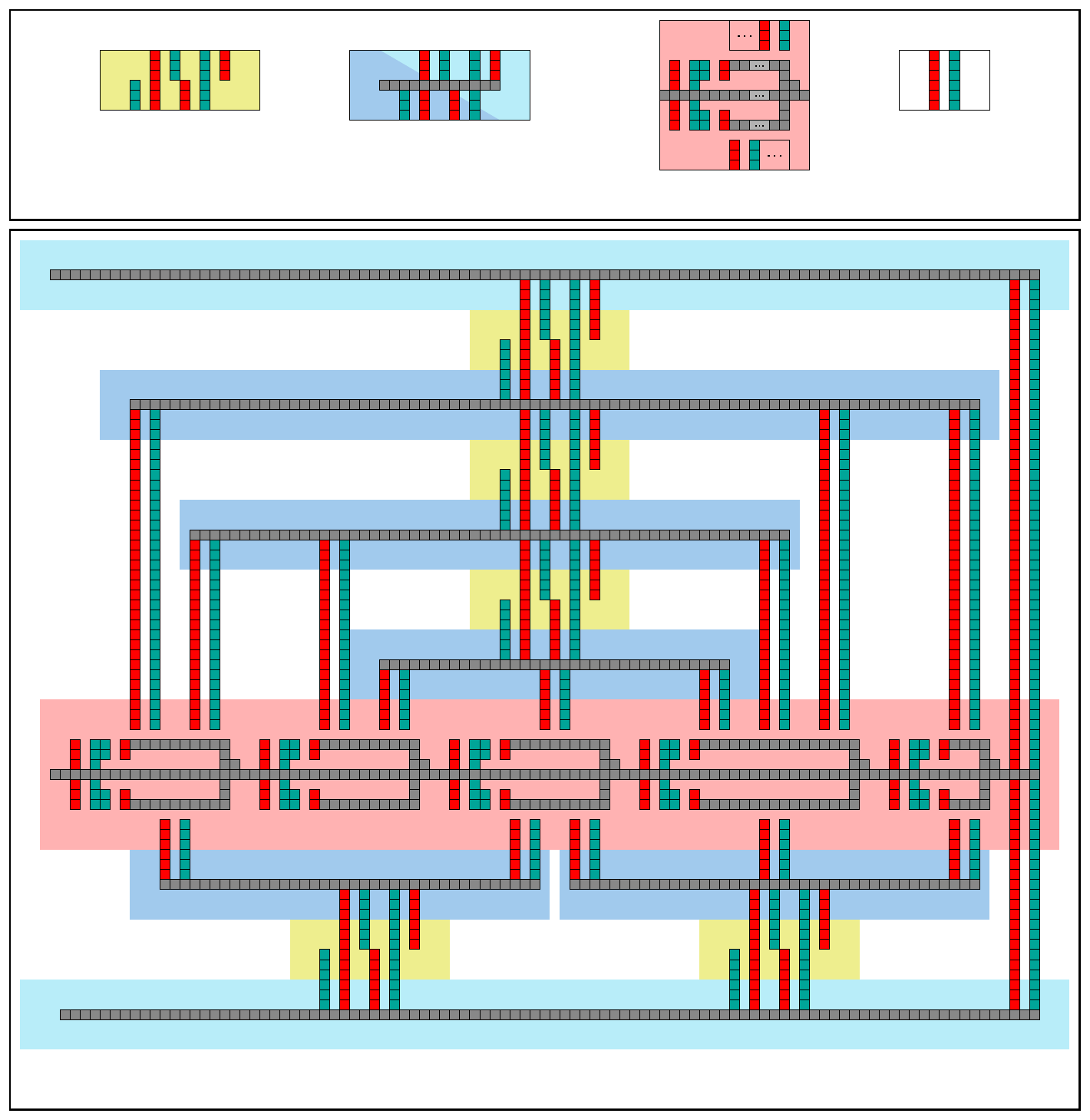}}%
    \put(0.16513762,0.89038128){\color[rgb]{0,0,0}\makebox(0,0)[t]{\lineheight{1.25}\smash{\begin{tabular}[t]{c}separation\\gadget\end{tabular}}}}%
    \put(0.40366974,0.88120696){\color[rgb]{0,0,0}\makebox(0,0)[t]{\lineheight{1.25}\smash{\begin{tabular}[t]{c}clause/auxiliary\\gadget\end{tabular}}}}%
    \put(0.67419261,0.83533541){\color[rgb]{0,0,0}\makebox(0,0)[t]{\lineheight{1.25}\smash{\begin{tabular}[t]{c}variable gadget\end{tabular}}}}%
    \put(0.86726258,0.88990826){\color[rgb]{0,0,0}\makebox(0,0)[t]{\lineheight{1.25}\smash{\begin{tabular}[t]{c}bridges\end{tabular}}}}%
    \put(0.50029017,0.0275229){\color[rgb]{0,0,0}\makebox(0,0)[t]{\lineheight{1.25}\smash{\begin{tabular}[t]{c}The resulting set of configurations constructed for the instance\end{tabular}}}}%
  \end{picture}%
\endgroup%

	\caption{Overview of the structure used for the reduction.}
	\label{fig:overview_hardness}
\end{figure}
Their proof is based on a reduction from the \NP-complete problem \mbox{\textsc{Planar Monotone 3Sat}}, which asks to the decide the satisfiability of a Boolean 3-CNF formula for which the literals in each clause are either all unnegated or all negated, and the corresponding variable-clause incidence graph is planar~\cite{dbk-obspp-10}.
Furthermore, this graph has a planar embedding such that the variables are embedded on a line, and all unnegated clauses are above, while the negated ones are below that variable line.

A brief overview of the used gadgets, alongside a complete construction, is depicted in~\Cref{fig:overview_hardness}.
The high-level idea of their reduction is as follows:
All gadgets are, in both the start and the target configuration, connected via the separation gadgets (and the auxiliary gadget).
Because the robots within each separation gadgets have to move to reach their target position, they cannot maintain connectivity of the whole arrangement on their own, see~\Cref{fig:hardness-separation}.
\begin{figure}[htb]
	\centering
	\def\svgscale{1}
\begingroup%
  \makeatletter%
  \providecommand\color[2][]{%
    \errmessage{(Inkscape) Color is used for the text in Inkscape, but the package 'color.sty' is not loaded}%
    \renewcommand\color[2][]{}%
  }%
  \providecommand\transparent[1]{%
    \errmessage{(Inkscape) Transparency is used (non-zero) for the text in Inkscape, but the package 'transparent.sty' is not loaded}%
    \renewcommand\transparent[1]{}%
  }%
  \providecommand\rotatebox[2]{#2}%
  \newcommand*\fsize{\dimexpr\f@size pt\relax}%
  \newcommand*\lineheight[1]{\fontsize{\fsize}{#1\fsize}\selectfont}%
  \ifx\svgwidth\undefined%
    \setlength{\unitlength}{266.53346745bp}%
    \ifx\svgscale\undefined%
      \relax%
    \else%
      \setlength{\unitlength}{\unitlength * \real{\svgscale}}%
    \fi%
  \else%
    \setlength{\unitlength}{\svgwidth}%
  \fi%
  \global\let\svgwidth\undefined%
  \global\let\svgscale\undefined%
  \makeatother%
  \begin{picture}(1,0.14175882)%
    \lineheight{1}%
    \setlength\tabcolsep{0pt}%
    \put(0,0){\includegraphics[width=\unitlength,page=1]{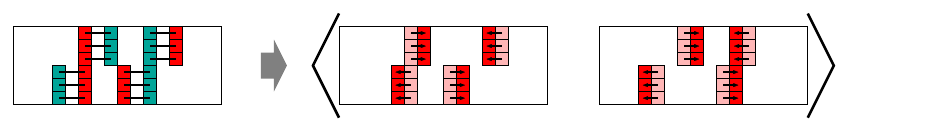}}%
    \put(0.61959102,0.02867092){\color[rgb]{0,0,0}\makebox(0,0)[t]{\lineheight{1.25}\smash{\begin{tabular}[t]{c},\end{tabular}}}}%
  \end{picture}%
\endgroup%

	\caption{The separation gadget.}
	\label{fig:hardness-separation}
\end{figure}
Therefore, robots located in the variable gadgets will have to perform this task in the single intermediate configuration.
These gadgets are designed so that they can expand to the top or bottom, but not both at the same time, as illustrated in~\Cref{fig:hardness-variable}.
\begin{figure}[htb]
	\centering
	\def\svgscale{1}
\begingroup%
  \makeatletter%
  \providecommand\color[2][]{%
    \errmessage{(Inkscape) Color is used for the text in Inkscape, but the package 'color.sty' is not loaded}%
    \renewcommand\color[2][]{}%
  }%
  \providecommand\transparent[1]{%
    \errmessage{(Inkscape) Transparency is used (non-zero) for the text in Inkscape, but the package 'transparent.sty' is not loaded}%
    \renewcommand\transparent[1]{}%
  }%
  \providecommand\rotatebox[2]{#2}%
  \newcommand*\fsize{\dimexpr\f@size pt\relax}%
  \newcommand*\lineheight[1]{\fontsize{\fsize}{#1\fsize}\selectfont}%
  \ifx\svgwidth\undefined%
    \setlength{\unitlength}{266.53344582bp}%
    \ifx\svgscale\undefined%
      \relax%
    \else%
      \setlength{\unitlength}{\unitlength * \real{\svgscale}}%
    \fi%
  \else%
    \setlength{\unitlength}{\svgwidth}%
  \fi%
  \global\let\svgwidth\undefined%
  \global\let\svgscale\undefined%
  \makeatother%
  \begin{picture}(1,0.50756655)%
    \lineheight{1}%
    \setlength\tabcolsep{0pt}%
    \put(0,0){\includegraphics[width=\unitlength,page=1]{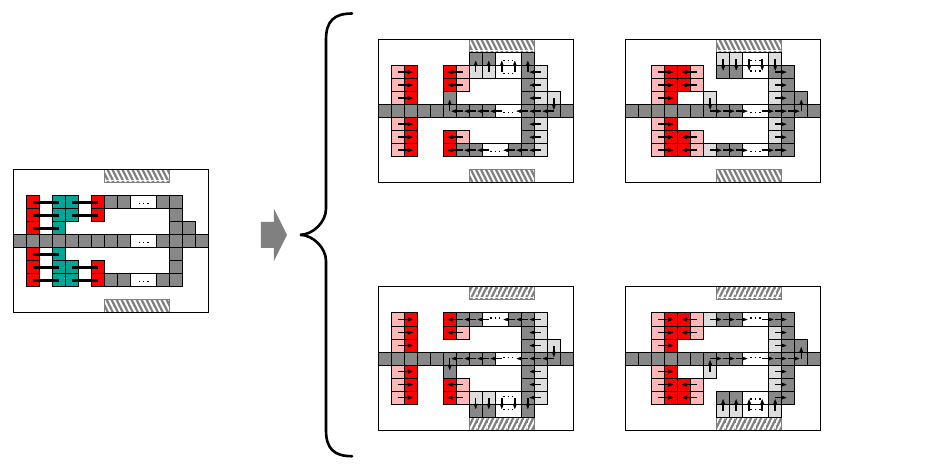}}%
    \put(0.64772999,0.31006114){\color[rgb]{0,0,0}\makebox(0,0)[t]{\lineheight{1.25}\smash{\begin{tabular}[t]{c},\end{tabular}}}}%
    \put(0.64772999,0.04274008){\color[rgb]{0,0,0}\makebox(0,0)[t]{\lineheight{1.25}\smash{\begin{tabular}[t]{c},\end{tabular}}}}%
    \put(0,0){\includegraphics[width=\unitlength,page=2]{variable_gadget_schedule.svg.pdf}}%
    \put(0.95725964,0.38040881){\color[rgb]{0,0,0}\makebox(0,0)[t]{\lineheight{1.25}\smash{\begin{tabular}[t]{c}(1)\end{tabular}}}}%
    \put(0.95725964,0.11308775){\color[rgb]{0,0,0}\makebox(0,0)[t]{\lineheight{1.25}\smash{\begin{tabular}[t]{c}(0)\end{tabular}}}}%
  \end{picture}%
\endgroup%

	\caption{A variable gadget and schedules corresponding to assignments of $0$ and $1$.}
	\label{fig:hardness-variable}
\end{figure}
As the unnegated clauses are at the top, and the negated ones at the bottom, the movement of the variable robots corresponds to an assignment of $1$ or $0$ to the respective variable.

To show \NP-hardness of the \textsc{Labeled Connected Motion Planning Problem}, we will provide a suitable labeling of the configurations and argue that the same construction works for the variant of labeled robot swarms.

\begin{theorem}\label{thm:hardness}
	It is \NP-complete to decide whether there is a schedule $C_s \rightrightarrows_\chi C_t$ with makespan~$2$ for any pair $(C_s, C_t)$ of labeled connected configurations, with $n$ robots.
\end{theorem}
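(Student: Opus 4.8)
Membership in \NP\ is already provided by \Cref{con:np}, so it remains to establish \NP-hardness. The plan is to reuse the reduction of Fekete et al.~\cite{connected-motion-journal} from \textsc{Planar Monotone 3Sat} essentially verbatim at the level of silhouettes: we keep exactly the same start and target arrangements, the same separation, auxiliary, variable, and clause gadgets, and the same global layout sketched in \Cref{fig:overview_hardness}. The only new ingredient is a labeling of $C_s$ and $C_t$, and the whole argument then comes down to showing that this labeling neither relaxes nor tightens the set of achievable makespan-$2$ schedules beyond what correctness requires.

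First I would fix the labeling according to a single guiding principle: make it as \emph{rigid} as possible, so that the labeled instance is forced to imitate the unlabeled one. Concretely, every robot that is stationary in every makespan-$2$ schedule of the unlabeled construction --- which is the vast majority, including the interiors of all gadgets --- receives the identity labeling, i.e.\ the same label at its cell in $C_s$ and in $C_t$. The robots that do move lie in the separation gadgets, the auxiliary gadget, and the variable gadgets, and in the unlabeled construction their motion is essentially canonical: each separation gadget and the auxiliary gadget admit a single intermediate configuration (\Cref{fig:hardness-separation}), while each variable gadget has exactly the two options of expanding upward or downward (\Cref{fig:hardness-variable}). For each such gadget I would choose labels so that the induced matching between its occupied cells in $C_s$ and in $C_t$ is realizable along \emph{every} one of these local silhouette schedules --- a single schedule to check for a separation or auxiliary gadget, and both the ``up'' and the ``down'' schedule simultaneously for a variable gadget. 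Since distinct gadgets interact only through stationary separation robots, a labeling that works locally works globally.

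With the labeling in place, correctness follows in two steps. For soundness, any stable labeled schedule $C_s \rightrightarrows_\chi C_t$ of makespan $2$ becomes, upon forgetting labels, a stable schedule of makespan $2$ between the underlying silhouettes; by correctness of the original reduction this forces the formula to be satisfiable. For completeness, if the formula is satisfiable I would take the makespan-$2$ unlabeled schedule of Fekete et al.\ corresponding to a satisfying assignment; inside each moving gadget it uses one of the finitely many local silhouette schedules accounted for above, so by construction of the labeling it lifts to a stable labeled schedule realizing the prescribed matching. Stability and the makespan bound are inherited from the silhouette schedule, and the swap prohibition is automatically respected because the silhouette moves already avoid swaps. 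The construction is clearly polynomial, so \NP-hardness follows.

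The step I expect to be the main obstacle is the per-gadget compatibility check, in particular for the variable gadget: one must exhibit a single assignment of labels to its start and target cells such that \emph{both} the upward and the downward silhouette schedule can be carried out as collision-free, connectivity-preserving labeled moves of length at most $2$ per step. This requires a careful inspection of the variable-gadget figure --- tracking, for each robot, its unique target cell and a valid choice of intermediate cell in each of the two modes --- together with the observation that a robot which moves in one mode but is stationary in the other must be given a target reachable in both, which typically forces it onto the common ``spine'' of the gadget. A secondary, routine point is to confirm that labeling can only remove schedules, never create them, so soundness cannot be compromised; this is immediate, since every labeled schedule projects onto an unlabeled one.
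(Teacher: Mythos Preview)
Your proposal is correct and follows essentially the same approach as the paper: reuse the unlabeled gadget construction of Fekete et al.\ verbatim and equip it with a labeling, then argue both directions of the equivalence. The one simplification you miss is that the paper observes the labeling is \emph{forced}: for every start-only cell the makespan-$2$ radius contains a unique target-only cell (with a single two-way ambiguity in the separation gadget that resolves itself), so your anticipated ``main obstacle'' of hand-crafting a variable-gadget labeling compatible with both the up and down schedules dissolves---there is only one labeling to consider, and one simply checks that both schedules realize it.
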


\begin{proof}
We are given an instance $\Pi = (C_s, C_t)$ of the \textsc{Unlabeled Connected Motion Planning Problem}~that arises via the polynomial-time reduction from an instance of \textsc{Planar Monotone 3Sat}.
We construct an instance $\Pi_\ell = (C_s^\ell, C_t^\ell)$ of \textsc{Labeled Connected Motion Planning} by labeling the positions of the start and the target configuration, such that a schedule with makespan 2 for $\Pi_\ell$ is also a schedule for $\Pi$.
	
We create the labeling as follows, using three different colors to indicate occupied positions in the start configuration (red), in the target configuration (dark cyan), and in both configurations~(gray). 
First, the labeling for every gray position is identical in both configurations, $C_s^\ell$ and $C_t^\ell$. 
A makespan of $2$ confines each robot's movement to an area of small radius.
Projecting this radius onto any red position in the clause/auxiliary gadgets, the bridges and the variable gadgets of $\Pi$, leaves only a single dark cyan position in range.
These red-cyan pairs get the same labeling in $C_s^\ell$ and $C_t^\ell$, respectively. 
In the separation gadget there is a red position with two possible dark cyan positions in range, see~\Cref{fig:separation-match}. 
\begin{figure}[t]
	\centering
	\begin{subfigure}{0.5\textwidth}%
		\centering
		\def\svgscale{1.8}
		\import{./figures/}{separation_gadget_mismatch_a.svg.pdf_tex}
		\caption{}
		\label{fig:separation_gadget_mismatch-a}
	\end{subfigure}%
	\begin{subfigure}{0.5\textwidth}
		\centering
		\def\svgscale{1.8}
		\import{./figures/}{separation_gadget_mismatch_b.svg.pdf_tex}
		\caption{}
		\label{fig:separation_gadget_mismatch-b}
	\end{subfigure}%
	\caption{For the separation gadget, there is only one feasible matching. The dotted lines indicate the range of motion for the central red robot.}
	\label{fig:separation-match}
\end{figure}
However, only one of these admits a feasible matching of all positions, i.e., the labeling is also unique in these separation gadgets.
Due to the unique labeling within each gadget, it is straightforward to observe that there is a schedule with a makespan of $2$ (implying that there are two different schedules for the variable gadget due to the two possible assignments) realizing the reconfiguration given by the labeling. 
Because the labeling is unique, the schedule for $\Pi_\ell$ yields a schedule for~$\Pi$.
Hence, we conclude that the labeled variant is as least as hard as the unlabeled one.  
	
Together with~\Cref{con:np}, this completes the proof.
\end{proof}

As a consequence, even approximating the makespan is \NP-hard: If no schedule of makespan \num{2} is found,
then the makespan is at least \num{3}.

\begin{corollary}\label{cor:connected-motion-planning-optmial-hard}
	It is \NP-hard to compute for a pair of labeled connected configurations $C_s$ and~$C_t$ with $n$ robots, a stable schedule that transforms $C_s$ into $C_t$ within a constant of $(\nicefrac{3}{2}-\varepsilon)$ (for any $\varepsilon>0$) of the minimum makespan.
\end{corollary}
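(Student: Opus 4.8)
The plan is a straightforward gap-preserving reduction that reuses the very instances constructed in the proof of \Cref{thm:hardness}. Recall that this reduction maps a \textsc{Planar Monotone 3Sat} formula~$\varphi$ to a pair $(C_s^\ell, C_t^\ell)$ of labeled connected configurations such that $\varphi$ is satisfiable if and only if there is a stable schedule $C_s^\ell \rightrightarrows_\chi C_t^\ell$ of makespan at most~$2$. Write $\mathrm{opt}(\varphi)$ for the minimum makespan of a stable schedule transforming $C_s^\ell$ into $C_t^\ell$; this is a well-defined positive integer, since $C_s^\ell \neq C_t^\ell$ and any two connected configurations with equally many robots admit some stable schedule between them. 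The first step is to record a clean integer gap: if $\varphi$ is satisfiable then $\mathrm{opt}(\varphi) \leq 2$ by \Cref{thm:hardness}, while if $\varphi$ is unsatisfiable then no stable schedule of makespan at most~$2$ exists (a schedule of makespan $0$ or $1$ could be extended to one of makespan $2$ by letting all robots hold their positions in the extra steps), so $\mathrm{opt}(\varphi) \geq 3$. Hence satisfiability of $\varphi$ is equivalent to $\mathrm{opt}(\varphi) \leq 2$, and the yes- and no-cases are separated by the gap between $2$ and $3$.

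The second step converts this gap into the inapproximability statement. Suppose for contradiction that, for some fixed $\varepsilon > 0$, there is a polynomial-time algorithm $A$ that on every pair of labeled connected configurations outputs a stable schedule whose makespan is at most $(\nicefrac{3}{2}-\varepsilon)$ times the minimum makespan. Given $\varphi$, construct $(C_s^\ell, C_t^\ell)$ in polynomial time and run $A$ on it. If $\varphi$ is satisfiable, then $\mathrm{opt}(\varphi) \leq 2$, so $A$ returns a schedule of makespan at most $(\nicefrac{3}{2}-\varepsilon)\cdot 2 = 3 - 2\varepsilon < 3$, hence of makespan at most $2$ by integrality. If $\varphi$ is unsatisfiable, then $\mathrm{opt}(\varphi) \geq 3$, so every stable schedule --- in particular the one output by $A$ --- has makespan at least $3$. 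Therefore $\varphi$ is satisfiable if and only if $A$ returns a schedule of makespan at most $2$, a condition we can check in polynomial time. This would decide \textsc{Planar Monotone 3Sat} in polynomial time and force $\P = \NP$; so, unless $\P = \NP$, no such $A$ exists, which is precisely the asserted \NP-hardness.

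Beyond invoking \Cref{thm:hardness}, the argument is routine, and the only delicate point is the boundary arithmetic. The constant $\nicefrac{3}{2}$ is dictated by the worst case $\mathrm{opt}(\varphi) = 2$: one needs $(\nicefrac{3}{2}-\varepsilon)\cdot 2 < 3$ so that a good enough approximate schedule is forced down to makespan~$2$ on satisfiable instances, while integrality of the makespan keeps the unsatisfiable instances at makespan $\geq 3$; these two facts are what the whole proof hinges on. No new construction is needed --- the instances of \Cref{thm:hardness} already realize the separation between $\mathrm{opt}(\varphi) \leq 2$ and $\mathrm{opt}(\varphi) \geq 3$, so they serve directly as the hard instances for the optimization problem.
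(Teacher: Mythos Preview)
Your proposal is correct and follows exactly the approach the paper intends: the paper justifies the corollary in a single sentence (``If no schedule of makespan~\num{2} is found, then the makespan is at least~\num{3}''), and your write-up is a careful expansion of precisely this gap argument, using the instances from \Cref{thm:hardness} and the integrality of the makespan to turn a hypothetical $(\nicefrac{3}{2}-\varepsilon)$-approximation into a decider for \textsc{Planar Monotone 3Sat}.
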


\section{Lower bound on stretch factor}
\label{sec:lower-bound-on-stretch-factor}

We show a lower bound of $\Omega(\sqrt{n})$ on the stretch factor.
For this, we consider the pair of labeled~connected configurations $(C_s,C_t)$ shown in~\Cref{fig:lower-bound-a}, both consisting of $n$ robots.
The difference between both configurations is that adjacent robots need to swap their positions.
Thus, the diameter of $(C_s,C_t)$ is $d = 1$.
Because swaps are not allowed within the underlying model, some robots have to move orthogonally.
It is easy to see that a single transformation allows for at most two robots to move orthogonally without disconnecting the configuration.
A crucial insight is that each of these robots can realize at most one swap~in~parallel.
\begin{figure}[htb]
	\centering
	\begin{subfigure}[b]{0.6\linewidth}
		\centering
		\def\svgscale{1.8}
\begingroup%
  \makeatletter%
  \providecommand\color[2][]{%
    \errmessage{(Inkscape) Color is used for the text in Inkscape, but the package 'color.sty' is not loaded}%
    \renewcommand\color[2][]{}%
  }%
  \providecommand\transparent[1]{%
    \errmessage{(Inkscape) Transparency is used (non-zero) for the text in Inkscape, but the package 'transparent.sty' is not loaded}%
    \renewcommand\transparent[1]{}%
  }%
  \providecommand\rotatebox[2]{#2}%
  \newcommand*\fsize{\dimexpr\f@size pt\relax}%
  \newcommand*\lineheight[1]{\fontsize{\fsize}{#1\fsize}\selectfont}%
  \ifx\svgwidth\undefined%
    \setlength{\unitlength}{97.5bp}%
    \ifx\svgscale\undefined%
      \relax%
    \else%
      \setlength{\unitlength}{\unitlength * \real{\svgscale}}%
    \fi%
  \else%
    \setlength{\unitlength}{\svgwidth}%
  \fi%
  \global\let\svgwidth\undefined%
  \global\let\svgscale\undefined%
  \makeatother%
  \begin{picture}(1,0.2)%
    \lineheight{1}%
    \setlength\tabcolsep{0pt}%
    \put(0.00145373,0.02931562){\color[rgb]{0,0,0}\makebox(0,0)[t]{\lineheight{1.25}\smash{\begin{tabular}[t]{c}$C_t$\end{tabular}}}}%
    \put(0,0){\includegraphics[width=\unitlength,page=1]{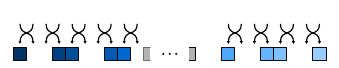}}%
    \put(0.00145373,0.1447002){\color[rgb]{0,0,0}\makebox(0,0)[t]{\lineheight{1.25}\smash{\begin{tabular}[t]{c}$C_s$\end{tabular}}}}%
    \put(0,0){\includegraphics[width=\unitlength,page=2]{counterexample_description_a.svg.pdf}}%
  \end{picture}%
\endgroup%

		\caption{}
		\label{fig:lower-bound-a}
	\end{subfigure}\hfil
	\begin{subfigure}[b]{0.4\linewidth}
		\centering
		\def\svgscale{1.8}
		\import{./figures/}{counterexample_description_b.svg.pdf_tex}
		\caption{}
		\label{fig:lower-bound-b}
	\end{subfigure}\hfil
	\caption{Pairs of robots must swap their positions (a), which can only be realized by schedules using moves that involve all robots (b).}
	\label{fig:lower-bound}
\end{figure}

\begin{theorem}
	There are pairs of labeled connected configurations $C_s$ and $C_t$ with $n$ robots, so that every schedule $C_s \rightrightarrows_\chi C_t$ has a stretch factor of at least $\Omega(\sqrt{n})$.
	\label{the:sqrt-n-stretch}
\end{theorem}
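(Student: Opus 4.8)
The plan is to reason directly about the explicit family $(C_s,C_t)$ of \Cref{fig:lower-bound-a}. Both members are connected, have $n$ robots, and differ only in that a prescribed set of adjacent robot pairs must exchange positions, so the diameter is $d=1$ and the stretch of any stable schedule $C_s\rightrightarrows_\chi C_t$ equals its makespan $M$; it therefore suffices to prove $M=\Omega(\sqrt n)$. First I would pin down the instance: check that the induced grid graphs of $C_s$ and $C_t$ are connected and that the matching between occupied cells is perfect, and record the structural features that drive the bound -- the occupied cells form a thin, essentially path-like ``backbone'' of $\Theta(n)$ robots, and the $\Theta(\sqrt n)$ pairs that must exchange positions are placed along it so that creating enough room for even one of them to pass forces a long, coordinated deformation. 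This last property is exactly why, as indicated in \Cref{fig:lower-bound-b}, every efficient schedule has to move essentially all robots.

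The core of the argument is to turn the two observations stated just above the theorem into a per-transformation cap. Because the model forbids the swap $v_1\to v_2,\ v_2\to v_1$, two adjacent robots that must exchange positions cannot do so in place: at least one must leave the line it shares with the other (an \emph{orthogonal} move) and return later. I would prove (i) that connectivity permits only a constant number of robots to be displaced orthogonally off the backbone in any single transformation -- intuitively, the occupied cells always form a near-path, pulling a robot off deletes an interior vertex, and deleting more than a constant number of interior vertices of a path creates more pieces than the displaced robots can re-bridge -- and (ii) that each such orthogonally displaced robot completes at most one of the required exchanges within that transformation. Together these give that each transformation completes $O(1)$ of the forced exchanges.

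Combining the cap with a counting step then finishes the proof: the construction forces $\Theta(\sqrt n)$ exchanges, each transformation completes $O(1)$ of them, hence $M=\Omega(\sqrt n)$ and the stretch is $\Omega(\sqrt n)$. I would close by noting that this is tight for the instance -- a schedule that coordinately ``curls'' the backbone to manufacture working room and then resolves the exchanges in a constant number of waves attains makespan $O(\sqrt n)$ -- which both accounts for the exponent in the statement and confirms that efficient reconfiguration here genuinely requires global motion.

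The step I expect to be the real obstacle is claim (i) (and secondarily (ii)). A stable schedule may deform the backbone in complicated ways, so ``near-path'', ``interior vertex'' and ``orthogonal move'' are not literally invariant and must be replaced by a robust potential -- e.g.\ the number of robots lying off the original supporting line, a perimeter-type quantity, or the number of ``vertical'' edges of the induced grid graph -- that is $O(1)$-Lipschitz along connected transformations, that a forced exchange is obliged to increase in order to create the space it needs, and that behaves correctly at the true endpoints of the backbone and under degenerate moves (diagonal two-step detours, robots oscillating in and out, robots shared between two nearby exchanges). Choosing the right potential and nailing the constants is where the work lies; the counting and the structural properties of the construction are routine once the instance is fixed.
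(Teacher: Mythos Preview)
Your plan misreads both the instance and the accounting. The construction in \Cref{fig:lower-bound-a} is a line of $n$ robots in which \emph{every} adjacent pair swaps, so there are $\Theta(n)$ forced exchanges, not $\Theta(\sqrt n)$. More importantly, the implication ``(i)$+$(ii) $\Rightarrow$ each transformation completes $O(1)$ exchanges'' is false: claim~(i), even if you prove it, only caps how many \emph{new} robots leave the backbone in a given step; it says nothing about robots that were freed earlier, and those are precisely the ones doing the work. After $t$ steps there can be $\Theta(t)$ robots off the line, and at that moment $\Theta(t)$ exchanges can complete simultaneously. So the per-step throughput is not constant, and none of the potentials you list (off-line count, vertical edges, perimeter) can make it constant---they bound a growth rate, not an instantaneous rate. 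With your stated premises ($\Theta(\sqrt n)$ swaps, $O(1)$ per step) the arithmetic happens to land on $\sqrt n$, but both premises are wrong.

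The paper's argument is exactly the tradeoff your outline is missing. Freeing $2\lambda$ helper robots from the line costs $\lambda$ transformations (this is where your claim~(i) actually belongs), and with $2\lambda$ helpers the remaining $\nicefrac{n}{2}$ swaps still need $\Omega(n/\lambda)$ steps; minimizing $\lambda + n/\lambda$ over $\lambda$ gives $\Omega(\sqrt n)$. If you prefer to phrase it as a potential argument, the honest version is: at time $t$ at most $O(t)$ robots are free (your~(i)), each free robot resolves $O(1)$ exchanges per step (your~(ii)), hence at most $O(t)$ exchanges complete at step $t$, and summing gives $O(M^2)$ completed exchanges after $M$ steps; with $\Theta(n)$ swaps this forces $M=\Omega(\sqrt n)$. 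Once you correct the swap count to $\Theta(n)$ and replace the per-step $O(1)$ cap by this cumulative $O(M^2)$ bound, your sketch becomes essentially the paper's proof.
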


\begin{proof}
	Consider a start configuration consisting of $n$ robots arranged in a straight line, and a target configuration that arises by swapping each adjacent pair of robots.
	This pair of configurations has a diameter of $d=1$, with a total of $\nicefrac{n}{2}$ swaps.
	A~minimum makespan may be achieved by rearranging adjacent robots in parallel.
	This parallelism is limited by the rate at which robots can move orthogonally to the linear arrangement, see~\Cref{fig:lower-bound-b}.
	
	For some $\lambda\in\{0,\dots,\lceil\nicefrac{n}{2}\rceil\}$, the process of freeing $2\lambda$ robots out of the line such that they can navigate along the configuration freely would take a total of $\lambda $ transformation steps.
	With each free robot we can reduce the number of swaps of the remaining line by $1$ in a constant number of steps.
	Therefore, realizing all swaps takes at least $\Omega(\nicefrac{n}{\lambda})$ transformation steps, resulting in a total makespan of $\Omega(\lambda + \nicefrac{n}{\lambda})$.
	This is asymptotically minimal for $\lambda=\sqrt{n}$, resulting in a makespan of at least $\Omega(\sqrt{n})$.
	Because of the diameter $d=1$, this implies that any schedule has stretch at least $\Omega(\sqrt{n})$.
\end{proof}

\section{Preparing global movement locally}
\label{sec:localreconfiguration}
In this section we introduce a preliminary result that hinges on exploiting a global support structure that provides us with powerful local reconfiguration routines.

Our problem setting of \textsc{Labeled Connected Coordinated Motion Planning} naturally permits efficient parallelization by decomposition of the underlying grid into disjoint regions of comparable size, enabling us to perform operations in all of them simultaneously.

As an important constraint, we observe that these regions should be strongly connected -- in most scenarios, a robot must be able to move efficiently from one region to another in order to reach its target position.
Therefore, we seek to find a partition that allows for both high parallelization and high interconnectivity between these disjoint regions.
This concept was previously exploited for in-place reconfiguration of rectangular arrangements of robots by Demaine et al.~\cite{dfk+-cmprs-19}, who combined a grid-based partition and the so-called \textsc{RotateSort} algorithm for efficient reconfiguration.
To this end, they define a grid that scales with the diameter $d$ of the input configuration to partition the configuration into disjoint \emph{tiles}, each of which is an $\mathcal{O}(d)\times\mathcal{O}(d)$ square of robots.
The resulting partition of the grid space is what we call an \emph{$\mathcal{O}(d)$-tiling}, as shown in~\Cref{fig:tiling-example}.

\begin{figure}[ht]
	\begin{subfigure}[b]{74.25bp}
		\def\svgscale{0.9}
		\import{./figures/}{tiling_example.svg.pdf_tex}
		\caption{}
		\label{fig:tiling-example}
	\end{subfigure}\hfill%
	\begin{subfigure}[b]{74.25bp}
		\def\svgscale{0.9}
		\import{./figures/}{tiled_config_example.svg.pdf_tex}
		\caption{}
		\label{fig:tiled-config-example}
	\end{subfigure}\hfill%
	\begin{subfigure}[b]{148.5bp} 
		\def\svgscale{0.9}%
		\import{./figures/}{nontiled_config_example.svg.pdf_tex}
		\caption{}
		\label{fig:nontiled-config-example}
	\end{subfigure}
	\caption{
		Tilings and tiled configurations.
		In (a), we see a configuration, a $7$-tiling, and its boundary regions.
		A valid tiled configuration can be seen in (b), alongside an invalid one in (c) with a missing tile boundary  highlighted in orange.
	}
	\label{fig:tiling-and-scaffold}
\end{figure}
\begin{figure}[htb]
	\centering
	\begin{subfigure}[b]{124.875bp}
		\def\svgscale{0.9}
\begingroup%
  \makeatletter%
  \providecommand\color[2][]{%
    \errmessage{(Inkscape) Color is used for the text in Inkscape, but the package 'color.sty' is not loaded}%
    \renewcommand\color[2][]{}%
  }%
  \providecommand\transparent[1]{%
    \errmessage{(Inkscape) Transparency is used (non-zero) for the text in Inkscape, but the package 'transparent.sty' is not loaded}%
    \renewcommand\transparent[1]{}%
  }%
  \providecommand\rotatebox[2]{#2}%
  \newcommand*\fsize{\dimexpr\f@size pt\relax}%
  \newcommand*\lineheight[1]{\fontsize{\fsize}{#1\fsize}\selectfont}%
  \ifx\svgwidth\undefined%
    \setlength{\unitlength}{138.75bp}%
    \ifx\svgscale\undefined%
      \relax%
    \else%
      \setlength{\unitlength}{\unitlength * \real{\svgscale}}%
    \fi%
  \else%
    \setlength{\unitlength}{\svgwidth}%
  \fi%
  \global\let\svgwidth\undefined%
  \global\let\svgscale\undefined%
  \makeatother%
  \begin{picture}(1,0.48648649)%
    \lineheight{1}%
    \setlength\tabcolsep{0pt}%
    \put(0,0){\includegraphics[width=\unitlength,page=1]{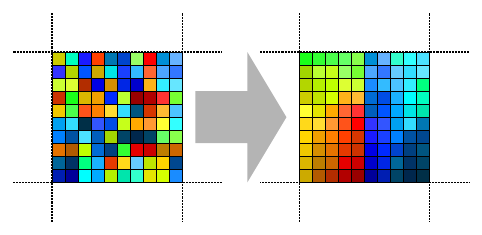}}%
    \put(0.4947786,0.21869038){\makebox(0,0)[t]{\lineheight{1.25}\smash{\begin{tabular}[t]{c}$\mathcal{O}(d)$\end{tabular}}}}%
  \end{picture}%
\endgroup%

		\caption{}
		\label{fig:rotatesort}
	\end{subfigure}\hfill%
	\begin{subfigure}[b]{229.5bp}
		\def\svgscale{0.9}
\begingroup%
  \makeatletter%
  \providecommand\color[2][]{%
    \errmessage{(Inkscape) Color is used for the text in Inkscape, but the package 'color.sty' is not loaded}%
    \renewcommand\color[2][]{}%
  }%
  \providecommand\transparent[1]{%
    \errmessage{(Inkscape) Transparency is used (non-zero) for the text in Inkscape, but the package 'transparent.sty' is not loaded}%
    \renewcommand\transparent[1]{}%
  }%
  \providecommand\rotatebox[2]{#2}%
  \newcommand*\fsize{\dimexpr\f@size pt\relax}%
  \newcommand*\lineheight[1]{\fontsize{\fsize}{#1\fsize}\selectfont}%
  \ifx\svgwidth\undefined%
    \setlength{\unitlength}{255bp}%
    \ifx\svgscale\undefined%
      \relax%
    \else%
      \setlength{\unitlength}{\unitlength * \real{\svgscale}}%
    \fi%
  \else%
    \setlength{\unitlength}{\svgwidth}%
  \fi%
  \global\let\svgwidth\undefined%
  \global\let\svgscale\undefined%
  \makeatother%
  \begin{picture}(1,0.32352941)%
    \lineheight{1}%
    \setlength\tabcolsep{0pt}%
    \put(0,0){\includegraphics[width=\unitlength,page=1]{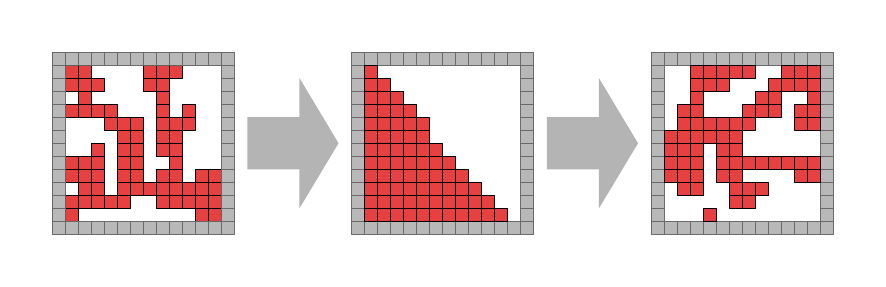}}%
    \put(0.32804121,0.14840501){\makebox(0,0)[t]{\lineheight{1.25}\smash{\begin{tabular}[t]{c}$\mathcal{O}(d)$\end{tabular}}}}%
    \put(0.6659467,0.14969359){\makebox(0,0)[t]{\lineheight{1.25}\smash{\begin{tabular}[t]{c}$\mathcal{O}(d)$\end{tabular}}}}%
    \put(0,0){\includegraphics[width=\unitlength,page=2]{interior_triangle_unlabeled.svg.pdf}}%
  \end{picture}%
\endgroup%

		\caption{}
		\label{fig:interior-triangle-example}
	\end{subfigure}\hfil
	\caption{
		We exploit two preliminary results: \textsc{RotateSort}, i.e., \Cref{the:rotatesort} can be applied to sort rectangular regions such as in (a).
		For unlabeled tiled configurations, \Cref{lem:unlabelled_interior_od}
		allows us to modify the interior regions, as shown in (b).
	}
	\label{fig:two_subroutines}
\end{figure}
We note their use of \textsc{RotateSort} within these tiles, as we will later apply it to a similar effect.
See~\Cref{fig:rotatesort} for an illustration of its capabilities.
\begin{lemma}[\textsc{RotateSort}]
	\label{the:rotatesort}
	Let $C_s$ and $C_t$ be two labeled configurations of an $n_1 \times n_2$ rectangle with $n_1 > n_2 \geq 2$.
	We can compute in polynomial time a schedule $C_s\rightrightarrows_\chi C_t$ with makespan $\mathcal{O}(n_1+n_2)$.
\end{lemma}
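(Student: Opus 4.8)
The plan is to reduce reconfiguring a fully occupied rectangle to a mesh-sorting problem, invoke the classical \textsc{RotateSort} algorithm (of Marberg and Gafni), and then translate its primitive operations into our swap-free, connectivity-preserving model with only constant overhead per step.

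Since both $C_s$ and $C_t$ occupy exactly the cells of the $n_1\times n_2$ rectangle, the transformation to be realized is a permutation of the robots. Fixing the column-major ordering of the rectangle's cells, I would assign to the robot starting at a cell $v$ the key given by the rank of its target cell in that order; then $C_t$ is precisely the configuration in which the keys appear in sorted order, so it suffices to sort the rectangle by these keys. \textsc{RotateSort} does this in a constant number of \emph{phases}, each of which simultaneously sorts all rows or all columns of the mesh in $\mathcal{O}(n_1+n_2)$ parallel compare-exchange rounds, where a round is a set of pairwise non-adjacent compare-exchanges on adjacent cells. Since \textsc{RotateSort} is an explicit oblivious algorithm, the resulting schedule will be computable in polynomial time (a handful of small cases in $n_1,n_2$ where \textsc{RotateSort} does not directly apply can be dispatched by hand).

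The technical heart is realizing one parallel compare-exchange round as $\mathcal{O}(1)$ transformations of our model. A compare-exchange either leaves its pair of adjacent robots in place or exchanges them, and an exchange is exactly a swap, which our model forbids; worse, in the interior of a fully occupied rectangle no adjacent cell is empty, so there is nowhere to route a robot around. I would resolve this globally rather than pair by pair: using only the cells immediately outside the rectangle as transient buffer, perform a constant-depth ``breathing'' motion that opens up empty space adjacent to the active rows (resp.\ columns) while retaining a fixed fully occupied sub-line as a connecting spine, then execute all of the round's exchanges simultaneously via detours of constant length (handling the two parities of the round in two batches so that neighbouring detours cannot collide), and finally reverse the breathing motion. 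Each opening and reversal costs $\mathcal{O}(n_1+n_2)$ but occurs only a constant number of times per phase, so it is absorbed into the $\Omega(n_1+n_2)$ rounds of that phase.

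Combining these ingredients -- a constant number of phases, $\mathcal{O}(n_1+n_2)$ compare-exchange rounds in total, and $\mathcal{O}(1)$ swap-free transformations per round -- yields a stable schedule of makespan $\mathcal{O}(n_1+n_2)$, since every intermediate configuration is the rectangle with a bounded number of robots on constant-length detours through the surrounding cells, which stays connected. I expect the main obstacle to be precisely this swap-free, connectivity-preserving implementation of parallel compare-exchange rounds in the interior: one must verify that the transient boundary cells suffice, that detours never collide, and that connectivity is maintained throughout the breathing motion, and essentially all of the case analysis lives there; by contrast, the reduction to sorting and the appeal to \textsc{RotateSort} are routine.
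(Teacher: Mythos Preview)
The paper does not prove this lemma at all: it is quoted as a known result, imported from Demaine et al.\ (the reference \texttt{dfk+-cmprs-19}), where \textsc{RotateSort} is already adapted to the swap-free parallel model. So there is no ``paper's own proof'' to match; the relevant comparison is to the cited construction.

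Against that construction, your proposal has a genuine gap. Your implementation of a compare-exchange round relies on the cells \emph{immediately outside} the rectangle being empty, so that a ``breathing'' motion can spread the rows apart and create detour space. Nothing in the lemma guarantees this, and more importantly the paper never uses the lemma that way: in Lemmas~4.3--4.5 \textsc{RotateSort} is applied to sub-rectangles sitting inside a tiled configuration, with the surrounding cells occupied by scaffold robots or other interior robots. An argument that needs free exterior space therefore does not deliver the tool the rest of the paper consumes. There is also a secondary fragility in your connectivity argument: once rows are separated and only a single spine column keeps them joined, any swap that involves a robot on that spine momentarily removes it from the column, and you have not explained why the two halves of the spine stay connected through the detour.

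The intended route avoids both problems by never leaving the rectangle. The hypothesis $n_2\ge 2$ is exactly what makes this possible: a cyclic one-step rotation of any $2\times k$ sub-block (every robot on its perimeter moves one cell around the loop) is a legal, collision-free, swap-free transformation in our model, and it clearly keeps the full rectangle connected. A constant number of such rotations on a $2\times 2$ or $2\times 3$ block simulates a single adjacent transposition in one row while restoring the auxiliary row, and disjoint blocks can be rotated in parallel. Plugging this primitive into the Marberg--Gafni phase structure gives the $\mathcal{O}(n_1+n_2)$ bound entirely in place. I would recast your argument around this rotation primitive rather than the breathing construction.
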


\medskip
Prior research on unlabeled robots under connectivity constraints employs a support structure along tile-based subdivisions of the plane to maintain connectivity.
As our work involves a similar support structure, we define the following terms.
The \emph{boundary} of a tile is composed of the grid positions directly adjacent to the outer edge.
Analogously, the \emph{interior} positions are the remaining positions within the tile region.
The \emph{scaffold} of a tiling is the union of all boundaries of its non-empty tiles.
As these are the positions that we want our support structure to occupy, we say that a \emph{tiled configuration} is a connected configuration whose silhouette is a subset of the given tiling and a superset of its scaffold.
This concept is visualized in~\Cref{fig:tiled-config-example,fig:nontiled-config-example}.

Fekete et al.~\cite{connected-motion-journal} obtained the following result
for \emph{unlabeled} reconfiguration under connectivity constraints.
\begin{lemma}
	\label{lem:unlabelled_interior_od}
	Let $C_s$ and $C_t$ be two unlabeled $\mathcal{O}(d)$-tiled configurations such that $C_s$ and~$C_t$ contain the same number of robots in the interior of each tile $T$.
	We can compute in polynomial time a schedule $C_s\rightrightarrows_\chi C_t$ with makespan $\mathcal{O}(d)$.
\end{lemma}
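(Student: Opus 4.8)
The plan is to exploit the rigidity of the scaffold. In any tiled configuration the boundary of every non-empty tile is fully occupied, and these boundaries together form a \emph{connected}, fully occupied frame that is common to $C_s$ and $C_t$; we freeze every scaffold robot for the entire schedule. Since the scaffold is connected and meets every non-empty tile, any configuration whose occupied set contains the scaffold is connected \emph{regardless} of where the remaining robots sit, so the connectivity constraint is satisfied automatically; moreover, every interior robot is boxed in by its tile's occupied boundary and can never leave that tile. Together with the hypothesis that $C_s$ and $C_t$ agree on the number of interior robots of each tile, this splits the task into one completely independent subproblem per tile, all of which we run simultaneously (padding the shorter per-tile schedules with idle steps). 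The overall makespan is then the maximum of the per-tile makespans and, since there are polynomially many tiles, the schedule is computed in polynomial time.

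It thus remains to reconfigure, inside a single tile $T$, its $k_T$ interior robots from a start set $I_s^T$ to a target set $I_t^T$ --- two equal-size subsets of the $\mathcal{O}(d)\times\mathcal{O}(d)$ interior of $T$ --- with no further constraints and in $\mathcal{O}(d)$ steps. We would reduce this to \Cref{the:rotatesort} by a hole-coloring trick: regard every empty interior cell as holding a \emph{hole} token, so that the interior rectangle becomes \emph{fully} populated by $k_T$ robots and one hole per remaining cell; fix any bijection from the start placement of tokens to the target placement that maps robots to robots and holes to holes, label the tokens by it, and apply \textsc{RotateSort} to the full interior rectangle to obtain a token schedule of makespan $\mathcal{O}(d)$. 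Projecting this schedule onto the moves of actual robots (discarding all hole moves) yields the desired real schedule: whenever a robot is moved onto a cell, that cell was simultaneously vacated by a hole and is therefore genuinely empty, and whenever two robot moves interact they form a chain rather than a swap, because \textsc{RotateSort} is swap-free. The handful of tiles with a degenerate interior are handled directly: an empty or single-row interior admits an easy one-dimensional argument for unlabeled robots, and a square interior is split into two non-square halves, each handled by \textsc{RotateSort}, followed by a constant-length merge.

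The step I expect to be the crux is making the hole-coloring reduction watertight: one has to verify that the projection of a collision-free, swap-free token schedule to robot moves stays collision-free and swap-free in the \emph{partial} configuration (no robot is ever pushed onto a still-occupied cell, no forbidden swap is created), and that interleaving the frozen scaffold with the projected per-tile schedules never disconnects the swarm. Once this is established, the remaining ingredients --- the decomposition into tiles, the parallel execution, the $\mathcal{O}(d)$ bound inherited from the tile side length, and the polynomial running time --- are routine.
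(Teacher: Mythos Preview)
There is a genuine gap in the connectivity argument. You assert that ``any configuration whose occupied set contains the scaffold is connected regardless of where the remaining robots sit,'' but this is false: an interior robot that happens to have no occupied neighbour is an isolated component. The scaffold guarantees that the \emph{frame} is connected and that every tile boundary is present; it does not guarantee that every interior robot is adjacent to something. Consequently, when you project the \textsc{RotateSort} token schedule onto the real robots by deleting all hole tokens, an intermediate step may well leave a genuine robot surrounded on all four sides by holes --- that is, by empty cells --- and the resulting configuration is disconnected. Nothing in \Cref{the:rotatesort} prevents this: it freely permutes a full rectangle and makes no promise about which \emph{subset} of tokens stays mutually adjacent during the process. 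Your collision-freeness projection argument is sound, but the stable-schedule requirement $\rightrightarrows_\chi$ is exactly about connectivity of every intermediate configuration, and the hole trick does not carry that invariant.

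The paper's approach (imported from Fekete et al.) avoids this issue by not virtualising a full rectangle at all. Each tile's interior is instead driven to a \emph{canonical} silhouette via a monotone, triangle-based compaction into one corner: robots are pushed toward the corner so that the occupied interior region stays simply connected and attached to the boundary at every step. Since the canonical shape depends only on the robot count $k_T$, both $C_s$ and $C_t$ reach the same intermediate, and reversing the second compaction yields the schedule. If you want to rescue your reduction, you would need to replace the black-box call to \textsc{RotateSort} by a routine that keeps the robot-labelled tokens forming a connected set (together with the boundary) throughout --- which is essentially what the compaction argument provides.
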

This result employs the reachability of a canonical intermediate structure within tiles i.e, an intermediate configuration derived only from the number of robots within each tile.
We will briefly reference this in the following section.
Intuitively, this intermediate configuration is obtained by a monotone, triangle-based compaction of the interior robots into one of the tile's corners, as illustrated in~\Cref{fig:interior-triangle-example}.

\medskip
Building on these two preliminary results, we now construct a subroutine that can be used to transform all tiles of a tiled configuration within a makespan of $\mathcal{O}(d)$.
\begin{theorem}
	\label{the:local_od}
	For any two labeled tiled configurations $C_s$ and~$C_t$ for which each tile consists of the same robots, we can compute in polynomial time a stable schedule of makespan $\mathcal{O}(d)$ that transforms one into the other.
\end{theorem}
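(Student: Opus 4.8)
The plan is to reduce the labeled problem on each tile to the unlabeled result of \Cref{lem:unlabelled_interior_od} together with the rectangular sorting primitive of \Cref{the:rotatesort}, processing all tiles in parallel since they are vertex-disjoint and the scaffold is preserved throughout. Fix a single non-empty tile $T$; by hypothesis the multiset of labels occupying $T$ in $C_s$ equals that in $C_t$. The first step is to route every robot into the correct tile — but this is already given, so the real work is: (i) permute the labels inside $T$ from the $C_s$-arrangement to the $C_t$-arrangement, (ii) do so with a connected schedule that never vacates the scaffold of $T$, and (iii) bound the makespan by $\mathcal{O}(d)$ uniformly over all tiles, so that the simultaneous execution over all tiles is a single stable schedule of makespan $\mathcal{O}(d)$.

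The key steps, in order. First, I would invoke \Cref{lem:unlabelled_interior_od} on the silhouettes of $C_s$ and $C_t$ restricted to $T$ (padded with the other tiles' scaffolds held fixed) to move, in $\mathcal{O}(d)$ steps, both configurations to the canonical intermediate silhouette — the monotone triangular compaction of the interior robots into a designated corner of $T$ described after that lemma. Concretely: run the forward unlabeled schedule from $C_s|_T$ to the canonical silhouette, remembering where each label sits after compaction; this gives a labeled configuration $\widehat{C}_s$ supported on (scaffold of $T$) $\cup$ (a triangular block in a corner). Do the same for $C_t|_T$, obtaining $\widehat{C}_t$; the reverse of that schedule takes $\widehat{C}_t$ back to $C_t|_T$ in $\mathcal{O}(d)$ steps. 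Second, $\widehat{C}_s$ and $\widehat{C}_t$ have the \emph{same silhouette} and the same label multiset, so they differ only by a permutation of labels on a region that (once we also bring in the scaffold ring) is contained in an $\mathcal{O}(d)\times\mathcal{O}(d)$ rectangle; after an $\mathcal{O}(d)$-step local rearrangement to fill out a solid rectangle of robots covering both supports (which is possible because the scaffold is a full ring and the interior compaction sits in a corner — one pads the triangle to a rectangle, borrowing from the scaffold interior-adjacent layer and restoring it afterward), we apply \Cref{the:rotatesort} to realize the label permutation in $\mathcal{O}(d)$ steps. Third, concatenate: $C_s|_T \rightrightarrows \widehat{C}_s \rightrightarrows (\text{solid rectangle, sorted}) \rightrightarrows \widehat{C}_t \rightrightarrows C_t|_T$, each phase $\mathcal{O}(d)$, total $\mathcal{O}(d)$; run this in lockstep across all tiles. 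Connectivity of the global configuration holds at every step because every phase keeps the scaffold of every non-empty tile occupied (the scaffold is never moved through, only momentarily borrowed-and-restored within a tile's interior), and the scaffold alone induces a connected subgraph spanning all non-empty tiles.

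The main obstacle I expect is the interface between the triangular canonical form of \Cref{lem:unlabelled_interior_od} and the rectangular requirement of \Cref{the:rotatesort}: \textsc{RotateSort} wants a full $n_1\times n_2$ rectangle of robots, whereas the canonical intermediate is a scaffold ring plus a corner triangle, which is not rectangular and may not even be simply a union of a rectangle with the ring. I would handle this by defining a second canonical form — a solid $a\times b$ sub-rectangle of the tile anchored in the same corner, where $ab$ equals the number of robots in $T$ (interior plus, say, the portion of the boundary we are willing to move), leaving the remaining scaffold positions fixed — and showing in $\mathcal{O}(d)$ steps that (a) the triangular canonical form reaches this rectangular canonical form by a monotone sweep, and (b) this rectangle, extended by the fixed scaffold, still keeps the whole arrangement connected. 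One must be slightly careful that the rectangle together with the retained scaffold ring is connected and that \textsc{RotateSort} on the rectangle does not disconnect it from the ring; since the rectangle shares a full edge (or at least a contiguous segment) with the ring, and \textsc{RotateSort}'s intermediate configurations always occupy the entire rectangle, the rectangle–ring contact is maintained throughout, so stability follows. A secondary, purely bookkeeping point is that $\mathcal{O}(d)$ must be a bound \emph{independent of the tile}, which it is because every tile is $\mathcal{O}(d)\times\mathcal{O}(d)$ and all four phases have makespan linear in the tile side length; hence the parallel composition over all tiles is one stable schedule of makespan $\mathcal{O}(d)$, and it is computed in polynomial time since each invoked subroutine is.
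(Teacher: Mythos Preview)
Your plan handles the interior labels but has a genuine gap on the boundary. Recall that \Cref{lem:unlabelled_interior_od} moves only interior robots; the scaffold ring is held fixed. Hence in your $\widehat{C}_s$ the boundary carries exactly the labels of $C_s|_T$, and in $\widehat{C}_t$ exactly those of $C_t|_T$. These two labelings of the ring can differ arbitrarily. Your \textsc{RotateSort} step is applied to a solid $a\times b$ sub-rectangle anchored at one corner; such a rectangle can contain at most two of the four boundary sides, so the scaffold positions on the remaining sides are never touched by any phase of your schedule. Consequently, whenever $C_s$ and $C_t$ disagree on a boundary position outside that rectangle, your concatenated schedule does not reach $\widehat{C}_t$ and the reverse of the unlabeled schedule cannot finish the job. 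The alternative you hint at---``borrowing from the scaffold interior-adjacent layer'' to pad out to a rectangle that covers the full ring---would require an $m\times m$ solid square, i.e., $(m-2)^2$ interior robots, which a tile need not have; and actually vacating scaffold cells to move those robots inward would break the very connectivity you rely on.

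The paper closes this gap by treating boundary reordering as a separate subroutine: because the ring is one robot wide, it cannot be sorted in place, so each tile temporarily borrows the adjacent row of a \emph{neighboring} tile's boundary to obtain a $2\times(m{-}2)$ strip on which \textsc{RotateSort} applies; a bipartite $2$-coloring of the tile adjacency graph lets non-adjacent tiles do this in parallel while adjacent tiles take turns. A further short step exchanges robots between interior and boundary. Your argument would go through if you inserted an analogous mechanism; without it, the claim that ``the scaffold is never moved'' is precisely what prevents you from realizing the required boundary permutation.
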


\medskip

In the remainder of this subsection we provide three results which, in combination, yield a proof of \Cref{the:local_od}.
To this end, we first show that the interior of a tile can be reconfigured arbitrarily, followed by a description of how this can be adapted to include the reconfiguration of the boundary, as well as arbitrary exchanges of robots between a tile's interior and its~boundary.

\subsection{Reconfiguring the interior}
We start by showing how the interior of a tile can be transformed arbitrarily.
As moves are reversible, proving that a canonical configuration is reachable is sufficient.
In our case, this configuration will be a compact subset of a square.
In order to obtain such a configuration, we first apply a direct result of~\Cref{lem:unlabelled_interior_od}, which is that we can modify the silhouette of a tile's interior configuration arbitrarily in $\mathcal{O}(d)$.
Rather than stopping at the previously described canonical configuration based on triangular shapes, we deterministically create a configuration with a square-like silhouette.
We arrive at the following corollary for $k$~interior~robots.
\begin{corollary}
	\label{cor:underfull_square_creation}
	For any two connected configurations consisting of an immobile $m \times m$ boundary for $m\in \mathcal{O}(d)$ and $k\leq (m-2)^2$ interior robots, there is a stable schedule of makespan $\mathcal{O}(d)$ that moves all interior robots into a subset of
	$\ceil{\sqrt{k}}\times \ceil{\sqrt{k}}$ positions.
\end{corollary}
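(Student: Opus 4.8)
The plan is to derive the claim as a direct instantiation of \Cref{lem:unlabelled_interior_od}. Since the statement only concerns the \emph{silhouette} of the interior -- we move the interior robots into \emph{some} subset of an $\ceil{\sqrt{k}}\times\ceil{\sqrt{k}}$ block, without prescribing which robot ends up where -- I would view the given $m\times m$ boundary together with its interior as a tiled configuration $C_s$ for the trivial tiling consisting of a single $\mathcal{O}(d)$-tile: its scaffold is exactly the full boundary ring, and its interior holds $k\leq (m-2)^2$ robots. Because $C_s$ is connected and its boundary is full, it is a valid tiled configuration in the sense required by \Cref{lem:unlabelled_interior_od}.

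Next I would fix the target configuration. Put $s := \ceil{\sqrt{k}}$; since $k\leq(m-2)^2$ and $m-2$ is an integer with $m-2\geq\sqrt{k}$, we get $s\leq m-2$, so an $s\times s$ block fits inside the $(m-2)\times(m-2)$ interior. Let $C_t$ be the configuration with the same boundary ring as $C_s$ whose $k$ interior robots occupy the first $k$ cells of such a block placed flush into a corner of the interior, filled column by column from the bottom (a staircase/Young-diagram shape). This $C_t$ is connected -- the occupied cells form a connected staircase region whose corner cell is adjacent to the boundary -- it is again a valid tiled configuration, and it contains exactly $k$ robots in the (single) tile's interior, matching $C_s$. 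Applying \Cref{lem:unlabelled_interior_od} to the pair $(C_s,C_t)$ then yields, in polynomial time, a stable schedule $C_s\rightrightarrows_\chi C_t$ of makespan $\mathcal{O}(d)$, which is precisely the schedule we want.

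The one point requiring care is the word \emph{immobile}: \Cref{lem:unlabelled_interior_od} is stated for unlabeled tiled configurations and may internally shuffle indistinguishable robots along the scaffold, but it preserves the scaffold's silhouette, so the boundary remains in place as a set of occupied cells -- which is all that is needed in the unlabeled statement of the corollary. I would also verify the two small geometric facts used above, namely that $\ceil{\sqrt{k}}\leq m-2$ and that the chosen corner-packed staircase silhouette keeps the whole configuration connected; neither is more than a line. The real work is carried entirely by \Cref{lem:unlabelled_interior_od}, and the only new ingredient is choosing a square-like target silhouette in place of the triangular canonical one referenced earlier.
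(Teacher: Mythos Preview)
Your proposal is correct and matches the paper's own argument: the corollary is stated precisely as a direct consequence of \Cref{lem:unlabelled_interior_od}, applied to a single tile with a fixed boundary ring and a chosen square-like target silhouette for the interior. The paper routes through the canonical triangle shape only as narrative context and picks a slightly different packing of the at most $2\lceil\sqrt{k}\rceil-1$ missing cells (top two rows rather than your column-by-column staircase), but this is immaterial; your observations about $\lceil\sqrt{k}\rceil\leq m-2$, connectivity of the target, and the unlabeled interpretation of ``immobile'' are exactly the small checks the paper leaves implicit.
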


\medskip
This process is visualized in~\Cref{fig:interior-reconfiguration-overview}.
Using the structural results from~\Cref{lem:unlabelled_interior_od}, we transform an initial interior configuration $C_s$ into a square-like silhouette $C_\Box$.
Due to the relationship of any two sequential square numbers, i.e., $x^2-(x-1)^2=2x-1$, at most $2\ceil{\sqrt{k}} - 1$ nodes within the $\ceil{\sqrt{k}}\times \ceil{\sqrt{k}}$ square may end up unoccupied.
We form the silhouette in a manner such that these robots are missing within the top two rows, from an arbitrary but fixed end.
In a final step concerning the interior, we now argue that we can efficiently exchange the interior robot's positions in the silhouette, as indicated in~\Cref{fig:interior-reconfiguration-overview}.
\begin{figure}[htb]
	\centering
	\def\svgscale{0.9}
\begingroup%
  \makeatletter%
  \providecommand\color[2][]{%
    \errmessage{(Inkscape) Color is used for the text in Inkscape, but the package 'color.sty' is not loaded}%
    \renewcommand\color[2][]{}%
  }%
  \providecommand\transparent[1]{%
    \errmessage{(Inkscape) Transparency is used (non-zero) for the text in Inkscape, but the package 'transparent.sty' is not loaded}%
    \renewcommand\transparent[1]{}%
  }%
  \providecommand\rotatebox[2]{#2}%
  \newcommand*\fsize{\dimexpr\f@size pt\relax}%
  \newcommand*\lineheight[1]{\fontsize{\fsize}{#1\fsize}\selectfont}%
  \ifx\svgwidth\undefined%
    \setlength{\unitlength}{408.75bp}%
    \ifx\svgscale\undefined%
      \relax%
    \else%
      \setlength{\unitlength}{\unitlength * \real{\svgscale}}%
    \fi%
  \else%
    \setlength{\unitlength}{\svgwidth}%
  \fi%
  \global\let\svgwidth\undefined%
  \global\let\svgscale\undefined%
  \makeatother%
  \begin{picture}(1,0.23853211)%
    \lineheight{1}%
    \setlength\tabcolsep{0pt}%
    \put(0,0){\includegraphics[width=\unitlength,page=1]{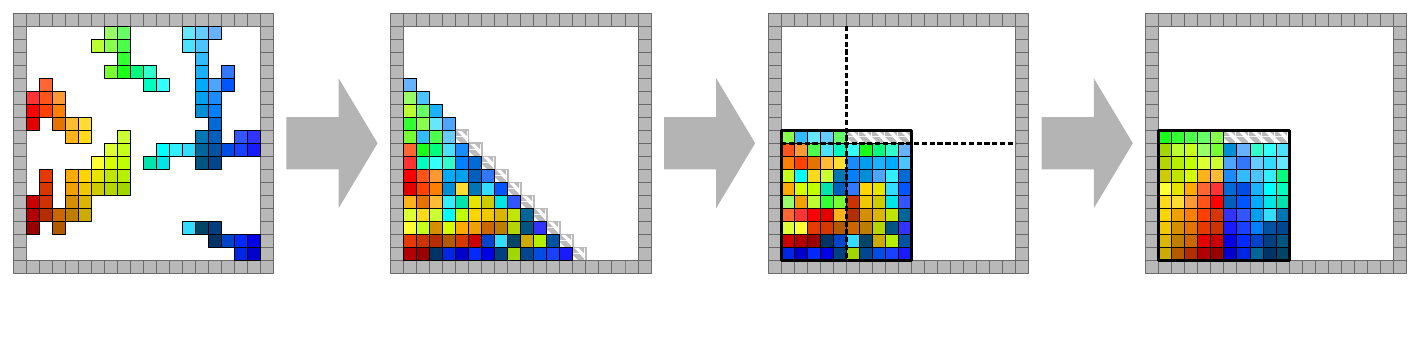}}%
    \put(0.49724763,0.12844038){\makebox(0,0)[t]{\lineheight{1.25}\smash{\begin{tabular}[t]{c}$\mathcal{O}(d)$\end{tabular}}}}%
    \put(0.76330265,0.12844038){\makebox(0,0)[t]{\lineheight{1.25}\smash{\begin{tabular}[t]{c}$\mathcal{O}(d)$\end{tabular}}}}%
    \put(0.10091743,0.01834862){\makebox(0,0)[t]{\lineheight{1.25}\smash{\begin{tabular}[t]{c}$C_s$\end{tabular}}}}%
    \put(0.63302747,0.01834864){\makebox(0,0)[t]{\lineheight{1.25}\smash{\begin{tabular}[t]{c}$C_\Box$\end{tabular}}}}%
    \put(0.89908275,0.01834864){\makebox(0,0)[t]{\lineheight{1.25}\smash{\begin{tabular}[t]{c}$C_\Box'$\end{tabular}}}}%
    \put(0.2311926,0.12844038){\makebox(0,0)[t]{\lineheight{1.25}\smash{\begin{tabular}[t]{c}$\mathcal{O}(d)$\end{tabular}}}}%
  \end{picture}%
\endgroup%

	\caption{We obtain the canonical square-like silhouette $C_\Box$ in a tile's interior using the canonical triangle structure from \Cref{lem:unlabelled_interior_od}, which we can then reorder arbitrarily.}
	\label{fig:interior-reconfiguration-overview}
\end{figure}

\begin{lemma}
	The interiors of all tiles in a labeled tiled configuration that consists of $m\times m$-tiles with $m\in \mathcal{O}(d)$ may be locally rearranged by a schedule of makespan $\mathcal{O}(d)$.
	\label{lem:labeled_interior_od}
\end{lemma}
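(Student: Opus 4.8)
The plan is to reduce the claim to a purely local task: for every tile, and for all tiles simultaneously, to construct a stable schedule of makespan $\mathcal{O}(d)$ that takes the tile's interior from its given labeled configuration to a \emph{canonical} one that depends only on the set of labels currently inside that interior. Since the interior of each tile hosts the same set of robots in $C_s$ and in $C_t$, this canonical configuration is the same target for both; as grid moves are reversible, concatenating the schedule that drives $C_s$ to the canonical configuration with the reverse of the one that drives $C_t$ to it yields a schedule $C_s \rightrightarrows_\chi C_t$ of makespan $\mathcal{O}(d)+\mathcal{O}(d)=\mathcal{O}(d)$. These per-tile schedules run in parallel without interference, since at every step only interior robots move, each within its own tile, so all tile boundaries stay fixed. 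Connectivity is maintained throughout: the scaffold of a connected tiled configuration is itself connected (any two non-empty tiles are joined by a chain of edge-adjacent non-empty tiles, whose one-cell-wide frames are mutually adjacent), and every interior robot is adjacent to its own tile's boundary, so each intermediate configuration remains connected.

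It therefore remains to handle a single $m\times m$ tile with $k\le(m-2)^2$ interior robots, in two phases. First, using \Cref{cor:underfull_square_creation} (which itself rests on \Cref{lem:unlabelled_interior_od}) we move the $k$ robots into a $\ceil{\sqrt k}\times\ceil{\sqrt k}$ square in a fixed corner of the interior, with the at most $2\ceil{\sqrt k}-1$ empty cells confined to a fixed band of two rows from a fixed end; this costs $\mathcal{O}(d)$, and since only the silhouette is prescribed the labels are simply carried along into some arrangement inside the square. Second, we sort the labels into the canonical order by running \textsc{RotateSort} (\Cref{the:rotatesort}) on this block, treating every empty cell as holding a sentinel key larger than all real labels, so that the real robots are routed into the ``small'' end of the sort in increasing order and the empty cells accumulate at the ``large'' end; the resulting configuration depends only on the set of $k$ labels and is what we declare canonical. \textsc{RotateSort} decomposes into $\mathcal{O}(\ceil{\sqrt k})=\mathcal{O}(m)=\mathcal{O}(d)$ rounds of row/column compare--exchanges, each realizable by a constant number of parallel grid moves; a compare--exchange incident to a sentinel cell is either a no-op or a single robot stepping into an empty cell, hence collision-free and never a forbidden swap, so \Cref{the:rotatesort} carries over to this partially filled block. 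Minor technical points — that \Cref{the:rotatesort} is phrased for strictly non-square rectangles while our block is square, and that $\ceil{\sqrt k}<2$ for $k\le 1$ — are absorbed by sorting inside a $\ceil{\sqrt k}\times(\ceil{\sqrt k}+1)$ sub-rectangle when it fits in the interior, respectively by brute force on a constant-size region.

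The main obstacle is guaranteeing enough \emph{free space inside the interior} for the no-swap sort: a \textsc{RotateSort}-style schedule needs $\Omega(\text{side})$ empty cells in the region it sorts to reroute robots in $\mathcal{O}(\text{side})$ rounds, and the empty cells of the $\ceil{\sqrt k}\times\ceil{\sqrt k}$ block alone may number as few as one. The fix is to use the cells of the tile's interior lying \emph{outside} the block as additional workspace; there are $(m-2)^2-k$ of them, which is $\Omega(m)$ except in a thin ``almost completely full'' regime where $k$ is within $o(m)$ of $(m-2)^2$. That degenerate regime is where the real care is needed, and I would dispatch it by observing that a literally full interior with an immobile frame is rigid (so $C_s$ and $C_t$ already agree there), and, just below full, by temporarily displacing a bounded portion of the frame to create routing room while the surrounding scaffold preserves global connectivity, or by invoking an invariant of the tiled configurations actually used later (in the proof of \Cref{the:local_od}) that keeps interiors bounded away from full. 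One should also double-check that the schedule of \Cref{cor:underfull_square_creation} indeed holds the \emph{entire} boundary immobile, since that is exactly what legitimizes composing the single-tile schedules in parallel across the whole configuration.
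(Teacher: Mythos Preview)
Your high-level plan---drive each tile's interior to a canonical labeled arrangement that depends only on the label set, then reverse from the target side, with all tiles processed in parallel because boundaries never move---is exactly what the paper does, and your connectivity argument is fine.

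The gap is in the sorting step on the square-like block $C_\Box$. You want to run \textsc{RotateSort} on the entire $\lceil\sqrt{k}\rceil\times\lceil\sqrt{k}\rceil$ region with the empty cells treated as large-key sentinels, but \Cref{the:rotatesort} is stated only for \emph{fully occupied} rectangles, and your justification that it ``decomposes into rounds of row/column compare--exchanges'' is not how \textsc{RotateSort} operates: a literal compare--exchange of two adjacent occupied cells would be a forbidden swap, so the algorithm instead realizes permutations by cyclic rotations of $2$-wide strips, which need the strip to be full. The sentinel argument therefore does not go through from the black-box lemma. Worse, the whole ``main obstacle'' paragraph is built on a misconception: \textsc{RotateSort} on a full $n_1\times n_2$ rectangle requires \emph{no} free cells whatsoever---every step is a collision-free cyclic shift with every position occupied before and after---so a completely full interior with an immobile frame is \emph{not} rigid, there is no ``almost full'' degenerate regime, and the proposed fix of displacing part of the frame is both unnecessary and would break the tile-parallelism you correctly established.

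The paper avoids all of this by never invoking \textsc{RotateSort} on a partially filled region. It covers the occupied cells of $C_\Box$ by two overlapping \emph{full} rectangles---the wide lower block and the taller column block containing the incomplete top rows---and applies \Cref{the:rotatesort} three times: once to the wide rectangle to bring the robots destined for the top rows to the base of the tall one, once to the tall rectangle to fix those rows, and once more to the wide rectangle to finish. Each invocation stays within the hypotheses of \Cref{the:rotatesort}, and the free-space question never arises.
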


\begin{proof}
	We prove this by means of a process that can be applied to all tiles in parallel.
	Consider an arbitrary tile that has $k\in [6,(m-2)^2]$ interior robots.
	We first employ~\Cref{cor:underfull_square_creation} to create the square-like interior silhouette $C_\Box$.
	For $k$ robots, this is a subset of a $\ceil{\sqrt{k}}\times \ceil{\sqrt{k}}$ square.
	There now exist two cases as follows.

	If the interior silhouette is a fully occupied rectangle, we can apply \Cref{the:rotatesort} to this rectangle to rearrange the robots in $\mathcal{O}(d)$.

	Otherwise, we cover the occupied area by two rectangles, one of maximal width in the ``lower'' portion, and one of maximal height in the ``taller'' segment, as indicated by the cuts through $C_\Box$ in \Cref{fig:interior-reconfiguration-overview}.
	We can then reorder the robots by applying \Cref{the:rotatesort} multiple times as follows.
	First, we ensure that the robots for the incomplete top row are located in the base of the taller rectangle, by reordering the lower rectangle.
	Then, a second application to the enclosing section ensures that the incomplete row is correctly configured.
	For the case of a $1$-wide rectangle, the empty position next to the robot may be considered a ``pseudo-robot'' and
	the operation may be applied to the corresponding $2$-wide area instead.
	Finally, a third repetition ensures that the lower rectangle is also correctly ordered.
	Overall, this yields a makespan of $\mathcal{O}(d)$.

	By appropriately reordering the robots in the silhouette of $C_\Box$ as above, before applying \Cref{cor:underfull_square_creation} in reverse, we have the means to create an arbitrary target configuration of the given tile's interior.
	Note that the necessary ordering of robots can be computed by a simple matching of positions in the silhouette of $C_\Box$ with respect to the start- and target arrangements.
	This concludes our proof of~\Cref{lem:labeled_interior_od}.
\end{proof}
Note that the case of $k \in [1,5]$ is excluded from the proof, but trivial schedules utilizing the existence of the boundary may be determined, allowing for arbitrary rearrangement of~the~interior.

\subsection{Reconfiguring the boundary}
\label{subsubsec:scaffold_reordering_local}
We now show that the scaffolding structure can be locally modified in-place, i.e., without changing its silhouette at any point of the process.

\begin{lemma}
	The boundaries of all tiles in a labeled tiled configuration that consists of ${m\times m}$-tiles with $m\in \mathcal{O}(d)$ may be locally reordered by a schedule of makespan $\mathcal{O}(d)$.
	\label{lem:parallel_boundary_reordering}
\end{lemma}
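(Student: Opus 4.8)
The plan is to process every tile of the configuration simultaneously, never letting any tile boundary lose a robot; then the scaffold stays fully occupied, so the whole configuration remains connected at every step, and since the tiles are pairwise disjoint and no robot ever leaves its tile, the schedules in different tiles cannot interact. It therefore suffices to describe, for a single $m\times m$ tile with $m\in\mathcal{O}(d)$, a schedule of makespan $\mathcal{O}(m)=\mathcal{O}(d)$ that realizes an arbitrary prescribed permutation of the $4m-4$ robots on its boundary cycle while keeping that cycle fully occupied throughout; running all of these in parallel proves the lemma.

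I would build the single-tile schedule from two primitives. First, rotating the entire boundary cycle $b_0,\dots,b_{4m-5}$ by one position, i.e.\ performing $b_i\rightarrow b_{i+1}$ for all $i$ with indices taken modulo $4m-4$, is a single collision-free and swap-free transformation that leaves the cycle full; iterating it shifts the boundary cyclically by any amount in $\mathcal{O}(m)$ steps. Second, I would invoke \Cref{lem:labeled_interior_od} to first rearrange the interior robots in $\mathcal{O}(d)$ so that the cells of the first interior ring lying directly inside a chosen boundary edge are all occupied (if the tile has too few interior robots for this, its interior contains so much free space that the reordering below is only easier, and for at most five interior robots it is trivial, as already noted). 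The $m-2$ non-corner cells of that edge together with the occupied inner row form a fully occupied $2\times(m-2)$ rectangle, to which \Cref{the:rotatesort} applies and reorders its $2(m-2)$ robots arbitrarily in $\mathcal{O}(m)$; such a rotation is precisely what lets a robot change its cyclic rank on the boundary, since it pushes some boundary robots inward while pulling interior robots out onto the edge.

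The schedule then alternates the two primitives: rotate the boundary to bring a fresh arc over this staging rectangle, apply \Cref{the:rotatesort}, and repeat. As in the proof of \Cref{lem:labeled_interior_od}, I expect a constant number of such rounds to suffice --- one group of applications routing every boundary robot onto the edge that contains its target cell, a second group placing it in the correct slot of that edge, plus a constant-size sub-schedule per corner to resolve the four corner cells (which lie on two edges at once) --- after which \Cref{lem:labeled_interior_od} is applied once more to restore the interior to the prescribed arrangement. Since each ingredient costs $\mathcal{O}(d)$ and is invoked a constant number of times, the overall makespan is $\mathcal{O}(d)$, and the construction is plainly polynomial-time.

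The step I expect to be the main obstacle is the interplay of the two constraints that are absent from the pure rectangle setting of \Cref{the:rotatesort}: the boundary must stay full at every step (for connectivity) and swaps are forbidden. Together these force any change of a robot's cyclic rank to occur through an exchange with an interior robot, and --- because the grid is triangle-free, so there is no length-$3$ rotation --- only along a cycle of length at least $4$ in the currently occupied sub-grid; the staging rectangles are exactly the device that makes such exchanges both available and cheap. Verifying that the interior robots can always be kept positioned to fill the inner staging row, that the four corners and the sparse-interior regime fit the same scheme, and that the routing really closes up after a constant number of \textsc{RotateSort} rounds is where the bookkeeping lies, but it should require no idea beyond those already used for the interior in \Cref{lem:labeled_interior_od}.
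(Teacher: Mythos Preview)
Your approach has a genuine gap in the case where a tile has too few interior robots --- in particular, a tile in a tiled configuration may have a completely empty interior, since the definition only requires that the scaffold be occupied. In that case your staging $2\times(m-2)$ rectangle cannot be formed at all: there are no interior robots to fill the inner row, and your scheme explicitly forbids ever removing a robot from the boundary. With the boundary kept full and the interior empty, the only legal moves confined to the tile are cyclic rotations of the boundary, which preserve the cyclic order of the robots; an arbitrary permutation is therefore unreachable under your constraints. Your parenthetical that ``so much free space'' makes things easier conflates two different tasks: free interior space does make \emph{interior} reordering easier (that is the remark after \Cref{lem:labeled_interior_od} you cite), but for \emph{boundary} reordering the scarce resource is not space but interior robots to swap against. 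Even with a handful of interior robots, the natural fallback --- a small $2\times k$ staging block --- only lets you touch $O(k)$ boundary positions per round, while threading the full boundary past it still costs $\Theta(m)$ rotations per round, giving $\Theta(m^2/k)$ rather than $O(m)$ when $k$ is constant.

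The paper sidesteps this entirely by taking the second row from \emph{outside} the tile: it uses the adjacent edge of a neighboring tile's boundary, which is guaranteed to be fully occupied regardless of anyone's interior. The price is that adjacent tiles can no longer run simultaneously, so the paper $2$-colors the (bipartite) dual graph of the tiling and processes the two color classes in turn. Apart from this, the mechanics --- rotate the boundary by $m-2$, apply \textsc{RotateSort} to the $2\times(m-2)$ strip, repeat a constant number of times --- are essentially what you sketch; the crucial difference is where the second row comes from, and that difference is exactly what covers the empty-interior case your proposal leaves open.
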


\begin{proof}
	We must allow adjacent tiles to take turns in utilizing the shared portion of the boundary as a processing area for themselves.
	As the dual graph of the tiling is bipartite, the tiles can be partitioned into two disjoint sets, each of which can be reconfigured in parallel.
	On a high level, we achieve this by repeatedly gathering and sorting $m-2$ robots at a time in this shared section of the boundary, incrementally arranging the scaffold into its~target~state.

	Consider now a single tile that is permitted to employ part of one neighbor's boundary.
	We can pick an arbitrary continuous sequence of $m-2$ robots from the target configuration and start gathering the corresponding robots in the ``borrowed'' section of the neighbor tile's scaffold.
	To this end, we repeatedly shift the robots along the boundary by $m-2$ units, before applying~\Cref{the:rotatesort} to switch desired robots to their target position in the sorted sequence in $\mathcal{O}(m)$ transformations.
	After at most five iterations of this process, we have obtained the desired sorted sequence, which we can then reinsert to the boundary of our tile at an appropriate location.

	This process must be repeated up to four more times to obtain an arbitrarily reodered boundary configuration and restore the neighbor's boundary to its original state.
	For an illustration of a single iteration, see~\Cref{fig:boundary_circulation}.
\end{proof}
\begin{figure}[ht]
	\centering
	\def\svgscale{0.9}
\begingroup%
  \makeatletter%
  \providecommand\color[2][]{%
    \errmessage{(Inkscape) Color is used for the text in Inkscape, but the package 'color.sty' is not loaded}%
    \renewcommand\color[2][]{}%
  }%
  \providecommand\transparent[1]{%
    \errmessage{(Inkscape) Transparency is used (non-zero) for the text in Inkscape, but the package 'transparent.sty' is not loaded}%
    \renewcommand\transparent[1]{}%
  }%
  \providecommand\rotatebox[2]{#2}%
  \newcommand*\fsize{\dimexpr\f@size pt\relax}%
  \newcommand*\lineheight[1]{\fontsize{\fsize}{#1\fsize}\selectfont}%
  \ifx\svgwidth\undefined%
    \setlength{\unitlength}{408.75bp}%
    \ifx\svgscale\undefined%
      \relax%
    \else%
      \setlength{\unitlength}{\unitlength * \real{\svgscale}}%
    \fi%
  \else%
    \setlength{\unitlength}{\svgwidth}%
  \fi%
  \global\let\svgwidth\undefined%
  \global\let\svgscale\undefined%
  \makeatother%
  \begin{picture}(1,0.2293578)%
    \lineheight{1}%
    \setlength\tabcolsep{0pt}%
    \put(0,0){\includegraphics[width=\unitlength,page=1]{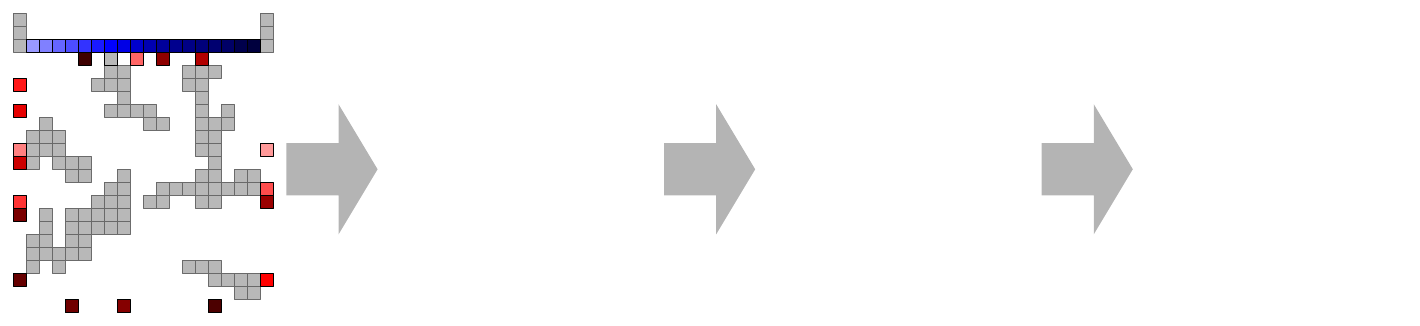}}%
    \put(0.23119257,0.10091744){\makebox(0,0)[t]{\lineheight{1.25}\smash{\begin{tabular}[t]{c}$\mathcal{O}(d)$\end{tabular}}}}%
    \put(0.49724763,0.10091744){\makebox(0,0)[t]{\lineheight{1.25}\smash{\begin{tabular}[t]{c}$\mathcal{O}(d)$\end{tabular}}}}%
    \put(0.76330269,0.10091744){\makebox(0,0)[t]{\lineheight{1.25}\smash{\begin{tabular}[t]{c}$\cdots$\end{tabular}}}}%
    \put(0,0){\includegraphics[width=\unitlength,page=2]{single_boundary_sort.svg.pdf}}%
  \end{picture}%
\endgroup%

	\caption{
		An illustration of the first iteration of the boundary reconfiguration approach:
		We repeatedly apply \textsc{RotateSort} to the top boundary and then rotate by $m-2$ units to gather and sort a sequence of robots (red), ignoring the rest (here: uniform green).
	}
	\label{fig:boundary_circulation}
\end{figure}

\subsection{Exchange between interior and boundary}
\label{subsec:exchange-between-interior-and-boundary}
Having proven that we can freely rearrange both the interior and the boundary individually,~we now prove that we can locally exchange agents between the interior and boundary of a~tile.

\begin{lemma}
	The interiors and boundaries of all tiles in a labeled tiled configuration that consists of $m\times m$-tiles with $m\in \mathcal{O}(d)$ can locally exchange any number of agents by a schedule of makespan $\mathcal{O}(d)$.
	\label{lem:interior_boundary_mixing}
\end{lemma}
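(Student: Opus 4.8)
\medskip
\noindent\textbf{Proof plan.}
The plan is to realize, for every tile $T$ simultaneously, an arbitrary prescribed assignment of $T$'s robots to its interior and its boundary (with the forced cardinalities); combined with \Cref{lem:labeled_interior_od} and \Cref{lem:parallel_boundary_reordering}, this also immediately yields \Cref{the:local_od}, since once the correct robots sit in the correct region only an in-region reordering remains. As in the boundary-reordering argument, I would use that the dual graph of the tiling is bipartite: $2$-color the tiles, process the two color classes one after the other so that an active tile has all four of its boundary edges free of interference from its (inactive) neighbors, and handle all tiles of a class in parallel. This costs only a constant factor.

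For a single active tile $T$ with $k$ interior robots, write $j$ for the number of robots that must leave the interior (equal to the number that must enter it); note $j \le \min\{k,\,4m-4\}$. I would move these $j$ robots in a constant number of rounds, each round associated with one boundary edge of $T$ and moving at most $m-2$ of them, cycling through the four edges (and repeating an edge a constant number of times if necessary, which suffices since $j \le 4m-4$). For a round on, say, the left edge I first use \Cref{lem:labeled_interior_od} to place as many interior robots as possible (all of them if $k<m-2$, exactly $m-2$ otherwise, in both cases including this round's outgoing robots) into the interior column adjacent to that edge, leaving every other interior robot in a block attached to the top and bottom boundary edges; simultaneously I use \Cref{lem:parallel_boundary_reordering} to bring this round's incoming robots onto the left edge and keep already-processed robots away from it. That edge-column together with the adjacent, now fully occupied interior column forms a fully occupied $2\times\ell$ rectangle with $2\le\ell\le m-2$, to which \Cref{the:rotatesort} applies and produces in $\mathcal{O}(m)$ steps any desired permutation of its robots -- in particular the intended exchange of outgoing and incoming robots. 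After the constantly many rounds, one final application each of \Cref{lem:labeled_interior_od} and \Cref{lem:parallel_boundary_reordering} fixes the exact positions, for a total makespan of $\mathcal{O}(m)=\mathcal{O}(d)$.

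The step I expect to be the crux is verifying \emph{stability}, since \textsc{RotateSort} is here applied to a region that contains scaffold positions. The key observation is that \textsc{RotateSort} permutes a full rectangle in place, so the $2\times\ell$ working rectangle stays completely occupied at every intermediate step while the rest of the configuration is untouched; it then remains to check that this solid rectangle, the untouched remainder of the scaffold (the other edges and the corners, each attached to the rectangle along a side), and the parked interior robots (which remain adjacent to the rectangle and to the top and bottom edges) stay connected throughout -- this is where the shape chosen for the parked block matters. The only other bookkeeping is ensuring, round to round, that robots already pushed to the boundary are not left on the edge used by a later round and that freshly arrived interior robots are not shuffled back out, which the freedom provided by \Cref{lem:labeled_interior_od} and \Cref{lem:parallel_boundary_reordering} makes routine. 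Small values of $k$ and the degenerate cases $\ell\le 1$ are absorbed into this scheme directly.
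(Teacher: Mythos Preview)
Your proposal is correct and follows essentially the same approach as the paper: position the outgoing interior robots directly adjacent to the incoming boundary robots and then apply \textsc{RotateSort} to a fully occupied $2$-wide strip that straddles the boundary, so that the strip (and hence the scaffold) stays occupied throughout. The paper's version is a bit leaner --- it places up to $4m-16$ swap partners at once using only \Cref{lem:labeled_interior_od} (no call to \Cref{lem:parallel_boundary_reordering}), so two global rounds suffice, and it dispenses with the outer bipartite colouring since each tile acts only on its own boundary and interior --- but your edge-by-edge variant and the extra $2$-colouring are harmless and the stability argument you sketch is exactly the point that makes either version go through.
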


\begin{proof}
	We place at most $4m-16$ agents that need to be swapped into the boundary adjacent to the respective boundary agents that need to swap into the interior using~\Cref{lem:labeled_interior_od}.
	Subsequently, we apply~\Cref{the:rotatesort} to the relevant areas.
	As the boundary consists of $4m-4$ agents, at most two repetitions of this process can replace every agent on the boundary.
	This approach may be applied to all tiles simultaneously.
\end{proof}

\section{Schedules of constant stretch}
\label{sec:constantstretch}

In this section, we describe an approach to determine stable schedules with constant stretch for labeled configurations of sufficient \emph{scale}, which we define as follows.

A~configuration $C$ is~\emph{$c$-scaled}, if its silhouette is the union of $c \times c$ squares of robots that are aligned edge-to-edge, i.e., aligned with a $c\times c$ grid.
The \emph{scale} of a configuration $C$ is the maximal $c$ such that $C$ is $c$-scaled.

In the previous section, we introduced tilings and demonstrated the powerful effect of a tiling-based scaffold on local reconfiguration capabilities.
While we know how to exploit these tilings for local movement, we so far lack the tools to build a scaffold and therefore obtain a tiled configuration.
We now investigate a class of configurations based on scale, for which we can efficiently build and make use of a scaffold structure for coordinated motion planning.
In particular, we show the following result.

\smallskip
\begin{theorem}\label{the:c_lower}
    There is a constant $c^*$ such that for any pair $(C_s,C_t)$ of labeled connected~configurations with $n$ robots each and scale of at least $c^*$, there exists a constant stretch schedule $C_s\rightrightarrows_{\chi} C_t$.
\end{theorem}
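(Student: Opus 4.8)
The plan is to construct a scaffold for both configurations, reduce to the tiled setting where the earlier local-reconfiguration machinery applies, and then handle the global transport of robots between tiles by a flow/circulation argument. Concretely, fix an $\mathcal{O}(d)$-tiling aligned with the $c^*\times c^*$ grid (so that each tile is, say, a $k\times k$ block of the scaled grid for a suitable constant $k$; since the scale is at least $c^*$, each occupied tile contains enough robots to populate a full boundary, which is exactly what is needed to form the scaffold). The first step is a \emph{scaffolding phase}: starting from $C_s$, move robots locally within each tile so that its boundary is fully occupied, turning $C_s$ into a tiled configuration $C_s'$ in makespan $\mathcal{O}(d)$. Because the configuration is $c^*$-scaled with $c^*$ large enough, every non-empty tile has at least $4m-4$ robots, and the silhouette is connected at the tile level, so the scaffold is connected and this phase can be carried out tile-by-tile in parallel. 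Symmetrically, we build a scaffolded version $C_t'$ of $C_t$.

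\smallskip
The second step is the \emph{global routing phase}: we must move each robot from the tile it occupies in $C_s'$ to the tile it must occupy in $C_t'$, while preserving the scaffold (hence connectivity) at every intermediate step. The natural tool is to set up a transportation problem on the dual graph of the tiling: each tile is a node, each robot defines a unit of ``supply'' at its source tile and ``demand'' at its target tile, and since $d$ bounds the Manhattan distance of every robot, the displacement in the dual graph is $\mathcal{O}(1)$ tiles — so a robot need only traverse a constant number of tile boundaries. The key subroutine is to realize one ``round'' of such tile-to-tile exchanges in makespan $\mathcal{O}(d)$: robots destined for a neighboring tile are gathered onto the shared boundary (using \Cref{lem:interior_boundary_mixing} and \Cref{lem:parallel_boundary_reordering}), shifted across into the neighbor, and absorbed into that neighbor's interior (again \Cref{lem:interior_boundary_mixing}); an edge-coloring of the (bipartite, bounded-degree) dual graph lets non-conflicting exchanges run simultaneously, and a constant number of rounds suffices because every robot's dual-graph travel distance is $\mathcal{O}(1)$. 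After routing, each tile contains exactly the multiset of robots it has in $C_t'$, so \Cref{the:local_od} finishes the job within each tile in makespan $\mathcal{O}(d)$. Reversing the scaffolding phase on the $C_t$ side (which is legal since moves are reversible) then yields $C_t$, and summing the $\mathcal{O}(1)$ phases of makespan $\mathcal{O}(d)$ each gives total makespan $\mathcal{O}(d)$, i.e.\ constant stretch.

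\smallskip
\textbf{The main obstacle} I expect is making the global routing phase genuinely connectivity-preserving and collision-free \emph{while} keeping the makespan $\mathcal{O}(d)$ and not merely $\mathcal{O}(d \cdot (\text{number of rounds}))$ with a super-constant number of rounds. Two subtleties arise. First, the dual graph need not be a nice grid: occupied tiles can form an arbitrary connected polyomino of tiles, and a robot whose source and target tiles are within Manhattan distance $d$ in $G$ might still need to route \emph{around} empty regions if the intervening tiles are unoccupied; one must argue that the scaffold can be temporarily extended along a bounded-length detour (or that the scaled, connected structure rules out long detours), which is where the scale hypothesis and the connectivity of the silhouette must be used carefully. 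Second, one has to bound the ``congestion'': many robots from many tiles may want to cross the same boundary, but a boundary segment holds only $\mathcal{O}(m)=\mathcal{O}(d)$ robots at once, so the gather–shift–absorb subroutine may need to be iterated; showing that $\mathcal{O}(1)$ iterations suffice requires that the number of robots crossing any one tile boundary in the whole schedule is $\mathcal{O}(d^2)$ — which follows because only $\mathcal{O}(1)$ tiles can route through a given tile (bounded displacement) and each contributes $\mathcal{O}(d^2)$ robots — but this accounting is the delicate heart of the argument and is exactly the place where the labeled setting is harder than the unlabeled one of \cite{connected-motion-journal}, since we cannot swap a robot for an equivalent one already present. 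I would isolate this routing subroutine as a separate lemma (a labeled analogue of the unlabeled transport results) and prove \Cref{the:c_lower} as its corollary together with the scaffolding phases above.
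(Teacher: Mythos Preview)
Your high-level architecture (scaffold $\to$ global routing $\to$ local reconfiguration via \Cref{the:local_od} $\to$ deconstruct) matches the paper's. But two of the steps you treat as routine are precisely where the paper spends its effort, and your sketches of them do not go through.

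\textbf{Scaffold construction.} Your claim that ``every non-empty tile has at least $4m-4$ robots'' is false: a non-empty $\Theta(d)\times\Theta(d)$ tile is only guaranteed to contain one $c^*\times c^*$ pixel, i.e.\ $(c^*)^2$ robots, whereas its boundary needs $\Theta(d)$ robots --- and $d$ is unbounded while $c^*$ is constant. The paper handles this by a donor/recipient mechanism: it shows (\Cref{lem:sufficient_material}) that some tile in the $1$-neighborhood over $\scaffold_5$ contains $\Omega(c^2 d)$ robots, enough to build boundaries for itself and all eight neighbors, and then routes material across tiles via a small flow computation. This is also why the paper uses the two-level tiling $\scaffold_1\subset\scaffold_5$: to keep the robots used for a recipient's boundary within a bounded distance of their eventual targets.

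\textbf{Global routing.} Your edge-coloring scheme processes one matching of tile-adjacencies at a time, but this breaks on \emph{circulations}: if tiles $A\to B\to C\to A$ are each full and must cyclically exchange $\Theta(d^2)$ robots, no single edge can be realized first, because the receiving tile has no room to absorb until it has simultaneously emitted. Your gather--shift--absorb primitive assumes the neighbor can absorb, which it cannot. The paper's solution is substantially heavier than what you outline: it decomposes the tile-level flow into an acyclic and a totally cyclic part (\Cref{lem:flow_component_separation}), further partitions each into $\mathcal{O}(d)$ many $d$-subflows (\Cref{lem:circulation_partitioning,lem:supplydemand_partitioning}), and then realizes $d$ subflows at once in $\mathcal{O}(d)$ steps using \emph{push-stable} interior configurations that support simultaneous inflow and outflow along matched, layer-routed paths (\Cref{lem:partial_d_subflows,lem:total_d_subflows}). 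Your ``delicate heart of the argument'' is exactly this, and the edge-coloring idea is not a substitute for it. A separate boundary-flow phase (\Cref{sec:boundary_reordering}) is also needed because the scaffold robots themselves must reach their target tiles without dismantling the scaffold --- a point your outline does not address.
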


\medskip
Note that we do not focus on minimizing the constant $c^*$ but rather argue its existence.
An intuitive motivation of this goal is given by~\Cref{the:sqrt-n-stretch}, which demonstrates that for instances of scale $c=1$, constant stretch schedules may not exist.

Our approach works in different phases.
Rather than proving~\Cref{the:c_lower} directly, we prove that each of the phases can be realized by a schedule of constant stretch.

\subsection{Overview of the algorithm}
\label{subsec:overview}

We now give a high-level outline of the purpose of the different phases; an overview is depicted in~\Cref{fig:overview_approx_algorithm}.
Based on preprocessing steps (\Cref{subsec:preprocessing}) in which we determine the scale~$c$ and the diameter~$d$ of the configurations, we build a tile-based scaffolding structure (\Cref{sec:scaffold_construction}) to
support connectivity during the reconfiguration.
The actual reconfiguration consists of shifting robots between adjacent tiles based on flow computations~(\Cref{subsec:interior-flow,,sec:boundary_reordering}), and reconfiguring tiles in parallel~(\Cref{subsec:scaffold-deconstruction}).
Then, deconstructing the scaffold yields the target configuration.

\begin{figure}[th]
    \centering
    \def\svgwidth{\columnwidth}
    \import{./figures/}{overview_figure.svg.pdf_tex}
    \caption{Overview of our algorithm, consisting of the phases (1) Preprocessing, (2) Scaffold construction, (3) Interior flow computation and realization, (4) Boundary flow computation and realization, (5) Local tile reconfiguration, followed by (6) Scaffold deconstruction (not shown here).}
    \label{fig:overview_approx_algorithm}
\end{figure}

On a technical level, the phases can be summarized as follows.

\begin{description}
    \item[Phase~\num{1} -- Preprocessing:] Compute scale $c$, diameter $d$,
    a tiling~$\scaffold_1$ of the grid into squares of size $cd\times cd$, and
    a larger tiling~$\scaffold_5$ covering all non-empty tiles of~$\scaffold_1$.
    \item[Phase~\num{2} -- Scaffold construction:] Construct a scaffold along the edges of $\scaffold_5$, resulting in a tiled configuration, guaranteeing connectivity during reconfiguration.
    \item[Phase~\num{3} -- Interior flow:]
    Create a flow graph $G_{\scaffold_5}$ to model the movement of interior (non-scaffold)
    robots between adjacent tiles of~$\scaffold_5$.
    Convert the flow into a series of moves, placing all interior robots in their target tiles.
    \item[Phase~\num{4} -- Boundary flow:]
    Analogously to Phase~\num{3}, create a flow for the robots that were used to construct a scaffold in Phase~\num{2}.
    Subsequently, move all of those robots into the boundary of their target tiles.
    \item[Phase~\num{5} -- Local tile reconfiguration:] Locally reconfigure all tiles of the resulting tiled configuration.
    \item[Phase~\num{6} -- Scaffold deconstruction:] Reverse of Phase~\num{2}.
\end{description}

In the remainder of this section we provide details of the different phases.

\subsection{Phase 1: Preprocessing}
\label{subsec:preprocessing}

Consider the start and target configurations $C_s$ and $C_t$.
Let $c$ refer to the minimum of the two configurations' scales and the as-of-now undetermined constant $c^*$, and let~$d$ be the diameter of the pair $(C_s, C_t)$.
Note that, since we assume $c\geq c^*$, this allows us to treat $c$ as a constant from this point onwards.
We say that two configurations \emph{overlap}, if they have at least one occupied position in common.
For the remainder of the paper, without loss of generality, assume that the configurations overlap in at least one position. 
Otherwise, we simply move the target configuration, such that it overlaps with the start configuration. 
This can be done in $\mathcal{O}(d)$ steps, and results in a new diameter $d' \leq 2d$.

We then define a $cd$-tiling of the underlying grid, which we call $\scaffold_1$.
Let $\scaffold_1(C_s)$ and $\scaffold_1(C_t)$ refer to the set of tiles in $\scaffold_1$ that contain robots in $C_s$ and $C_t$, respectively.
The tiling and respective non-empty tiles can be seen in~\Cref{subfig:cd-tiling,,subfig:neighborhood-relationship}.
\begin{figure}[ht]
    \centering
    \begin{subfigure}[b]{0.3\linewidth}
        \def\svgscale{0.71}
\begingroup%
  \makeatletter%
  \providecommand\color[2][]{%
    \errmessage{(Inkscape) Color is used for the text in Inkscape, but the package 'color.sty' is not loaded}%
    \renewcommand\color[2][]{}%
  }%
  \providecommand\transparent[1]{%
    \errmessage{(Inkscape) Transparency is used (non-zero) for the text in Inkscape, but the package 'transparent.sty' is not loaded}%
    \renewcommand\transparent[1]{}%
  }%
  \providecommand\rotatebox[2]{#2}%
  \newcommand*\fsize{\dimexpr\f@size pt\relax}%
  \newcommand*\lineheight[1]{\fontsize{\fsize}{#1\fsize}\selectfont}%
  \ifx\svgwidth\undefined%
    \setlength{\unitlength}{517.5bp}%
    \ifx\svgscale\undefined%
      \relax%
    \else%
      \setlength{\unitlength}{\unitlength * \real{\svgscale}}%
    \fi%
  \else%
    \setlength{\unitlength}{\svgwidth}%
  \fi%
  \global\let\svgwidth\undefined%
  \global\let\svgscale\undefined%
  \makeatother%
  \begin{picture}(1,0.34057971)%
    \lineheight{1}%
    \setlength\tabcolsep{0pt}%
    \put(0,0){\includegraphics[width=\unitlength,page=1]{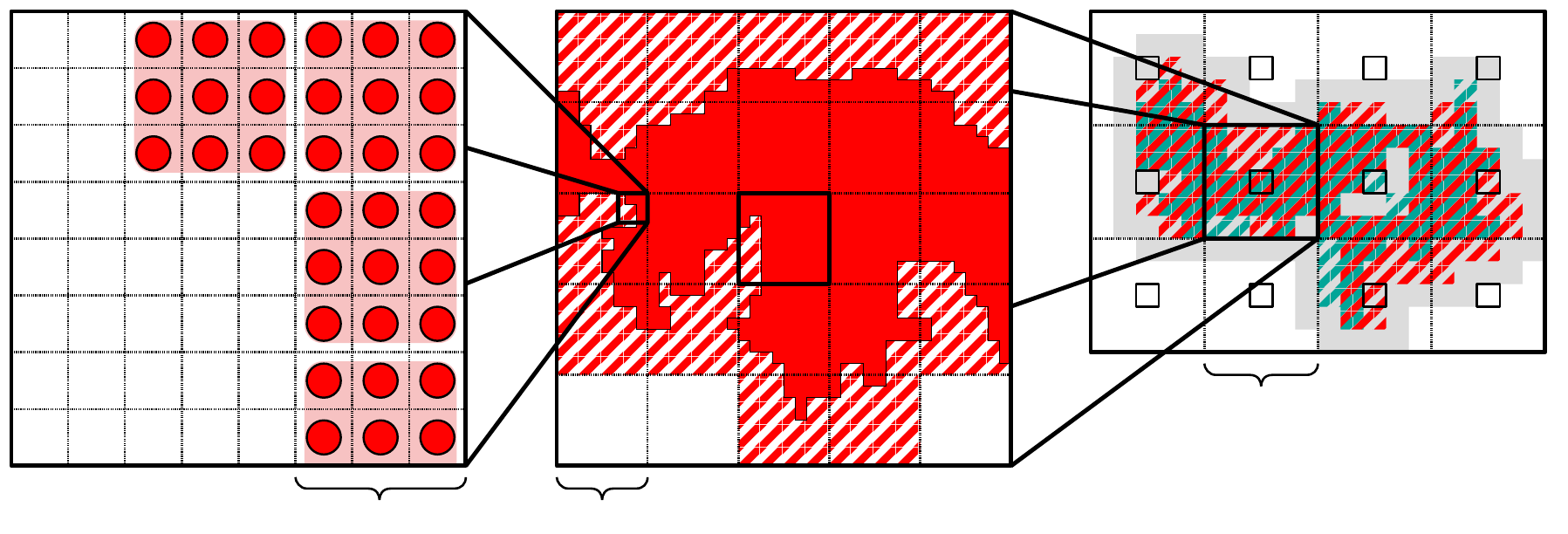}}%
    \put(0.24202899,0.00000001){\makebox(0,0)[t]{\lineheight{1.25}\smash{\begin{tabular}[t]{c}c\end{tabular}}}}%
    \put(0.38405798,0.00000001){\makebox(0,0)[t]{\lineheight{1.25}\smash{\begin{tabular}[t]{c}cd\end{tabular}}}}%
    \put(0.80434783,0.07246376){\makebox(0,0)[t]{\lineheight{1.25}\smash{\begin{tabular}[t]{c}5cd\end{tabular}}}}%
  \end{picture}%
\endgroup%

        \caption{}
    \end{subfigure}\hfil%
    \begin{subfigure}[b]{0.4\linewidth}
        \caption{}
        \label{subfig:cd-tiling}
    \end{subfigure}\hfil%
    \begin{subfigure}[b]{0.3\linewidth}
        \caption{}
        \label{subfig:neighborhood-relationship}
    \end{subfigure}\hfil%
    \caption{
        Using the scale $c$ and diameter of the instance (a), we define the tiling $\scaffold_1$ (b).
        This tiling is further partitioned into $5\times 5$ squares, (c) forming $\scaffold_5$ based on $T_1,\dots,T_\kappa$ (the small squares).
        Those tiles of $\scaffold_5$ which contain the neighbors (in gray) of non-empty start and target tiles (hatched red and green, respectively) of $\scaffold_1$ will now form the basis of our scaffold.
    }
    \label{fig:2-neighborhood_cover}
\end{figure}
Before constructing our scaffold structure, we we introduce a basic tiling, using a finite set of tiles.
As we know that each robot is initially within $d$ units of its target position, we have a particular interest in the tiles of $\scaffold_1(C_s)$ and $\scaffold_1(C_t)$, as well as directly adjacent tiles.
To describe this relationship, let the $k$-\emph{neighborhood} $N_k[\scaffold']$ of a set of tiles refer to all tiles of the same tiling that have~$L_\infty$-distance at most $k$ to any member of $\scaffold'$.
An example of a 1-neighborhood can be seen in~\Cref{subfig:neighborhood-relationship}, where the set $N_1[\scaffold_1(C_s)\cup\scaffold_1(C_t)]$ is marked in gray.
We observe that this is a subset of both $N_2[\scaffold_1(C_s)]$ and
$N_2[\scaffold_1(C_t)]$, because $\scaffold_1(C_s)$ and $\scaffold_1(C_t)$ are fully contained in each other's $1$-neighborhoods.
This relationship is also depicted in~\Cref{subfig:neighborhood-relationship}.

In~\Cref{sec:localreconfiguration}, we discussed previous work that employs a scaffold based on $\mathcal{O}(d)$-tilings to allow for efficient connected reconfiguration of unlabeled robots.
We take a similar approach, effectively drawing robots from either the interior or the $2$-neighborhood of any given tile, in order to construct its boundary.
However, labeled robots have individual target positions that we must take into account when constructing a scaffold.
As a result, we consider an additional, higher-resolution tiling that allows us to ensure that the diameter of the instance does not increase beyond a constant factor  due to scaffold construction.
A detailed description of, and argument for this relationship will be provided in the next section.

Based on the non-empty tiles in $\scaffold_1$, we compute a higher-resolution tiling of the relevant area by a grid of $5\times 5$ squares of $cd$-tiles.
Let $T_1, \dots, T_\kappa$ be a minimal set of tiles from $\scaffold_1$, such that the distance between any two of them is a multiple of $5$ on both the $x$- and $y$-axis, and $N_1[\scaffold_1(C_s)\cup\scaffold_1(C_t)]\subseteq N_2[T_1, \dots, T_\kappa]$.

This directly provides us with the relevant tiles $\scaffold_5 := \{N_2[T_1], \dots, N_2[T_\kappa]\}$, upon which we will base our scaffold and robot motion.
Both $\scaffold_5$ and $T_1, \dots, T_\kappa$ can be seen in~\Cref{subfig:neighborhood-relationship}.
We conclude our preprocessing and move on to the construction of an applicable tiled configuration from $C_s$.

\subsection{Phase 2: Scaffold construction}
\label{sec:scaffold_construction}

Having determined a cover of $C_s$ and $C_t$ in shape of the $5cd$-tiling~$\scaffold_5$, a scaffold spanning this cover is to be constructed.
For this purpose, a robot is to be placed at every boundary position in $\scaffold_5$, which requires $(5\cdot 4cd-4)$ robots per tile.
We show that there are sufficiently many robots to build the scaffold.
We refer to the \emph{locally available material} for a given tile~${T\in\scaffold_5}$ as the set of robots contained within $T$ itself and its immediate neighborhood~$N_1[T]$ over $\scaffold_5$.
As $\scaffold_5$ is a cover of $N_1[\scaffold_1(C_s)\cup\scaffold_1(C_t)]$, we can assume that every tile $T\in\scaffold_5$ contains at least one $T'\in N_2[\scaffold_1(C_s)]$.
We can thus guarantee sufficient locally available material for the boundary of $T$ if we can show that it exists in the $4$-neighborhood of any such~$T'$.

In the context of the scaffold construction, we therefore distinguish \emph{donor} and \emph{recipient}~tiles.
A donor tile contains sufficiently many robots to construct a boundary around itself and all eight immediate neighbors; any other tile is a recipient.
The following lemma shows that every tile of $\scaffold_5$ is either a donor, or an immediate neighbor of an appropriate donor.
These terms will not be used in later sections of the paper, as their concepts are only relevant to this phase of the algorithm.

\begin{lemma}
    \label{lem:sufficient_material}
    There is a constant $c$, such that for all $C_s$ and $C_t$ with scale at least $c$, there is sufficient locally available material to construct a scaffold of~$\scaffold_5$.
\end{lemma}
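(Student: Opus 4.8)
The plan is to show that every tile $T \in \scaffold_5$ is either a donor or adjacent to a donor, and that a donor holds enough robots to build all nine boundaries in its closed neighborhood. Since $\scaffold_5$ covers $N_1[\scaffold_1(C_s)\cup\scaffold_1(C_t)]$, every $T\in\scaffold_5$ contains some $T'\in N_2[\scaffold_1(C_s)]$; because $T'$ has $L_\infty$-distance at most $2$ from a non-empty tile of $\scaffold_1$, there is a non-empty $cd$-tile $T'' \in \scaffold_1(C_s)$ within the $4$-neighborhood of $T'$ (over $\scaffold_1$), hence within the $N_1$-neighborhood of $T$ (over $\scaffold_5$), i.e.\ in the locally available material of $T$. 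A non-empty $cd$-tile of a $c$-scaled configuration contains at least $c^2$ robots, since the tile must be covered by at least one full $c\times c$ block of the silhouette (here one uses $cd/c = d \ge 1$ so a $cd$-tile is large enough to contain a full block — the edge case $d$ small is handled by the overlap/shift preprocessing ensuring $d\ge 1$, or folded into the constant). So the locally available material of $T$ is at least $c^2$.

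Next I would count what is needed. Building the boundary of a single $5cd\times 5cd$ tile costs $5\cdot 4cd - 4 = 20cd - 4$ robots, and a donor must supply boundaries for itself and its eight neighbors, i.e.\ at most $9(20cd-4) < 180cd$ robots; robots already sitting on a shared boundary edge are counted once, but $180cd$ is a safe upper bound. The key inequality is therefore that the locally available material of a donor candidate, which is $\Omega(c^2)$ (indeed a non-empty $cd$-tile already gives $\ge c^2$, and we may even collect the material of several such tiles), dominates $180 c d$ once $c$ is large enough. Here the crucial point is that $d$ is \emph{not} a free parameter relative to $c$ in the sense needed: the material scales as $c^2$ per non-empty $cd$-tile but the demand scales as $cd$, so we need $c^2 \gtrsim cd$, i.e.\ $c \gtrsim d$ — which is false for large $d$. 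The resolution, and the real content of the lemma, is that a non-empty $cd$-tile of a $c$-scaled configuration is not merely $c^2$-full but $\Theta((cd)^2)$-full on the relevant portion: wherever the silhouette meets a $cd$-tile it does so in full $c\times c$ blocks, and a connected configuration that occupies a tile typically occupies a constant fraction of it, or at least a $cd$-tile adjacent to occupied tiles on the connecting path carries $\Omega(cd \cdot c)$ robots. I would formalize this by a charging/connectivity argument: the connected silhouette passing through the $2$-neighborhood of $T'$ must, to reach across, deposit at least one $c$-wide strip of length $\Omega(cd)$ somewhere in the locally available region, yielding $\Omega(c^2 d)$ robots there, which for $d\ge 1$ beats $180cd$ as soon as $c\ge c_0$ for an absolute constant $c_0$.

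I would then set $c^* := \max(c_0, \text{const})$, declare any tile meeting this material bound a donor, observe that every $T\in\scaffold_5$ lies within $N_1$ of such a donor by the neighborhood chase above, and conclude that assigning each recipient to an adjacent donor (breaking ties arbitrarily, with each donor responsible for a bounded number of recipients, namely its $\le 8$ neighbors) covers the entire scaffold demand from locally available material. I expect the main obstacle to be exactly the quantitative step in the previous paragraph: pinning down a clean lower bound on how many robots a $c$-scaled connected configuration must place in the $4$-neighborhood of a tile it "touches," and making the constants line up so that a single absolute $c^*$ works uniformly in $d$ and $n$. The block-alignment property of $c$-scaling (silhouette is a union of aligned $c\times c$ squares) together with connectivity is what makes this go through; everything else is bookkeeping on neighborhoods in $\scaffold_1$ versus $\scaffold_5$.
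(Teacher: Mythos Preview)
Your proposal is correct and follows essentially the same route as the paper: show that every $\scaffold_5$-tile has, in its $1$-neighborhood, a donor holding $\Omega(c^2 d)$ robots (enough to cover the $9(20cd-4)$ boundary demand once $c$ exceeds an absolute constant), where the $\Omega(c^2 d)$ bound comes from the connectivity of a $c$-scaled configuration forcing a $c$-wide strip of length $\Omega(cd)$ through some neighboring $cd$-tile. The paper obtains the explicit bound $\tfrac{c^2 d}{4}$ by citing a lemma from Fekete et al.~\cite{connected-motion-journal} rather than reproving it, and it also disposes of the degenerate case where the whole configuration fits in a single $\scaffold_1$-tile (so no such crossing path exists) by noting that reconfiguration is then trivial; you should add that edge case.
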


\begin{proof}
    Unless $cd \in \Omega(\sqrt{n})$, we can assume that both the start and target configurations consist of more than a single tile in $\scaffold_1$;
    otherwise, the reconfiguration problem is trivial.
    Without loss of generality, we assume that for any given non-empty tile~$T\in\scaffold_1$, there exist robots outside of its $1$-neighborhood in $C_s$.
    This implies the existence of a path of length at least $cd$ that connects a position of $T$ to those robots.
    Such a path has to be contained in a union of fully occupied pixels of size $c\times c$ due to the scale of both configurations.
    We can employ a proof by Fekete et al.~\cite{connected-motion-journal} to derive that at least one neighbor of $T$ must contain no less than $\nicefrac{(c^2 d)}{4}$ robots for $c\geq4$.

    Because $\scaffold_5$ is a cover of the $2$-neighborhood of non-empty tiles over $\scaffold_1$, we extend the above finding to conclude that every tile in~$\scaffold_5$, or one of its neighbors, contains at least $\nicefrac{(c^2 d)}{4}$ robots.
    We additionally note that the respective robots are located within the $4$-neighborhood of each recipient tile over $\scaffold_1$.
    This is of particular relevance, as these robots will remain within the $1$-neighborhood over $\scaffold_5$ for any target configuration.

    A worst-case scenario has a tile donating to all eight of its neighbors, so it must contain at least $9(5\cdot 4cd-4)$ robots.
	We can guarantee sufficient material based on the above relationship, starting from a lower bound of $\nicefrac{(c^2 d)}{4} \geq 9(20cd-4) \Leftrightarrow c \geq 720$.
As this proof is the only portion of our algorithm that hinges on scale, the constant~$c^*$ from~\Cref{the:c_lower} is no greater than $720$. 
\end{proof}

The scaffold construction now takes place in the following three subphases.

\begin{description}
    \item[Phase 2.1:] Parallel construction of all donor boundaries.
    \item[Phase 2.2:] Mapping recipients to donors appropriately.
    \item[Phase 2.3:] Construction of recipient boundaries.
\end{description}

\begin{lemma}
	\label{lem:scaffold_od}
    Constructing the scaffold (i.e., Phase 2 of our approach) takes no more than~$\mathcal{O}(d)$ transformations.
\end{lemma}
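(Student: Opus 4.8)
The plan is to bound the makespan of each of the three subphases separately by $\mathcal{O}(d)$ and then add the constant number of contributions. Throughout, recall that each tile of $\scaffold_5$ is a $5cd \times 5cd$ square, and since $c$ is a constant, $cd \in \mathcal{O}(d)$; hence any schedule whose makespan is a constant multiple of the side length of a tile (or a bounded number of tile diameters) is automatically $\mathcal{O}(d)$. The three subphases and their treatments:

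\emph{Phase 2.1 (parallel construction of donor boundaries).} A donor tile $T\in\scaffold_5$ contains, by \Cref{lem:sufficient_material}, at least $9(5\cdot 4cd-4)$ robots concentrated within its locally available material (in fact within the $4$-neighborhood over $\scaffold_1$, hence within $N_1[T]$ over $\scaffold_5$). I would first invoke \Cref{lem:unlabelled_interior_od} — or rather its silhouette-modifying consequence used in \Cref{sec:localreconfiguration} — to compact these robots into the tile $T$ and lay out the $4\cdot 5cd-4$ boundary positions of $T$ in $\mathcal{O}(cd)=\mathcal{O}(d)$ transformations; since distinct donors draw only on their own $1$-neighborhoods over $\scaffold_5$ and donors are non-adjacent after the mapping of Phase~2.2 is anticipated, we would instead 2-color the donor set (the dual graph of $\scaffold_5$ is bipartite as noted elsewhere) and run the two color classes one after the other, costing only a factor of $2$. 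The key point is that all of this stays inside a bounded neighborhood of each tile, so no robot travels more than $\mathcal{O}(d)$.

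\emph{Phase 2.2 (mapping recipients to donors).} This is purely combinatorial and contributes $0$ to the makespan: using the guarantee from \Cref{lem:sufficient_material} that every recipient tile has an adjacent donor with a surplus of $5\cdot 4cd-4$ robots, compute (e.g.\ greedily, or via a small flow) an assignment of each recipient to a neighboring donor so that no donor is over-subscribed; since each donor can serve all eight of its neighbors this assignment exists. \emph{Phase 2.3 (construction of recipient boundaries).} Each recipient $T$ now receives its $4\cdot 5cd-4$ boundary robots from a single adjacent donor, a transfer along one shared tile edge; this is a translation of a block of robots over a distance of at most two tile side lengths, realizable by a pipeline/shift argument (or again by \Cref{the:rotatesort} on the relevant $\mathcal{O}(d)\times\mathcal{O}(d)$ rectangle spanning the two tiles) in $\mathcal{O}(d)$ transformations. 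Running the recipient constructions in parallel requires only that each donor's eight neighbors take turns using the shared strip; this is a bounded-degree scheduling problem, so a constant number of rounds — each of makespan $\mathcal{O}(d)$ — suffices, and connectivity is maintained because the donor boundaries built in Phase~2.1 already form a connected backbone.

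\emph{Main obstacle.} The delicate part is not the per-operation cost but verifying that connectivity of the whole configuration is preserved \emph{at every transformation}, especially while a donor is shedding robots to several recipients and its interior is being hollowed out; I would handle this by always building and keeping the donor's own boundary first (Phase~2.1) so that a connected scaffold skeleton exists before any inter-tile transfer begins, and by routing recipient-bound robots along already-constructed boundary segments rather than through empty interior. A secondary subtlety is bookkeeping that the robots used here are exactly those identified in \Cref{lem:sufficient_material} as lying within the $1$-neighborhood over $\scaffold_5$, so that the subsequent flow phases (Phases~3–4) still see a diameter that is $\mathcal{O}(d)$; this is why the $5\times 5$ higher-resolution tiling was introduced, and I would cite that structural fact rather than re-derive it.
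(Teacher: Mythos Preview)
Your proposal has a genuine circularity in Phase~2.1: you invoke \Cref{lem:unlabelled_interior_od} (and its silhouette-modifying consequences from \Cref{sec:localreconfiguration}) to build the donor boundaries, but that lemma is stated for \emph{tiled} configurations---it presupposes that the scaffold is already in place. You cannot use a tool that requires the scaffold in order to construct the scaffold. The paper avoids this entirely by using an elementary pushing strategy inside each donor (priority by shortest path to an unoccupied boundary position, push one robot at a time), which relies only on the connectivity of the original configuration and the fact that a donor already contains $\geq 9(20cd-4)$ robots \emph{in its own tile}; no neighborhood borrowing and hence no $2$-coloring is needed at this stage. Relatedly, your Phase~2.3 suggestion of applying \textsc{RotateSort} to the rectangle spanning a donor and a recipient fails for the same reason: that rectangle is not fully occupied (the recipient is nearly empty), and \Cref{the:rotatesort} requires a full $n_1\times n_2$ block.

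The second gap is in Phase~2.2, and it is precisely the labeled-case subtlety that distinguishes this paper from the unlabeled predecessor. You assign each recipient to a donor, but you do not control \emph{which robots} inside the donor are shipped to which recipient. If a robot whose target lies to the north of a donor $D$ is sent to a recipient south of $D$, it can end up two $\scaffold_5$-tiles away from its target tile, breaking the invariant ``every robot lies in the $1$-neighborhood of its target tile'' that Phase~3 depends on. Citing the $5\times 5$ refinement does not fix this: that refinement guarantees enough suitable robots \emph{exist}, but you still have to \emph{select} them. The paper does this by solving a robot-level (not tile-level) flow that explicitly excludes robots on the donor's ``far side,'' and then spends $\mathcal{O}(d)$ transformations---via a careful application of the ideas behind \Cref{the:local_od}, using a temporary second inner boundary since the full tiled-configuration hypotheses are not yet met---to physically gather the selected robots before Phase~2.3 pushes them out (in nine rounds indexed by anchor position $\bmod\ 15cd$). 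Your proposal treats Phase~2.2 as zero-makespan bookkeeping, which misses both the selection mechanism and the $\mathcal{O}(d)$ gathering step.
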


\begin{proof}
We show this by providing an overview of each subphase, and arguing that no subphase takes more than $\mathcal{O}(d)$ transformations.

For \emph{Phase 2.1}, consider a donor tile $T\in\scaffold_5$.
As each donor's interior contains sufficiently many robots to build its own boundary, we can employ a very simple strategy to construct their boundaries.
Within any such tile, we assign every robot a priority based on the length of the shortest path in $C_s$ that connects the robot to another robot that is adjacent to an empty position on the boundary.
Until $T$'s boundary is fully occupied, we repeatedly push along the respective shortest path of a robot with maximal priority in $T$, adding one robot to the scaffold structure, as depicted in~\Cref{fig:scaffold_construction-a}.

\begin{figure}[htb]
	\begin{subfigure}[b]{0.67\linewidth}
		\centering
		\def\svgscale{0.7}
\begingroup%
  \makeatletter%
  \providecommand\color[2][]{%
    \errmessage{(Inkscape) Color is used for the text in Inkscape, but the package 'color.sty' is not loaded}%
    \renewcommand\color[2][]{}%
  }%
  \providecommand\transparent[1]{%
    \errmessage{(Inkscape) Transparency is used (non-zero) for the text in Inkscape, but the package 'transparent.sty' is not loaded}%
    \renewcommand\transparent[1]{}%
  }%
  \providecommand\rotatebox[2]{#2}%
  \newcommand*\fsize{\dimexpr\f@size pt\relax}%
  \newcommand*\lineheight[1]{\fontsize{\fsize}{#1\fsize}\selectfont}%
  \ifx\svgwidth\undefined%
    \setlength{\unitlength}{333.75bp}%
    \ifx\svgscale\undefined%
      \relax%
    \else%
      \setlength{\unitlength}{\unitlength * \real{\svgscale}}%
    \fi%
  \else%
    \setlength{\unitlength}{\svgwidth}%
  \fi%
  \global\let\svgwidth\undefined%
  \global\let\svgscale\undefined%
  \makeatother%
  \begin{picture}(1,0.47191011)%
    \lineheight{1}%
    \setlength\tabcolsep{0pt}%
    \put(0,0){\includegraphics[width=\unitlength,page=1]{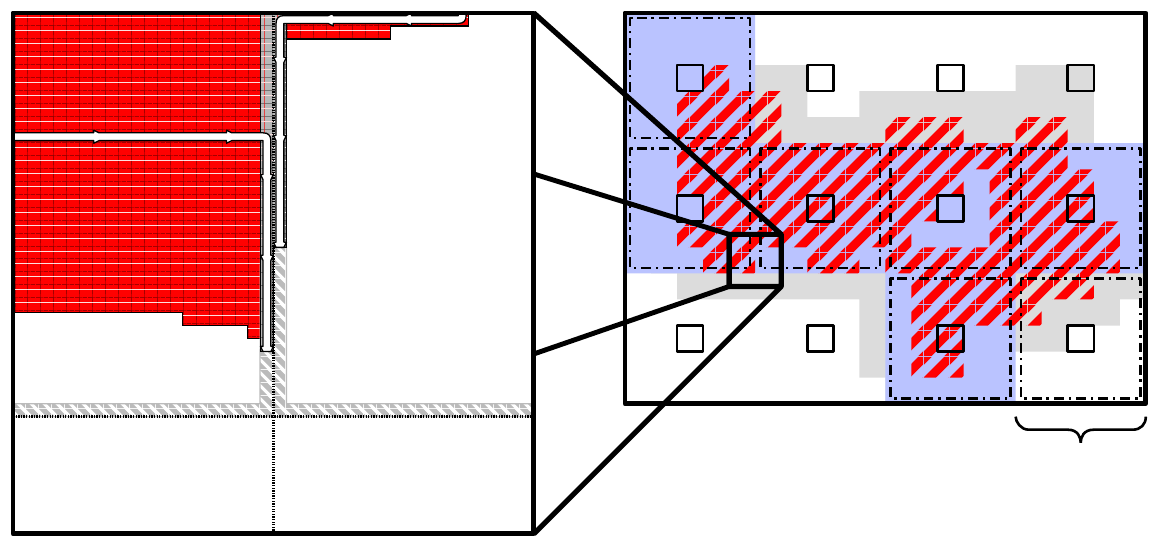}}%
    \put(0.93258422,0.05617983){\makebox(0,0)[t]{\lineheight{1.25}\smash{\begin{tabular}[t]{c}5cd\end{tabular}}}}%
    \put(0,0){\includegraphics[width=\unitlength,page=2]{scaffold_construction_a.svg.pdf}}%
  \end{picture}%
\endgroup%

		\caption{}
		\label{fig:scaffold_construction-a}
	\end{subfigure}
	\begin{subfigure}[b]{0.32\linewidth}
		\centering
		\def\svgscale{0.7}
\begingroup%
  \makeatletter%
  \providecommand\color[2][]{%
    \errmessage{(Inkscape) Color is used for the text in Inkscape, but the package 'color.sty' is not loaded}%
    \renewcommand\color[2][]{}%
  }%
  \providecommand\transparent[1]{%
    \errmessage{(Inkscape) Transparency is used (non-zero) for the text in Inkscape, but the package 'transparent.sty' is not loaded}%
    \renewcommand\transparent[1]{}%
  }%
  \providecommand\rotatebox[2]{#2}%
  \newcommand*\fsize{\dimexpr\f@size pt\relax}%
  \newcommand*\lineheight[1]{\fontsize{\fsize}{#1\fsize}\selectfont}%
  \ifx\svgwidth\undefined%
    \setlength{\unitlength}{157.5bp}%
    \ifx\svgscale\undefined%
      \relax%
    \else%
      \setlength{\unitlength}{\unitlength * \real{\svgscale}}%
    \fi%
  \else%
    \setlength{\unitlength}{\svgwidth}%
  \fi%
  \global\let\svgwidth\undefined%
  \global\let\svgscale\undefined%
  \makeatother%
  \begin{picture}(1,1)%
    \lineheight{1}%
    \setlength\tabcolsep{0pt}%
    \put(0,0){\includegraphics[width=\unitlength,page=1]{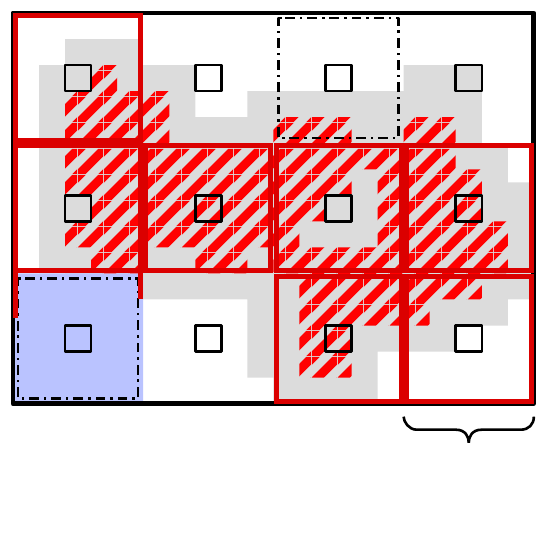}}%
    \put(0.85714292,0.1190476){\makebox(0,0)[t]{\lineheight{1.25}\smash{\begin{tabular}[t]{c}5cd\end{tabular}}}}%
    \put(0,0){\includegraphics[width=\unitlength,page=2]{scaffold_construction_b.svg.pdf}}%
  \end{picture}%
\endgroup%

		\caption{}
		\label{fig:scaffold_construction-b}
	\end{subfigure}
	\caption{
		The first and final subphases of the scaffold construction visualized.
		Tiles with ongoing scaffold construction are highlighted in blue, while dashed boundaries indicate the state of the scaffold after each phase's termination.
	}
	\label{fig:scaffold_construction}
\end{figure}

This approach does not cause disconnections in the configuration, as it already has been shown for the unlabeled problem variant.
Because the scaffold consists of $20cd-4$ robots per tile, Phase~2.1 completes after $\mathcal{O}(d)$ steps.

By~\Cref{lem:sufficient_material}, we already know that an applicable donor exists for each recipient tile.
Therefore, \emph{Phase~2.2} simply focuses on determining an exact mapping between donors and recipients, while ensuring that the correct robots are distributed in the final subphase.
We achieve this by solving a flow problem in an extended version of the dual graph of~$\scaffold_5$, i.e., a graph in which each vertex corresponds to a tile, and two vertices share an edge if the corresponding tiles are horizontally, vertically, or diagonally adjacent, as indicated in~\Cref{fig:scaffold_assignment-a}.
We model the necessary assignment of robots from donors to recipients as a flow network with source vertex $v_{s}$ and sink vertex $v_{t}$.
A maximal flow in this network can be decomposed into a set of (augmenting) paths that connect one robot from a donor tile to one adjacent recipient tile.
This means that each path has the form $(v_s, r, R, v_t)$, with~$r$ being a robot currently located in a donor tile $D\in\scaffold_5$ and a recipient tile $R\in\scaffold_5$.
All edges of type $(v_s, r)$ or $(r, R)$ have capacity one.
Conversely, every edge~$(R, v_t)$ has capacity $20cd-4$, see~\Cref{fig:scaffold_assignment}.

\begin{figure}[h]
	\begin{subfigure}[b]{162bp}
		\def\svgscale{0.9}
		\import{./figures/}{scaffold_4-neighborhood.svg.pdf_tex}
		\caption{}
		\label{fig:scaffold_4-neighborhood}
	\end{subfigure}\hfil%
	\begin{subfigure}[b]{74.25bp}
		\def\svgscale{0.9}
		\import{./figures/}{scaffold_assignment_a.svg.pdf_tex}
		\caption{}
		\label{fig:scaffold_assignment-a}
	\end{subfigure}\hfil%
	\begin{subfigure}[b]{114.75bp}
		\def\svgscale{0.9}
\begingroup%
  \makeatletter%
  \providecommand\color[2][]{%
    \errmessage{(Inkscape) Color is used for the text in Inkscape, but the package 'color.sty' is not loaded}%
    \renewcommand\color[2][]{}%
  }%
  \providecommand\transparent[1]{%
    \errmessage{(Inkscape) Transparency is used (non-zero) for the text in Inkscape, but the package 'transparent.sty' is not loaded}%
    \renewcommand\transparent[1]{}%
  }%
  \providecommand\rotatebox[2]{#2}%
  \newcommand*\fsize{\dimexpr\f@size pt\relax}%
  \newcommand*\lineheight[1]{\fontsize{\fsize}{#1\fsize}\selectfont}%
  \ifx\svgwidth\undefined%
    \setlength{\unitlength}{127.5bp}%
    \ifx\svgscale\undefined%
      \relax%
    \else%
      \setlength{\unitlength}{\unitlength * \real{\svgscale}}%
    \fi%
  \else%
    \setlength{\unitlength}{\svgwidth}%
  \fi%
  \global\let\svgwidth\undefined%
  \global\let\svgscale\undefined%
  \makeatother%
  \begin{picture}(1,0.64705882)%
    \lineheight{1}%
    \setlength\tabcolsep{0pt}%
    \put(0,0){\includegraphics[width=\unitlength,page=1]{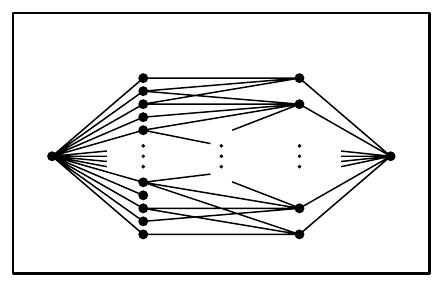}}%
    \put(0.11764707,0.35294117){\color[rgb]{0,0,0}\makebox(0,0)[t]{\lineheight{1.25}\smash{\begin{tabular}[t]{c}$v_s$\end{tabular}}}}%
    \put(0.88235303,0.35294117){\color[rgb]{0,0,0}\makebox(0,0)[t]{\lineheight{1.25}\smash{\begin{tabular}[t]{c}$v_t$\end{tabular}}}}%
    \put(0.32941176,0.52735715){\color[rgb]{0,0,0}\makebox(0,0)[t]{\lineheight{1.25}\smash{\begin{tabular}[t]{c}robots\end{tabular}}}}%
    \put(0.68243913,0.52648783){\color[rgb]{0,0,0}\makebox(0,0)[t]{\lineheight{1.25}\smash{\begin{tabular}[t]{c}$\scaffold_5$\end{tabular}}}}%
  \end{picture}%
\endgroup%

		\caption{}
		\label{fig:scaffold_assignment-b}
	\end{subfigure}
    \caption{
		For the scaffold construction, we may have to move robots to some $T\in\scaffold_1$ that is empty, but in $N_2[\scaffold_1(C_s)]$, see the marked grey tile in (a).
		Sufficient robots exist in $N_2[T']$ (red) for~${T'\in N_2[T]}$, i.e., in $N_4[T]$ (green).
		As robots only need to travel between adjacent tiles of $\scaffold_1$ (b), we can select them by solving a flow problem (c).
    }
    \label{fig:scaffold_assignment}
\end{figure}

Because there is a sufficient number of robots in each recipient tile's vicinity, a maximal flow fully utilizes all edges that connect recipients to the sink vertex $v_t$.
The selection of edges between robots and recipients thus determines an applicable~assignment.
Note that, due to~\Cref{lem:scaffold_od}, we use robots to build a scaffold for a recipient tile from its $4$-neighborhood over $\scaffold_1$ of an applicable donor.
As shown in~\Cref{fig:scaffold_4-neighborhood,fig:scaffold_assignment-a}, this implies that robots from the donor's ``far side'' (consisting of the $\scaffold_1$ tiles where robots may seek to leave into the opposite direction) are not used; otherwise, they would move too far away from their~target~location.

As the construction scheme relies solely on the silhouette of a donating tile, we can easily determine to which adjacent recipient a robot at any given position in a donor tile will be sent.
This includes scaffold locations near the boundary between two tiles that may have been 
occupied by robots with target neighbors opposite to the recipient tile.
We make use of the concepts of~\Cref{the:local_od} to gather specific robots from a donor for each recipient, again taking $\mathcal{O}(d)$ transformations.
Note that, while~\Cref{the:local_od} may not be applicable directly, since~\Cref{lem:parallel_boundary_reordering} requires a valid tiled configuration, we can employ robots from the donors' interior as substitute filling for the necessary sorting area.
In practice, this corresponds to building a second, immediately adjacent boundary inside each donor tile.

\emph{Phase~2.3} can then safely proceed to push robots from a donor tile into the boundary of adjacent recipients, using a variant of the method from Phase~2.1.
We consider $3\cdot 3$ classes of tiles in $\scaffold_5$, based on each tile's anchor position modulo $15cd$ and construct the boundaries of all recipient tiles in a class in parallel, see~\Cref{fig:scaffold_construction-b}. 
For the movement of each robot, we map robots from the interior of a donor tile to a vertex on the respective recipient's boundary.
Based on the previously discussed priority, robots are now pushed along their boundary and onto boundary positions of the adjacent recipient.
Once this process terminates, we have constructed a scaffold structure around all tiles of the cover~$\scaffold_5$.
This takes $20cd-4$ transformations per tile class, so we conclude that~\Cref{lem:scaffold_od} holds.
\end{proof}

With the help of this global support structure, connectivity is ensured during  the actual reconfiguration.
It remains to show how we shift robots between tiles, and reconfigure robot arrangements within the constructed scaffold.
The latter has already been described in~\Cref{sec:localreconfiguration}, so we only need to describe how to relocate robots between tiles.

\subsection{Phase~\num{3}: Interior flow}
\label{subsec:interior-flow}
This is modeled as a supply-demand flow for interior robots in three subphases:

\begin{description}
	\item[Phase 3.1:] Interior flow computation.
	\item[Phase 3.2:] Interior flow partition.
	\item[Phase 3.3:] Interior flow realization.
\end{description}

Note that we focus on the transfer of interior robots in this section.
A similar approach is used to model the flow for boundary robots, see~\Cref{sec:boundary_reordering}.
We proceed by introducing some fundamental terminology relating to the supply-demand flow in question.

We know that, after the scaffold construction completes, any robot is located within the~$1$-neighborhood of the tile containing its target position.
The desired exchange of robots between tiles is represented as a supply-demand flow $G_{\scaffold_5}\coloneqq~(\scaffold_5, E_{\scaffold_5}, \mathit{f}_{\scaffold_5})$ over~$\scaffold_5$, in which every tile is adjacent to those in its 1-neighborhood.
The flow value of an edge $\mathit{f}_{\scaffold_5}(e)$ corresponds to the cardinality of the set of robots that need to move from one tile into another.
For simplicity, we consider the number of robots in the flow model, rather than the specific robots themselves.
Clearly, the flow on any given edge is bounded from above by the interior space of the tiles, i.e., $(5cd-2)^2$ robots.

A tile $T\in \scaffold_5$ is a \emph{source} (\emph{sink}) if and only if the sum of flow values of incoming edges is smaller (larger) than the sum of flow values of outgoing edges.
Otherwise, we call $T$ \emph{flow-conserving}.
The difference of these sums is called the \emph{supply} and \emph{demand} at sources and sinks, respectively.

If the flow value of every edge within a given flow graph is bounded from above by some value $\sigma$, we refer to it as a \emph{$\sigma$-flow}, e.g., $G_{\scaffold_5}$ is a $\mathcal{O}(d^2)$-flow.
Additionally, we define an~$(a,b)$\emph{-partition} of a flow graph as a set that contains $b$ many $a$-flows that sum up to the original flow.

We say that a schedule \emph{realizes} a flow graph $G_{\scaffold_5} = (\scaffold_5, E_{\scaffold_5}, \mathit{f}_{\scaffold_5})$ if the number of robots moved along any directed edge $e\in E_{\scaffold_5}$ is exactly $\mathit{f}_{\scaffold_5}(e)$.
By construction, the realization of~$G_{\scaffold_5}$ never requires us to fill a tile over capacity, or to remove robots from an empty tile, as this would imply an invalid start or target configuration.
\smallskip
\begin{lemma}
	It is possible to efficiently compute a stable schedule of makespan $\mathcal{O}(d)$ that realizes $G_{\scaffold_5}$.
	\label{lem:flow_realization}
\end{lemma}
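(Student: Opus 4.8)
The plan is to realize the flow $G_{\scaffold_5}$ in three steps mirroring Phases~3.1--3.3, using the partition concept to reduce a large flow to many small ones that can be executed in parallel. First I would argue that $G_{\scaffold_5}$ is well-defined and has an integral flow: since after scaffold construction every interior robot sits in the $1$-neighborhood (over $\scaffold_5$) of its target tile, assigning each such robot a unit of flow along one edge of the dual graph of $\scaffold_5$ to its target tile yields a valid $\mathcal{O}(d^2)$-flow whose source/sink imbalances are exactly the net interior-robot surplus/deficit per tile. Because the dual graph of $\scaffold_5$ has bounded degree and the grid is bipartite in a suitable sense, I would then decompose $G_{\scaffold_5}$ into an $(\mathcal{O}(d), \mathcal{O}(d))$-partition: that is, $\mathcal{O}(d)$ layers, each of which is an $\mathcal{O}(d)$-flow, so that in each layer every edge carries at most $\mathcal{O}(d)$ robots. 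Concretely, one can group the $\mathcal{O}(d^2)$ units on each edge into $\mathcal{O}(d)$ bundles of size $\mathcal{O}(d)$, being careful to do this consistently across edges so that each layer is itself a feasible flow (no tile over capacity at the layer level); this is a routine but slightly delicate bookkeeping argument on the flow polytope.

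Next I would show that a single $\mathcal{O}(d)$-flow can be realized in makespan $\mathcal{O}(d)$ while keeping the configuration connected, using the local routines already established. The key observation is that an $\mathcal{O}(d)$-flow only moves $\mathcal{O}(d)$ robots across each tile boundary, and a tile boundary of $\scaffold_5$ has length $\Theta(cd) = \Theta(d)$, so all the robots crossing a given edge in one layer fit inside the shared boundary strip; by further partitioning each layer into $\mathcal{O}(1)$ sublayers indexed by a constant coloring of $\scaffold_5$ (so that no tile simultaneously sends and receives on conflicting sides), we can use \Cref{the:local_od} and the interior/boundary exchange routines (\Cref{lem:interior_boundary_mixing,lem:labeled_interior_od}) to gather the outgoing robots of a tile into the appropriate boundary segment, hand them across to the neighbor, and reabsorb them into the neighbor's interior, all in $\mathcal{O}(d)$ steps, and all while the scaffold keeps everything connected. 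Composing the $\mathcal{O}(d)$ layers naively would give $\mathcal{O}(d^2)$, so the crucial point — and the main obstacle — is to pipeline the layers: robots belonging to later layers should already be in transit while earlier layers finish, so that the total makespan telescopes to $\mathcal{O}(d)$ rather than $\mathcal{O}(d)\cdot\mathcal{O}(d)$.

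I expect the pipelining argument to be the heart of the proof. The idea is that a robot's journey from its start tile to its target tile has length $\mathcal{O}(1)$ in the dual graph (at most a $1$-neighborhood hop, or $\mathcal{O}(1)$ hops after accounting for the overlap shift), so each robot only ever crosses $\mathcal{O}(1)$ tile boundaries; hence the ``depth'' of the flow is constant and the $\mathcal{O}(d)$ layers are needed only for width, not length. This means we do not actually need a deep pipeline: we can process the $\mathcal{O}(d)$ width-layers by first realizing all first-hop moves (a constant number of parallel boundary-exchange rounds, each $\mathcal{O}(d)$, since the layers along a single edge can be merged once we know they all fit in the $\Theta(d)$-length boundary after at most $\mathcal{O}(1)$ rounds of \textsc{RotateSort}-style compaction), then all second-hop moves, and so on. Since there are only $\mathcal{O}(1)$ hops and each hop costs $\mathcal{O}(d)$, the total is $\mathcal{O}(d)$. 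The remaining care is to verify that capacities are never violated during the intermediate rounds — which follows from the observation, already noted in the excerpt, that realizing $G_{\scaffold_5}$ never overfills a tile or empties one below zero, combined with the fact that interior capacity $(5cd-2)^2$ dominates the $\mathcal{O}(d)$ robots in flight at any moment. Polynomial-time computability of the flow, its partition, and the resulting schedule is immediate since all are max-flow and sorting computations on graphs of size polynomial in the input.
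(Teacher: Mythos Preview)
Your high-level structure matches the paper, but the two places where you wave your hands are exactly where the work is, and your pipelining argument in particular has a real gap. You correctly note that realizing $\mathcal{O}(d)$ layers at $\mathcal{O}(d)$ each naively gives $\mathcal{O}(d^2)$, and you correctly note that each robot crosses only $\mathcal{O}(1)$ tile boundaries. But constant depth does not resolve the problem: the bottleneck is \emph{edge congestion}, not path length. The total flow across a single tile boundary can be $\Theta(d^2)$, and those robots do not ``fit in the $\Theta(d)$-length boundary'' in $\mathcal{O}(1)$ rounds of anything. Worse, if you merge all layers, a flow-conserving tile may have $\Theta(d^2)$ robots incoming and $\Theta(d^2)$ outgoing; handling these in separate color-class rounds can overflow the tile, and your ``$\mathcal{O}(d)$ robots in flight'' capacity estimate holds only per layer, not after merging. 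The paper's fix is not pipelining in your sense but \emph{batching with a single setup}: it defines a \emph{push-stable} interior configuration, uses one $\mathcal{O}(d)$ application of \Cref{the:local_od} to queue the outgoing robots for $\ell\le d$ subflows in layered stacks against the relevant boundaries, and then realizes all $\ell$ subflows by $\ell$ single-step simultaneous pushes along matched incoming--outgoing paths through the tile (\Cref{lem:partial_d_subflows,lem:total_d_subflows}), with a separate triangular scheme for total sinks and sources. This one-setup-then-$d$-pushes idea, together with the invariant that every push lands in another push-stable configuration, is what collapses the makespan to $\mathcal{O}(d)$, and it is absent from your plan.

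Secondarily, the $(d,\mathcal{O}(d))$-partition is not ``routine bookkeeping on the flow polytope.'' A unit path or cycle in the decomposition may touch many edges, and naive greedy grouping can fail to keep every edge below $d$ in every group. The paper first makes the flow planar and unidirectional by removing bidirectional and crossing diagonal edges via local swaps, then splits it into a totally cyclic and an acyclic part (\Cref{lem:flow_component_separation}) and partitions each separately: cycles via a geometric peeling that organizes nested loops into a forest and labels by depth mod $kd$ (\Cref{lem:circulation_partitioning}), acyclic paths via a link-distance-sorted greedy assignment with a counting argument bounding the number of groups (\Cref{lem:supplydemand_partitioning}).
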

\smallskip
Having computed the flow graph $G_{\scaffold_5}$ as above, we perform additional preprocessing steps before handling the remaining two subphases, as outlined in the following sections, culminating in a proof of~\Cref{lem:flow_realization}.

\subsubsection{Phase~\num{3.1}: Interior flow preprocessing}
\label{subsec:flow_preprocessing}
Our partition methods require a planar, unidirectional flow network.
We can obtain both properties by eliminating bidirectional and crossing edges in the flow graph~$G_{\scaffold_5}$, see~\Cref{fig:flow_preprocessing} for the underlying idea.
Note that this process may triple the maximum flow value over every edge, resulting in a $3(5cd-2)^2$-flow.

\begin{figure}[h]
	\begin{subfigure}[c]{148bp}
		\centering
		\def\svgscale{0.9}
		\import{./figures/}{crossing_edge_removal.svg.pdf_tex}
		\caption{}
		\label{fig:flow_preprocessing-a}
	\end{subfigure}\hfil%
	\begin{subfigure}[c]{101.25bp}
		\centering
		\def\svgscale{0.9}
		\import{./figures/}{bidirectional_edge_removal.svg.pdf_tex}
		\caption{}
		\label{fig:flow_preprocessing-b}
	\end{subfigure}\hfil%
	\begin{subfigure}[c]{111.375bp}
		\centering
		\def\svgscale{0.9}
		\import{./figures/}{local_swap.svg.pdf_tex}
		\caption{}
		\label{fig:local-swap}
	\end{subfigure}
	\caption{Preprocessing steps remove both crossing and bidirectional edge (a+b), by applying \textsc{RotateSort} locally to swap robots through the boundary (c).}
	\label{fig:flow_preprocessing}
\end{figure}

\paragraph*{Removing crossing diagonal edges}
Crossing edges can only occur between two diagonal edges of adjacent source tiles $u,v\in G_{\scaffold_5}$.
To remove such a crossing, it suffices to eliminate one of the two edges by exchanging robots between $u$ and $v$.
We refer to~\Cref{fig:flow_preprocessing-a} for an illustration.
Let $w$ be a common neighbor of $u$ and $v$, and consider without loss of generality that there is a diagonal edge $(u,w)$ that has to be eliminated.
To this end, we simply exchange robots between the source tiles $u,v$ in such a way that the flow $ \mathit{f}_{\scaffold_5}((u,w))$ is rerouted through $v$.
For this, we exchange the robots in the interior of $u$ that want to end up in $w$ with robots of $v$ that either want to stay in $v$ or has their target location in $w$, too.

This increases the flow on the edges $(u,v)$ and $(v,w)$ by at most~$(5cd-2)^2$.
Because there are only two possible tiles that are diagonal adjacent to $v$ and adjacent to $w$, this can occur only two times.
Note that by removing diagonal edges, no new diagonal edge can~occur.
After the removal of all crossing edges, $G_{\scaffold_5}$ is a $3(5cd-2)^2$-flow.

\paragraph*{Removing bidirectional edges}
A bidirectional edge between two adjacent tiles $u,v\in G_{\scaffold_5}$ can be removed by exchanging robots between $u$ and $v$ that want to switch to the other tile until one of the edges disappears.
See~\Cref{fig:flow_preprocessing-b} for an illustration.

\subsubsection{Phase~\num{3.2}: Interior flow partition}
\label{subsec:flow_partitioning}

Having obtained a planar flow graph, we observe the existence of two types of movement;
tiles may exchange robots in a circular motion to move ``misplaced'' robots into their target tiles, without changing the number of robots in the tiles involved,
or the number of robots in a tile may need to change, as the target configuration requires increased density in its vicinity.

We employ two existing results for flow partitioning that deal with the two cases disjointly.
To this end, we split $G_{\scaffold_5}$ into two distinct subproblems, by partitioning it into two flows $G_\rightarrow$ and~$G_\bigcirc$, one being acyclic and the other totally cyclic, see~\Cref{fig:flow_decomposition}.

\begin{figure}[h]
	\centering
	\def\svgwidth{0.75\columnwidth}
\begingroup%
  \makeatletter%
  \providecommand\color[2][]{%
    \errmessage{(Inkscape) Color is used for the text in Inkscape, but the package 'color.sty' is not loaded}%
    \renewcommand\color[2][]{}%
  }%
  \providecommand\transparent[1]{%
    \errmessage{(Inkscape) Transparency is used (non-zero) for the text in Inkscape, but the package 'transparent.sty' is not loaded}%
    \renewcommand\transparent[1]{}%
  }%
  \providecommand\rotatebox[2]{#2}%
  \newcommand*\fsize{\dimexpr\f@size pt\relax}%
  \newcommand*\lineheight[1]{\fontsize{\fsize}{#1\fsize}\selectfont}%
  \ifx\svgwidth\undefined%
    \setlength{\unitlength}{333.75bp}%
    \ifx\svgscale\undefined%
      \relax%
    \else%
      \setlength{\unitlength}{\unitlength * \real{\svgscale}}%
    \fi%
  \else%
    \setlength{\unitlength}{\svgwidth}%
  \fi%
  \global\let\svgwidth\undefined%
  \global\let\svgscale\undefined%
  \makeatother%
  \begin{picture}(1,0.24719101)%
    \lineheight{1}%
    \setlength\tabcolsep{0pt}%
    \put(2.71786251,0.34761659){\color[rgb]{0,0,0}\makebox(0,0)[lt]{\begin{minipage}{2.12333009\unitlength}\end{minipage}}}%
    \put(0,0){\includegraphics[width=\unitlength,page=1]{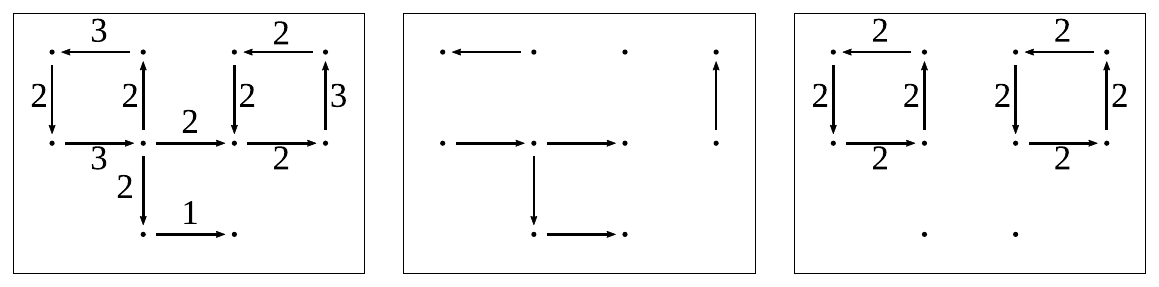}}%
    \put(0.33258427,0.11460675){\color[rgb]{0,0,0}\makebox(0,0)[t]{\lineheight{1.25}\smash{\begin{tabular}[t]{c}=\end{tabular}}}}%
    \put(0.66966292,0.11460675){\color[rgb]{0,0,0}\makebox(0,0)[t]{\lineheight{1.25}\smash{\begin{tabular}[t]{c}+\end{tabular}}}}%
    \put(0,0){\includegraphics[width=\unitlength,page=2]{flow_decomposition.svg.pdf}}%
  \end{picture}%
\endgroup%

	\caption{A decomposition of a flow graph into its acyclic and cyclic components.}
	\label{fig:flow_decomposition}
\end{figure}

\begin{lemma}
	\label{lem:flow_component_separation}
	It is possible to efficiently compute a partition of $G_{\scaffold_5}$ into an acyclic component $G_\rightarrow$ and a totally cyclic component $G_\bigcirc$.
\end{lemma}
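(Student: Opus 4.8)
The plan is to construct the decomposition via a classical flow-decomposition argument, exploiting that $G_{\scaffold_5}$ is a \emph{supply-demand} flow rather than a single-source single-sink flow. First I would recall the standard fact that any nonnegative flow on a directed graph decomposes into a collection of simple source-to-sink paths and simple directed cycles, where the total flow contributed equals $\mathit{f}_{\scaffold_5}$ on every edge. I would obtain such a decomposition greedily: while some edge carries positive flow, either trace a path from a source to a sink (following positive-flow edges, which must terminate at a sink by flow conservation at intermediate vertices), or, if no source currently has outgoing positive flow, trace a cycle (positive out-degree somewhere forces, by pigeonhole on a finite graph, a repeated vertex). Each extraction reduces the flow on at least one edge to zero, so the process terminates after $\mathcal{O}(|E_{\scaffold_5}|)$ iterations, which is polynomial since $|\scaffold_5| = \kappa = \mathcal{O}(n/(cd)^2)$.

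Having this decomposition into path-flows $P_1,\dots,P_a$ and cycle-flows $Z_1,\dots,Z_b$, I would set $G_\rightarrow \coloneqq \sum_i P_i$ and $G_\bigcirc \coloneqq \sum_j Z_j$, so that $G_\rightarrow + G_\bigcirc = G_{\scaffold_5}$ by construction. It remains to verify the two structural properties. For $G_\rightarrow$: since each source has a fixed net supply and each sink a fixed net demand determined entirely by $G_{\scaffold_5}$ (the cycle components are flow-conserving everywhere and contribute nothing to these net values), the union of path-flows has exactly the same supplies and demands as $G_{\scaffold_5}$ and is a sum of acyclic path-flows; however $G_\rightarrow$ need not itself be acyclic as a multiset of edges. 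Here I would either (i) further cancel any residual cycle appearing in $G_\rightarrow$ by pushing it into $G_\bigcirc$ — repeatedly: if $G_\rightarrow$ contains a directed cycle, subtract the minimum edge-value around it from $G_\rightarrow$ and add it to $G_\bigcirc$, which strictly decreases the support of $G_\rightarrow$ and terminates — or (ii) appeal directly to the fact that a flow with prescribed supplies/demands always admits an acyclic realization (any cycle can be removed without changing the divergence at any vertex). For $G_\bigcirc$: every vertex has in-value equal to out-value (it is a sum of cycles, hence flow-conserving everywhere), so it has no sources or sinks — it is \emph{totally cyclic} in the sense used later, meaning it is a nonnegative circulation, and decomposes entirely into directed cycles.

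The main obstacle I anticipate is not the existence of a path/cycle decomposition — that is textbook — but rather making sure the two claimed properties hold \emph{simultaneously} with the exact partition that the later phases need: the acyclic part must genuinely carry all the supply-demand imbalance and nothing cyclic, and the cyclic part must be a pure circulation with bounded edge-values so that Phase~3.2's two downstream routines (one for acyclic, one for totally cyclic flows) apply. I would therefore be careful to state the cancellation step in (i) precisely and to track that neither $G_\rightarrow$ nor $G_\bigcirc$ exceeds the $3(5cd-2)^2$ bound established after preprocessing — in fact each is pointwise bounded by $G_{\scaffold_5}$, so this is immediate. Planarity and unidirectionality are inherited from $G_{\scaffold_5}$ since we only ever decrease edge-values and never introduce reverse edges. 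Finally, I would remark that the whole procedure runs in time polynomial in $|\scaffold_5|$ and in the maximum flow value, which is $\mathcal{O}(d^2)$, hence polynomial in the input size.
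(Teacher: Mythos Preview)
Your proposal is correct and is precisely the standard flow-decomposition/cycle-cancellation argument; the paper itself does not give a proof at all but simply cites this as a standard result from network-flow textbooks (Korte--Vygen, Ford--Fulkerson). Your write-up is thus more detailed than what the paper provides, and the cancellation step~(i) you describe is exactly what makes the acyclic part genuinely acyclic while keeping both parts pointwise bounded by $G_{\scaffold_5}$.
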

\smallskip

This is a standard result from the theory of network flows; e.g., see the books by Korte and Vygen~\cite{korte2011combinatorial} or by Ford and Fulkerson~\cite{ford1962flows}.

\paragraph*{Partitioning the cyclic flow}
For the case of configurations that do not have to be connected, Demaine et
al.~\cite{dfk+-cmprs-19} considered tiles of side length $24d$. 
Thus, they obtained a totally cyclic flow graph $G_\bigcirc$ with an upper bound of $24 d \cdot 24d = 576d^2$ for the flow value of each edge. 
Furthermore, they showed that it is possible to compute a $(d,\mathcal{O}(d))$-partition of~$G_\bigcirc$.
In our case, we have to keep configurations connected, resulting in tiles of side length $c d$.
We extend their peeling algorithm to a broader, parameterized version.

\smallskip
\begin{lemma}\label{lem:circulation_partitioning}
	A $(d,\mathcal{O}(d))$-partition of the totally cyclic $(k\cdot d^2)$-flow~$G_\bigcirc$ for~$k\in\mathbb{N}$ into totally cyclic flows can be computed efficiently.
\end{lemma}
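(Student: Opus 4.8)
The plan is to realise the ``parameterised peeling'' announced above by slicing a potential function in a periodic fashion rather than contour by contour: a contour-by-contour peeling produces one circulation per height level and hence a number of pieces that grows with the (unbounded) range of the potential, whereas periodic slicing produces exactly $kd=\mathcal{O}(d)$ of them.

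First I would exploit planarity, which is exactly what \Cref{subsec:flow_preprocessing} was set up to provide: $G_\bigcirc$ is a planar, unidirectional, totally cyclic flow, so I fix a plane embedding (treating each connected component of the underlying graph separately). A totally cyclic flow on a plane graph lies in its cycle space and therefore admits an integral \emph{potential} $h$ on the faces with $g(e)=h(F_\ell(e))-h(F_r(e))$ for every edge $e$, where $g$ denotes the flow value and $F_\ell(e),F_r(e)$ the two faces incident to $e$; orienting each edge along the flow so that the larger-potential face lies on a fixed side, non-negativity of $g$ becomes monotonicity of $h$ along edges, and the $(k d^2)$-bound becomes $0\le h(F_\ell(e))-h(F_r(e))\le k d^2$ on every edge. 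Normalising $\min_F h(F)=0$, this potential is obtained in linear time by fixing the outer face and propagating across edges.

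Next I would slice. For $j=1,\dots,kd$ I define the clamped potential $h_j(F)\coloneqq\ceil{(h(F)-j+1)/(kd)}$, which equals the number of thresholds among $j,j+kd,j+2kd,\dots$ attained by $h(F)$. Since every positive integer is exactly one such threshold, $\sum_{j=1}^{kd}h_j(F)=h(F)$ for every face (equivalently, the elementary identity $\sum_{j=1}^{kd}\ceil{(x-j+1)/(kd)}=x$ for integral $x$). Monotonicity of the ceiling keeps each $h_j$ monotone along edges, and subadditivity $\ceil{a}-\ceil{b}\le\ceil{a-b}$ gives $0\le h_j(F_\ell(e))-h_j(F_r(e))\le\ceil{g(e)/(kd)}\le d$ on every edge. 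Reading each $h_j$ back as $g_j(e)\coloneqq h_j(F_\ell(e))-h_j(F_r(e))$ yields an integral edge function that, being a potential difference, automatically conserves flow at every vertex; it is non-negative (hence a totally cyclic flow), bounded by $d$ on every edge, and satisfies $\sum_{j=1}^{kd}g_j=g$. Discarding identically-zero pieces, the resulting family is the desired $(d,\mathcal{O}(d))$-partition of $G_\bigcirc$ into totally cyclic flows (here $k$ is fixed, so $kd=\mathcal{O}(d)$), and every step --- face potential, the $kd$ clamped potentials, and the conversion --- runs in polynomial time.

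The step needing the most care is the correctness of the periodic slicing, i.e.\ that the $kd$ pieces really sum back to $g$ and that each is a legitimate circulation of width at most $d$: this rests on the telescoping identity above together with ceiling subadditivity, and, crucially, on the planarity and unidirectionality inherited from \Cref{subsec:flow_preprocessing}, without which the face-potential representation would not exist. Notably, no bound on the range of $h$ is used anywhere, which is precisely what keeps the number of pieces at $\mathcal{O}(d)$ independently of $n$ and of the global structure of the flow.
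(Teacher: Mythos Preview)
Your argument is correct and takes a genuinely different route from the paper. The paper proceeds combinatorially: it first decomposes $G_\bigcirc$ into simple unit cycles by a geometric peeling procedure, separates these by orientation into four families in which any two cycles are either edge-disjoint or nested, builds the nesting forest on each family, and finally groups cycles by their depth modulo $kd$; because the $kd^2$ edge bound limits to $kd^2$ the length of any root-to-leaf chain of cycles sharing a fixed edge, each of the resulting $4kd=\mathcal{O}(d)$ groups is a $d$-flow. Your approach is the algebraic dual of this picture: the face potential $h$ is essentially the nesting depth (its level curves are precisely the peeled contours), and the periodic ceiling slicing is the analytic counterpart of the depth-mod-$kd$ bucketing. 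What you gain is directness---no orientation split into clockwise and counterclockwise classes, no explicit cycle enumeration, no forest construction---and a compact correctness argument resting on one telescoping identity and one ceiling inequality. What the paper's version buys is a more geometric, self-contained presentation that plugs directly into the peeling framework of Demaine et~al.\ and does not ask the reader to accept the face-potential representation of planar circulations as a prerequisite.
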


\begin{proof}
	Consider a tiled configuration $C'$ of scale $c$ and let $n$ denote the total number of robots.
	The following method yield a $(d,\mathcal{O}(d))$-partition of $G_\bigcirc$.
	
	We start with computing a $(1,h)$-partition $\mathbb{P}_\bigcirc$ of $G_\bigcirc$.
	The number of flows $h$ is bounded from above by the total number of robots $n$, as each robot may only
	contribute to a single cycle.
	If any cycle intersects itself, it is subdivided into smaller, non-intersecting cycles.
	Subsequently, $\mathbb{P}_\bigcirc$ can be divided into two classes of cycles based on cycle orientation.
	Let~$\mathbb{P}_\circlearrowright$ refer to the clockwise cycles, and $\mathbb{P}_\circlearrowleft$ to the
	counterclockwise cycles of $\mathbb{P}_\bigcirc$.
	
	Subsequently, we compute a $(1, h)$-partition using the sets $\mathbb{P}_\circlearrowright$ and $\mathbb{P}_\circlearrowleft$.
	The resulting subsets $\mathbb{P}_\circlearrowright^1 \cup \mathbb{P}_\circlearrowright^2 \cup \mathbb{P}_\circlearrowleft^1 \cup \mathbb{P}_\circlearrowleft^2$ are constructed in a fashion such that two cycles $u$ and $v$ from the same subset share the same orientation and are either edge-disjoint or nested within another, where we write $u \sqsubseteq v$ if $v$ contains $u$.
	
	Such a partition may be constructed using the geometric peeling algorithm, depicted in~\Cref{fig:peeling_approach}.
	Their approach handles each $\mathbb{P}_X\in\{\mathbb{P}_\circlearrowright, \mathbb{P}_\circlearrowleft\}$
	separately.
	Considering the union of inner-bound areas $A$ of all cycles in $\mathbb{P}_X$, a 1-circulation around the outer edges of each component of $A$ is identified, removed from $\mathbb{P}_X$, and added to the set~$\mathbb{P}_X^1$.
	Simultaneously, inversely oriented 1-circulations around any holes within~$A$ are removed from $\mathbb{P}_X$, and added to the set $\mathbb{P}_X^2$, see~\Cref{fig:geometric_peeling}.
	This process is repeated until $\mathbb{P}_X$ is fully decomposed into the two sets $\mathbb{P}_X^1$ and $\mathbb{P}_X^2$.
	
	\begin{figure}[htb]
		\centering
		\begin{subfigure}[c]{\linewidth}
			\centering
			\def\svgwidth{0.9\columnwidth}
			\import{./figures/}{peeling_algorithm.svg.pdf_tex}
			\caption{The area that is considered during each iteration of the peeling algorithm.}
			\label{fig:geometric_peeling}
		\end{subfigure}
		\begin{subfigure}[c]{\linewidth}
			\centering
			\def\svgwidth{0.9\columnwidth}
			\import{./figures/}{peeling_algorithm_tree.svg.pdf_tex}
			\caption{The tree induced by the nesting properties, grouped by depth.}
			\label{fig:geometric_peeling_tree}
		\end{subfigure}
		\caption[Geometric Peeling Algorithm]{Iterations of the peeling algorithm. The outer loops that go
			into a set $\mathbb{P}_X^1$ are marked red and the inner boundary loops of set $\mathbb{P}_X^2$ are marked in
			blue.}
		\label{fig:peeling_approach}
	\end{figure}
	
	Finally, each $\mathbb{P}_X^j\in \{\mathbb{P}_\circlearrowright^1, \mathbb{P}_\circlearrowright^2, \mathbb{P}_\circlearrowleft^1, \mathbb{P}_\circlearrowleft^2\}$ is now partitioned into $\mathcal{O}(d)$ sets that form one $d$-subflow of $G_\bigcirc$ each.
	
	As previously noted, any pair of cycles in $\mathbb{P}_X^j$ is either edge-disjoint, or one of the cycles lies within the other.
	This property induces a forest $F = (\mathbb{P}_X^j, E_F)$, as depicted in~\Cref{fig:geometric_peeling_tree}, in which one cycle $u$ is a child of another cycle $v$ exactly if $u\sqsubseteq v$ and there is no other cycle~$w$ such that $u\sqsubseteq w\sqsubseteq v$.
	
	Given such a forest, every vertex is labeled by its depth in $F \bmod kd$.
	These labels are then used to construct $\mathcal{O}(d)$ subflows $\{G_{\bigcirc_1},G_{\bigcirc_2}, \dots \}$, each
	$G_{\bigcirc_i}$ being the union of cycles with label $i$.
	
	It only remains to be shown that every flow $G_{\bigcirc_i}$, constructed by this algorithm, is a $d$-subflow of $G_\bigcirc$.
	For this, consider any $u,v\in \mathbb{P}_X^j$ such that the two cycles share a common edge $e$.
	By construction, one of the two cycles must lie within the other, so without loss of generality, assume that $u\sqsubseteq v$.
	This implies that there exists a path from $u$ to its root via $v$ in $F$, with all cycles on the path between $u$ and $v$ sharing the edge $e$ as well.
	As $G_\bigcirc$ has all edges bounded from above by $k\cdot d^2$, the cycles containing~$e$ lie on a path of length no more than $k\cdot d^2$ in $F$.
	Thus, $e$ has a weight of at most $\nicefrac{(k\cdot d^2)}{(k\cdot d)} = d$ in every $G_{\bigcirc_i}$, meaning $G_{\bigcirc_i}$ is a $d$-flow.
\end{proof}

Applying this algorithm to $G_\bigcirc$ yields a $(d, \ell)$-partition for $\ell < 75c^2\cdot d = \mathcal{O}(d)$.
As~the elements of such a partition of $G_\bigcirc$ add up to $G_\bigcirc$ itself, its realization may now be reduced to the realization of all $d$-subflows.

\paragraph*{Partitioning the acyclic flow}

We already computed a $(d,\mathcal{O}(d))$-partition of $G_\bigcirc$. 
As $G_{\scaffold_5} = G_\bigcirc + G_\rightarrow$, we still need to compute a $(d,\mathcal{O}(d))$-partition of $G_\rightarrow$ to obtain a $(d,\mathcal{O}(d))$-partition of the entire flow $G_{\scaffold_5}$.
In the context of  unlabeled robots, Fekete et al.~\cite{connected-motion-journal} proposed an algorithm for computing a $(\mathcal{O}(d^2), 28)$-partition of $G_\rightarrow$.
Due to the much more complex situation of labeled robots, the underlying ideas need to be refined to provide an algorithm that guarantees the following.

\smallskip
\begin{lemma}\label{lem:supplydemand_partitioning}
	A $(d,\mathcal{O}(d))$-partition of the acyclic $(k\cdot d^2)$-flow~$G_\rightarrow$ for $k\in\mathbb{N}$ into acyclic flows can be computed efficiently.
\end{lemma}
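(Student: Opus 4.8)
The plan is to adapt the peeling idea behind Lemma \ref{lem:circulation_partitioning} to the acyclic case, replacing "nesting of cycles" with a layering by distance in a suitable auxiliary graph. First I would take the acyclic $(k\cdot d^2)$-flow $G_\rightarrow$ over $\scaffold_5$ and decompose it into a multiset of directed source-to-sink paths; since $G_\rightarrow$ is planar (by the Phase 3.1 preprocessing) and acyclic, and since each robot contributes to exactly one path, there are at most $n$ such paths, giving a $(1,h)$-partition with $h\le n$. The key structural observation to establish is that, because $G_\rightarrow$ is planar and acyclic, we can choose a planar embedding and a consistent orientation so that the sources and sinks lie in a controlled arrangement; in particular, I would argue that there is a topological order on the tiles compatible with $G_\rightarrow$ such that every edge $(T,T')$ of $G_\rightarrow$ satisfies $\mathrm{level}(T') = \mathrm{level}(T)+1$ after subdividing if necessary, or more realistically, that the "depth" of a tile (length of the longest directed path from a source ending at that tile, within the relevant bounded region) is well defined and bounded.

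Next, mirroring the circulation argument, I would label each path — or rather each edge — by a quantity that (i) is monotone along the support of the flow and (ii) changes by a bounded amount, then group by that label modulo $kd$ to obtain $\mathcal{O}(d)$ classes. Concretely, for each edge $e$ I would bound the number of distinct paths through $e$ that get separated into the same residue class: since the flow value on $e$ is at most $k\cdot d^2$, and the paths through $e$ that share a residue class must differ in the labeling parameter by multiples of $kd$, an argument analogous to the "path of length $\le k d^2$ in the forest $F$" step shows each residue class puts weight at most $\nicefrac{(k d^2)}{(k d)} = d$ on $e$. This yields that each of the $\mathcal{O}(d)$ resulting flows is a $d$-flow, and since they partition $G_\rightarrow$ by construction, we obtain the desired $(d,\mathcal{O}(d))$-partition; each piece is a subflow of an acyclic flow and hence acyclic. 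Finally, all of these steps — path decomposition, planarity-respecting labeling, grouping by residue — are polynomial-time, giving the efficiency claim.

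The main obstacle I anticipate is defining the labeling so that it is simultaneously \emph{monotone along every path} (so that grouping by residue genuinely caps the per-edge contribution) and \emph{bounded in total range by $\mathcal{O}(d^2)$} on the region spanned by $G_\rightarrow$. In the cyclic case nesting gave a clean forest and depth; in the acyclic case the flow may have a complicated branching structure with many sources and sinks, and a naive "distance from nearest source" need not be monotone along edges shared by flows from different sources. I expect the resolution is to exploit that supply and demand in $G_{\scaffold_5}$ arise from density changes of a $c$-scaled configuration with diameter $d$, so the total supply routed across any cut is $\mathcal{O}(d^2)$ per unit length of the cut, and the "span" of $G_\rightarrow$ in the direction of flow is $\mathcal{O}(d)$ tiles — hence any reasonable potential function has range $\mathcal{O}(d^2)$ on the flow's support. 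Making this precise, and ruling out pathological branchings where a single edge is crossed by $\omega(d^2)$ worth of flow with incompatible labels, is where the real work lies; the cleanest route is probably to peel off maximal "anti-chains" of edges (a directed analogue of the outer-boundary circulations) one layer at a time, charging each layer to a unit of a global potential bounded by $\mathcal{O}(d^2)$.
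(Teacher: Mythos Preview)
Your proposal has a genuine gap at exactly the point you yourself flag: you never actually define the labeling. The entire argument hinges on producing a function on edges (or paths) that is monotone along every source--sink path and has total range $\mathcal{O}(d^2)$, and you only offer heuristics (``distance from nearest source'', a density-based potential, peeling anti-chains) without verifying that any of them has the monotonicity property. The difficulty you identify is real: when several paths from different sources share an edge $e$, a potential tied to one source need not be monotone for the others, so the residue-class argument can collapse. Your closing suggestion of peeling anti-chains is promising in spirit but is not an argument --- you would still need to show that each peel removes at most a bounded amount from every edge and that $\mathcal{O}(d^2)$ peels suffice, neither of which you establish.

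The paper sidesteps the labeling problem entirely with a much more combinatorial device. After the path decomposition, it picks a root in each connected component of the support, sorts the paths by \emph{link distance} (graph distance from the path to the root), and then greedily assigns each path $P_i$ to a set $S_j$ in which no earlier path uses the \emph{first edge} $e_1$ of $P_i$. The key structural fact is local, not global: because the underlying graph is a grid (with diagonals removed in preprocessing), the in-degree at the head of any edge is at most three, and this together with the link-distance ordering forces every edge to occur in at most three paths of any single set $S_j$. Grouping the sets in batches of $d/3$ then yields $d$-subflows, and a simple counting argument against the $3(5cd-2)^2$ edge-capacity bound shows there are only $\mathcal{O}(d)$ batches. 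No global potential, no monotone labeling, no peeling --- just a greedy first-edge conflict rule plus a degree bound. This is the idea your proposal is missing.
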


\begin{proof}
	It is possible to compute a partition of $G_\rightarrow = ( \scaffold_5, E, \supplydemandflow)$ into $1$-subflows, each being a path that connects a supply vertex to a demand vertex, simply by performing a decomposition of the flow.
	As $G_\rightarrow$ is both planar and unidirectional, this means it may be viewed as a directed forest, with each component $A$ containing a subset of the paths.
	The following process is then applied to each tree $A\subseteq G_\rightarrow$ separately.
	
	First, choose an arbitrary vertex of $A$ as the root of the component.
	For any path~$P_i$ in $A$, its \emph{link distance} refers to the minimal length of any path connecting a vertex of~$P_i$ to the root of $A$.
	Consider $(P_1, P_2, \dots)$ as sorted by link distance.
	Every path~$P_i$ is now greedily assigned to a set $S_j$, such that the first edge $e_1$ of $P_i$ is not part of any other path in $S_j$.
	These sets are shared over the components of $G_\rightarrow$, and if no matching set exists for a path~$P_i$, a new set $S_j := \{P_i\}$ is created.
	As the maximum incoming degree of the head vertex of a directed edge is three in the setting of a grid graph, each edge of the graph may be contained in at most three paths of each set.
	
	Once these sets are fully constructed, they are greedily partitioned into groups of $\nicefrac{d}{3}$ sets $\{G_{\rightarrow_1}, G_{\rightarrow_2}, \dots\}$.
	For each group $G_{\rightarrow_i}$, a subflow $\mathit{f}_{i}$ is created that maps an edge to the number of paths in $G_{\rightarrow_i}$ that contain it.
	Because every edge can appear no more than three times within a set of paths $S_j$ and each group contains $\nicefrac{d}{3}$ sets, each $f_i$ is a $(\nicefrac{d}{3}\cdot 3) = d$-subflow of $G_\rightarrow$.
	
	It only remains to show that the constructed partition $( G_{\rightarrow_1}, G_{\rightarrow_2}, \dots )$ contains at most~${75c^2 d = \mathcal{O}(d)}$ subflows.
	As $G_{\scaffold_5}$ is initially bounded from above by the interior space of a tile, the substitution
	of diagonal edges leaves $G_\rightarrow$ bounded from below by~$3(5cd-2)^2$.
	Suppose the described approach constructs $75c^2 d$ many flows.
	This implies that the algorithm encountered a state with $\lambda = 75c^2 d \cdot \nicefrac{d}{3} - 1$ existing sets $S_1,\dots, S_\lambda$ and the current path~$P_i$ had to be assigned to a new set $S_{\lambda+1}$.
	We conclude that each set $S_1, \dots, S_\lambda$ contained a path $P_j$ that included the first edge $e_1$ of $P_i$, meaning that there were at least ${75c^2 d\cdot d = 3(5cd)^2 > 3(5cd -2)^2}$ many paths that contained $e_1$, which contradicts~our~premise.
\end{proof}

\subsubsection{Phase~\num{3.3}: Interior flow realization}
\label{sec:subflow_realization}

In order to exchange robots between tile interiors as modeled by the flow $G_{\scaffold_5}$, we have to determine a collision-free protocol that allows robots to pass through the scaffold and into adjacent tiles.
To this end, we describe a set of movement patterns for the robots of a single tile;
these realize a single $d$-subflow in a stable manner within $\mathcal{O}(d)$ steps.
We then derive a compact concatenation of schedules that realize a $d$-subflow each, finally realizing up to $d$ such flows by a schedule of makespan $\mathcal{O}(d)$.

\paragraph*{An invariant interior configuration}
For the definition of invariant configurations, we consider a decomposition of $T$ into \emph{layers}, where the $i$th layer is the boundary of $T$ after iteratively deleting the layers $1,\ldots,i-1$.
Thus, the first layer is directly adjacent to the scaffold.

We say that the configuration of $T\in\scaffold_5$ is \emph{push-stable} (with respect to a flow $G_{\scaffold_5}$), if the following criteria are met for every robot $r$ on an $i$th layer of the interior of $T$.

\begin{description}
	\item[$i \leq d$:] $r$ is connected to the closest boundary robot by a straight line of robots.
	\item[$i > d$:] $r$ is connected to a robot $r'$ on layer $d$ by a path of
	robots in higher-order layers than~$d$, such that the closest side to $r'$ of the
	boundary of $T$ rests on an edge without outgoing flow.  
\end{description}

\begin{figure}[h]
	\centering
	\begin{subfigure}[b]{0.33\linewidth}
	\centering
	\def\svgwidth{0.9\columnwidth}
	\import{./figures/}{hull_creation_a.svg.pdf_tex}
	\caption{}
	\label{fig:hull_creation-a}
\end{subfigure}\hfil
\begin{subfigure}[b]{0.33\linewidth}
	\centering
	\def\svgwidth{0.9\columnwidth}
	\import{./figures/}{hull_creation_b.svg.pdf_tex}
	\caption{}
	\label{fig:hull_creation-b}
\end{subfigure}\hfil
\begin{subfigure}[b]{0.33\linewidth}
	\centering
	\def\svgwidth{0.9\columnwidth}
	\import{./figures/}{hull_creation_c.svg.pdf_tex}
	\caption{}
	\label{fig:hull_creation-c}
\end{subfigure}\hfil
	\caption{(a) An example of a push-stable configuration of a tile with two incoming and two outgoing edges. 
	(b) Highlighted are a robot on layer $d$ and (c) on a higher-order layer, with paths that ensure their connectivity.}
	\label{fig:hull_creation}
\end{figure}

A \emph{total sink} (\emph{total source}) is a tile $T$ that has four incoming (outgoing) edges of non-zero value over $G_{\scaffold_5}$.
Conversely, a \emph{partial sink} (\emph{partial source}) is a tile $T$ that is not flow-conserving over
$G_{\scaffold_5}$, but has no more than three incoming (outgoing) edges of non-zero value.
Note that by definition, total sources can never be configured in a push-stable manner.
As this creates the need to handle total sinks separately, we provide a separate approach later on in this section.
Note that, although this constraint does not apply to total sources, we can apply the outlined methods for total sinks in reverse, and therefore refer to the same section.

\paragraph*{Realizing a single subflow}
We now provide the detailed description of our approach to realize a single subflow.
The approach consists of the following three subphases.

\begin{description}
	\item[Phase~3.3.1:] Interior preprocessing.
	\item[Phase~3.3.2:] Matching incoming and outgoing robots.
	\item[Phase~3.3.3:] Pushing robots into their target tiles.
\end{description}

\begin{lemma}
	Consider a $d$-subflow~$H_{\scaffold_5}\subseteq G_{\scaffold_5}$ and a tile $T\in\scaffold_5$ that is flow-conserving with respect to $H_{\scaffold_5}$.
	There is a stable schedule of makespan $\mathcal{O}(d)$ that realizes the flow at~its~location.
\end{lemma}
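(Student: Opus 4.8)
The plan is to carry out the three subphases 3.3.1--3.3.3 in order, each within $\mathcal{O}(d)$ transformations, so that their concatenation is a stable schedule of makespan $\mathcal{O}(d)$ that sends, across every edge $e$ incident to $T$, exactly $H_{\scaffold_5}(e)$ robots in the prescribed direction; the same schedule can be run in parallel in every flow-conserving tile. For \emph{Phase 3.3.1} I would first bring $T$ into a \emph{push-stable} configuration with respect to $H_{\scaffold_5}$. Such a configuration exists: after the preprocessing of \Cref{subsec:flow_preprocessing} every edge is unidirectional, and since $T$ is flow-conserving it has outgoing flow on at most three of its four sides, leaving at least one side free of outgoing flow that can serve as the common anchor for all robots on layers of order greater than $d$; because $H_{\scaffold_5}$ is a $d$-subflow, at most $d$ robots ever cross a side, so it suffices that the first $d$ layers carry the straight-line structure demanded by push-stability. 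Reaching this configuration is a purely local rearrangement of the interior and boundary of $T$, so by \Cref{the:local_od} (via \Cref{lem:labeled_interior_od} and \Cref{lem:parallel_boundary_reordering}) it costs $\mathcal{O}(d)$ transformations.

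For \emph{Phase 3.3.2} I would fix, once and for all, a canonical set of $H_{\scaffold_5}(e)$ crossing cells on the side corresponding to each incident edge $e$, chosen consistently with the neighbouring tile so that both endpoints agree on which scaffold cells are used. For every outgoing edge $(T,T')$ I then select $H_{\scaffold_5}((T,T'))$ of the interior robots of $T$ (chosen, where the partition's accounting permits, among those destined for $T'$, so that across the whole partition every robot reaches its target tile; these exist since after scaffold construction every robot lies in the $1$-neighbourhood of its target tile and $G_{\scaffold_5}$ is built to route interior robots into their target tiles). As $T$ is flow-conserving, the number $k\le 3d$ of robots leaving $T$ equals the number entering; I match them into $k$ internal routes and, applying \Cref{the:local_od} once more, move each departing robot to the tip (layer~$1$) of the straight line anchored at its assigned crossing cell while reserving the remaining lines for the arriving robots. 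This is again a local $\mathcal{O}(d)$ reconfiguration.

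For \emph{Phase 3.3.3}, processing the four sides of $T$ in a fixed order so that their operations do not interfere, I would move the selected robots across the two-cell-thick scaffold wall by the local pattern of \Cref{fig:local-swap}: a free cell is momentarily opened in the first interior layer of the neighbour, a \textsc{RotateSort}-type shift (\Cref{the:rotatesort}) on the $\mathcal{O}(d)$-sized window spanned by the relevant interior lines, the wall, and that borrowed cell advances the departing robots into the neighbour and pulls the arriving robots in along the reserved lines, and the displaced scaffold cells are then restored. Since at most $d$ robots cross each side, this is a bounded number of operations on $\mathcal{O}(d)$-sized regions, hence $\mathcal{O}(d)$ transformations, which together with the $\mathcal{O}(d)$ bounds for Phases 3.3.1 and 3.3.2 gives the claimed makespan. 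The step I expect to be the main obstacle is proving that \emph{every} individual transformation keeps the whole configuration connected: this is precisely what the push-stable invariant is designed for --- robots on layers $\le d$ remain tied to the boundary through their (shifted) straight lines, and deeper robots remain tied through the side without outgoing flow, which is never disturbed --- but one must verify that opening and closing the gates, shifting the lines, and feeding the internal routes always re-establish a (possibly shifted) push-stable configuration, and that $T$ is left in a state from which the next subflow can again be processed.
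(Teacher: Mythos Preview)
Your overall three-phase framework matches the paper's, and your Phase~3.3.1 (reaching a push-stable configuration via \Cref{the:local_od}) is essentially what the paper does. The divergence is in Phases~3.3.2 and~3.3.3, and there the paper's mechanism is quite different from yours. The paper does \emph{not} process the four sides sequentially with \textsc{RotateSort} windows. Instead, after embedding the outgoing robots directly into the shared boundary wall (via a local \textsc{RotateSort}, with the wall temporarily ``reassigned'' to the receiving tile), it matches each incoming robot to one outgoing robot and assigns to every matched pair a crossing-free path through the interior that travels inward to a \emph{unique} layer $d+i$, follows that layer, and travels outward again. All these paths are then pushed along \emph{simultaneously} in a single step (\Cref{fig:paths_on_hulls}). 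The point of routing through distinct deep layers is precisely to decouple the pairs so that one parallel push suffices and so that the resulting configuration is again push-stable---this is what the subsequent lemma in the paper verifies.

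Your proposal lacks this layered-path coupling, and that is the gap. In your Phase~3.3.3 you handle one side at a time, but the robots entering $T$ arrive on \emph{different} sides from those through which robots leave, so a \textsc{RotateSort} window localized at a single side cannot simultaneously ``pull the arriving robots in along the reserved lines'' as you write; the arriving robots are elsewhere. Without the matched through-paths, pushing $d$ robots out of one side removes them from the first $d$ layers there before anything has come in to replace them, and you have not argued that what remains is still push-stable (or even connected) before you move on to the next side. You correctly flag connectivity as the main obstacle, but the paper resolves it with the layer-$d{+}i$ routing and the single simultaneous push, not with sequential per-side \textsc{RotateSort} operations.
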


Using~\Cref{the:local_od}, \emph{Phase~3.3.1} constructs a push-stable initial configuration that places $0<d'\leq d$ outgoing robots adjacent to their respective target tile's shared boundary in the interior of $T$.

Subsequently, a local application of~\Cref{the:rotatesort} is used to embed those outgoing robots into the boundary between their current and target tile, allowing them to be pushed into the interior of a neighboring tile in a single step, see~\Cref{fig:paths_on_hulls-c}.

For simplified notation, parts of the scaffold are temporarily ``reassigned'' to neighboring tiles for this purpose.
For instance, a tile $T$ may temporarily surrender ownership of its half of the scaffold separating it from a
neighboring tile $T'$, see~\Cref{fig:paths_on_hulls-b}, assuming~${\mathit{f}'((T', T)) > 0}$.
The $2\times 2$ corner areas are never reassigned.

In \emph{Phase~3.3.2}, we perform a pairwise matching of incoming and outgoing robots, so that each pair can be connected by a crossing-free path.
Each such path passes through a unique layer $d+i$ for some $i\in\mathbb{N}$, depending on the matching,
meaning that some paths might pass through incomplete or missing layers (see~\Cref{fig:paths_on_hulls-c}).
Note that any such path begins adjacent to a robot that enters $T$, travels inwards to the $(d+i)$th layer,
at which point follows said layer until it travels outwards again to the boundary of $T$, reaching the vertex at which the matched robot leaves $T$.
This approach closely follows the methods described by Demaine et al.~\cite{dfk+-cmprs-19} in their paper on a related reconfiguration problem.

\begin{figure}[h]
	\centering
\begin{subfigure}[b]{0.24\linewidth}
	\centering
	\def\svgwidth{\columnwidth}
	\import{./figures/}{circulation_realization_a.svg.pdf_tex}
	\caption{}
	\label{fig:paths_on_hulls-a}
\end{subfigure}\hfil
\begin{subfigure}[b]{0.24\linewidth}
	\centering
	\def\svgwidth{\columnwidth}
	\import{./figures/}{circulation_realization_b.svg.pdf_tex}
	\caption{}
	\label{fig:paths_on_hulls-b}
\end{subfigure}\hfil
\begin{subfigure}[b]{0.24\linewidth}
	\centering
	\def\svgwidth{\columnwidth}
	\import{./figures/}{circulation_realization_c.svg.pdf_tex}
	\caption{}
	\label{fig:paths_on_hulls-c}
\end{subfigure}\hfil
\begin{subfigure}[b]{0.24\linewidth}
	\centering
	\def\svgwidth{\columnwidth}
	\import{./figures/}{circulation_realization_d.svg.pdf_tex}
	\caption{}
	\label{fig:paths_on_hulls-d}
\end{subfigure}\hfil
	\caption{A tile $T\in\scaffold_5$ during the
		realization of a single $d$-subflow. The dashed portions of paths indicate that no motion actually
		occurs in this segment due to the described pushing behavior.}
	\label{fig:paths_on_hulls}
\end{figure}

The actual realization of $H_{\scaffold_5}$ at tile $T$ can now be performed in a single, simultaneous pushing motion along each of the constructed paths in \emph{Phase~3.3.3}, see~\Cref{fig:paths_on_hulls-d}.

\begin{lemma}
	The realization of a $d$-subflow can be performed in a way that results in another push-stable configuration of $T$.
\end{lemma}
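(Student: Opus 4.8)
The plan is to show that the realization of a single $d$-subflow $H_{\scaffold_5}$ at a flow-conserving tile $T$, carried out by Phases~3.3.1--3.3.3, can be arranged so that the resulting configuration is again push-stable, now with respect to $G_{\scaffold_5}$ after the contribution of $H_{\scaffold_5}$ has been removed; this is exactly the invariant needed to chain the individual subflow realizations into one schedule of makespan $\mathcal{O}(d)$. I would start from a push-stable configuration of $T$ and recall the anatomy of the realizing move: Phase~3.3.1 places the $d'\le d$ robots leaving $T$ onto the interior layers adjacent to their target boundaries, grouped by destination, using~\Cref{the:local_od}; Phase~3.3.2 pairs them with the $d'$ robots entering $T$ (equinumerous, since $T$ is flow-conserving) and routes each pair along a crossing-free path that runs radially inward to a layer $d+i$ that is unique to the pair, follows that layer, and then returns radially outward to its exit vertex; Phase~3.3.3 executes a single coherent unit push along every such path, so that the outgoing robot leaves $T$, each robot on the active part of the path advances one cell, and the incoming robot refills the vacated entry vertex, while robots on no path stay put (the scaffold, modulo the temporary reassignments that are restored, is untouched).

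For the first push-stability condition (layers $1,\dots,d$), the argument I would give is short. The only robots on these layers that move are those lying on the straight radial legs of the matched paths, and each shifts by exactly one cell along its leg; since an incoming robot refills every entry vertex, no cell of a leg is left vacant. As the legs are perpendicular to the boundary and, by crossing-freeness, any two legs are either disjoint or share an initial radial stretch on which the push acts coherently, every maximal radial line of robots that certified this condition before the move still does so afterwards.

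The hard part will be the second condition (layers $>d$), and I expect it to be the main obstacle, since it is where the ``deep'' structure of $T$ interacts with the simultaneous push and with the orientation (``resting side'') requirement. Here I would lean on three facts. First, realizing $H_{\scaffold_5}$ only \emph{decreases} edge values of $G_{\scaffold_5}$, so the set of boundary sides that lie on an edge with no outgoing flow can only grow; hence the orientation requirement becomes, if anything, easier, and no admissible anchoring side is lost. Second, a matched path dips only to its own layer $d+i$, disturbing that layer and the short radial legs above it but leaving all layers above it entirely untouched; thus a robot sitting above every layer used by $H_{\scaffold_5}$ keeps its whole connecting path to layer $d$, whereas a robot on a used layer $d+i$ is, after its one-cell shift, still embedded in a contiguous stretch of that layer and can be reconnected to a layer-$d$ robot by walking along it. Third -- and this is the delicate bookkeeping -- one must verify that the layer-$d$ robot reached by such a reconnection has a closest side with no outgoing flow in the \emph{reduced} flow; I would handle this by choosing the outward legs already in Phase~3.3.2 so that the high-layer robots of $T$ collectively route to layer $d$ at sides whose remaining outgoing flow after this subflow is zero (the subflow routes exactly the outgoing robots of $H_{\scaffold_5}$ through those sides), falling back to any of the newly quiet sides guaranteed by the first fact when this is not immediate. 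Combining the two conditions yields that $T$ is push-stable with respect to the reduced flow; the remaining cases of partial and total sources and sinks, which by definition cannot be configured in a push-stable way, are deferred to the dedicated treatment announced earlier in the section.
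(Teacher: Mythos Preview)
Your description of Phase~3.3.3 as ``a single coherent unit push along every such path'' is where the argument breaks. The paper explicitly allows the matched paths to pass through \emph{incomplete or missing} layers (see the discussion around~\Cref{fig:paths_on_hulls-c}, where dashed segments indicate that no motion occurs). A coherent push along such a path is not well-defined: you cannot propagate a shift across an empty stretch. The paper's actual mechanism splits the push into two independent pieces: from the incoming end, one pushes until the \emph{first empty position} on the path, which simply becomes occupied; from the outgoing end, only the last (at most) $d$ robots are pulled towards the boundary, \emph{potentially leaving a hole at layer $d$}. Consequently your claim that ``no cell of a leg is left vacant'' is false on the outgoing leg, and the first half of your push-stability argument rests on an incorrect premise.

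For the high layers, your reduction-of-flow bookkeeping is both unnecessary and aimed at the wrong invariant. The paper's argument is much simpler and does not mention a reduced flow at all: the push-stable \emph{initial} configuration already anchors every robot above layer $d$, via a path within its own layer, to a layer-$d$ robot whose closest side has \emph{no outgoing flow}---that is, an incoming side. Since the pulling motion only removes robots through the outgoing sides, these anchoring paths are untouched and the high-layer robots remain connected regardless of the holes created at layer $d$. You should also note that this lemma is \emph{not} used to chain single-subflow realizations into the $\mathcal{O}(d)$ schedule as you suggest; the multi-subflow realization (\Cref{lem:partial_d_subflows}) uses a separate queueing construction and a single application of~\Cref{the:local_od}, not an iteration of the single-subflow step.
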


\begin{proof}
	The crucial steps for this occur in Phase~3.3.3.
	Consider a path that connects an incoming and an outgoing robot.
	If the path is fully occupied, the described pushing motion does not have any effect on the silhouette of the tile it passes through, so this motion by itself yields a push-stable configuration.
	
	Now consider a path through an incomplete layer that has an incoming robot entering it at the first position.
	We continue the resulting pushing motion until the first empty position on the path, which then becomes occupied.
	
	The tail end of the path is handled differently.
	The last $d$ positions on the path may be occupied by robots that are then always connected directly to the scaffold through one another.
	We push only these (at most) $d$ robots towards the boundary, potentially leaving a hole at the $d$th layer of the tile, see~\Cref{fig:paths_on_hulls-d}.
	Note that the positions immediately before the last $d$ may be occupied by robots as well.
	However, any robot above layer $d$ must be connected to an \emph{incoming} side of the boundary via a path through its own layer.
	This ensures that any robot above the $d$th layer remains connected regardless of the outgoing robots moving away
	from it.
	
	We conclude that 
	Phase~3.3.3 always results in a push-stable configuration.
\end{proof}

We slightly modify the approach described before, in order to apply it to non-flow-conserving tiles.

\begin{lemma}
	Consider a $d$-subflow~$H_{\scaffold_5}\subseteq G_{\scaffold_5}$ and a tile $T\in\scaffold_5$ that is not a total sink or total source with respect to $G_{\scaffold_5}$.
	There is a stable schedule of makespan $\mathcal{O}(d)$ that realizes the flow at its location and yields another push-stable configuration.
	\label{lem:partial_sink_source_construction}
\end{lemma}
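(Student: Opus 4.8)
# Proof Proposal for Lemma~\ref{lem:partial_sink_source_construction}

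The plan is to reduce the partial-sink/partial-source case to the flow-conserving case that was just handled, by absorbing the supply or demand imbalance into a controlled change of the tile's interior density. First I would observe that a tile $T$ that is a partial sink (symmetrically, partial source) has at least one side whose incident edge in $G_{\scaffold_5}$ carries no outgoing flow; this is exactly the structural ingredient that the push-stable definition exploits for layers of order greater than $d$. Since $H_{\scaffold_5}$ is a $d$-subflow, the net imbalance at $T$ is at most $d$ in absolute value, so at most $d$ robots must be ``created'' from or ``absorbed'' into the interior stock of $T$ during the realization. I would treat the partial-source case first: here $T$ must emit more robots than it receives, so after matching as many incoming robots to outgoing robots as possible via the crossing-free paths of Phase~3.3.2, there remain at most $d$ unmatched outgoing robots. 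These I would supply from the deeper layers of the interior — precisely the robots on layers $d+1, d+2, \dots$ — by extending each such path inward until it reaches an occupied position, mirroring the ``tail end'' argument of the preceding lemma but now applied to the \emph{head} of the path.

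The key steps, in order, would be: (1) apply \Cref{the:local_od} to establish a push-stable initial configuration of $T$ with the (up to $d$) net-outgoing robots already staged adjacent to their target boundaries, and with the deeper layers arranged so that every robot above layer $d$ is connected to an \emph{incoming} side through its own layer — this is feasible precisely because a partial source has a side without incoming flow onto which these layers can rest, and symmetrically a partial sink has a side without outgoing flow; (2) run Phase~3.3.2 to match incoming robots to outgoing robots along crossing-free paths through distinct layers $d+i$; (3) for the residual imbalance, route the remaining paths so that each either draws a robot out of (partial source) or deposits a robot into (partial sink) the high-order layers, using the same single simultaneous pushing motion of Phase~3.3.3; (4) invoke the push-stability-preservation argument of the previous lemma, checking that the modified head/tail handling still leaves every deep-layer robot connected to a boundary side of the correct type. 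Each of these steps costs $\mathcal{O}(d)$ transformations — \Cref{the:local_od} gives $\mathcal{O}(d)$, matching and pushing are $\mathcal{O}(d)$ — so the total makespan is $\mathcal{O}(d)$ as claimed. The parallelization over all non-total tiles is immediate since each operates within its own tile plus temporarily reassigned half-scaffold, and the $2\times 2$ corners are never reassigned, so no two simultaneous tile operations interfere.

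The main obstacle I expect is the bookkeeping of connectivity for the high-order layers in the partial-\emph{sink} case: when we deposit up to $d$ extra robots into layers beyond $d$, we must ensure these new robots — and all the robots they push — remain connected to a side of $T$'s boundary that has no \emph{outgoing} flow, so that subsequent realizations at neighboring tiles cannot sever them. Because a partial sink may have up to three incoming edges, there is exactly one ``safe'' side, and all deep-layer material must be routed to lean against it; showing that the staging in step~(1) can always achieve this — and that the pushing of step~(3) does not temporarily violate it mid-motion — is the delicate part. I would handle it by the same invariant used in the flow-conserving lemma: any robot above layer $d$ is connected to a robot on layer $d$ by a path through strictly higher layers, whose closest boundary side is flow-free; the partial-sink hypothesis guarantees such a side exists, and since the $d$-subflow changes the density by at most $d$, a single layer's worth of adjustment suffices, which \Cref{the:local_od} can prearrange. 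The total-sink and total-source cases are explicitly deferred to a later section, so they fall outside this lemma.
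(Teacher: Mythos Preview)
Your reduction to the flow-conserving case is sound, and the makespan accounting is fine, but you work harder than the paper does, and in the partial-sink case you route the excess to the wrong place.

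For the \emph{partial source}, the paper does not extend unmatched outgoing paths inward to harvest robots from layers $>d$. It simply observes that the tail-end handling already described for the flow-conserving case (pull the last $\le d$ robots on each path toward the boundary, possibly leaving a hole at layer $d$) applies verbatim when there is no incoming robot at the head: nothing enters, something leaves, and push-stability is preserved because that argument never relied on the head actually being fed. No new mechanism is needed.

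For the \emph{partial sink}, the paper avoids your ``delicate part'' entirely. Rather than depositing the $\lambda\le d$ unmatched incoming robots into layers $>d$ and then worrying about anchoring all deep-layer material to the unique non-outgoing side, the paper adjusts the initial push-stable configuration by leaving $\lambda$ empty positions \emph{in layer $d$ itself}, placed behind the outgoing robots. The unmatched incoming paths then terminate by pushing into these prepared holes. Because the holes sit on an outgoing side, they do not interfere with the push-stable requirement for robots on layers $>d$ (which only need a layer-$d$ anchor near a side without outgoing flow), and since $\lambda\le d$ there is room in layer $d$ to accommodate them. This sidesteps the connectivity bookkeeping you flagged as the main obstacle; the paper's argument is two sentences rather than a staged invariant.

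Your approach could likely be made to work, but it introduces exactly the complication the paper's choice of layer-$d$ (rather than deeper-layer) absorption is designed to eliminate.
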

\begin{proof}
	Consider a tile $T$ that is not flow-conserving with respect to a $d$-subflow $H_{\scaffold_5}$ and for which there are $\lambda \leq d$ robots either entering or leaving the
	tile that cannot be matched to a robot doing the opposite. See~\Cref{fig:partial_source_and_sink} for an illustration.
	
	If $T$ is a partial source, a push-stable initial configuration may be created by the previously outlined method.
	As the pulling motion at the outgoing end of a path always results in another push-stable configuration, the lack of incoming robots does not cause any disconnections.
	
	If $T$ is a partial sink, a slight adjustment to the initial push-stable configuration must be made.
	As $\lambda \leq d$, it is possible to leave $\lambda$ empty positions in the $d$th layer, behind outgoing robots.
	The result is a push-stable configuration that allows the incoming paths to push into the empty space behind the
	outgoing robots, reaching another push-stable configuration.
\end{proof}
\begin{figure}[h]
	\centering
	\begin{subfigure}[b]{0.24\linewidth}
		\centering
		\def\svgwidth{\columnwidth}
		\import{./figures/}{partial_source_emptying_a.svg.pdf_tex}
		\caption{}
		\label{fig:partial_source_and_sink-a}
	\end{subfigure}\hfil
	\begin{subfigure}[b]{0.24\linewidth}
		\centering
		\def\svgwidth{\columnwidth}
		\import{./figures/}{partial_source_emptying_b.svg.pdf_tex}
		\caption{}
		\label{fig:partial_source_and_sink-b}
	\end{subfigure}\hfil
	\begin{subfigure}[b]{0.24\linewidth}
		\centering
		\def\svgwidth{\columnwidth}
		\import{./figures/}{partial_sink_filling_a.svg.pdf_tex}
		\caption{}
		\label{fig:partial_source_and_sink-c}
	\end{subfigure}\hfil
	\begin{subfigure}[b]{0.24\linewidth}
		\centering
		\def\svgwidth{\columnwidth}
		\import{./figures/}{partial_sink_filling_b.svg.pdf_tex}
		\caption{}
		\label{fig:partial_source_and_sink-d}
	\end{subfigure}\hfil
	\caption{(a+b) A partial source tile and (c+d) a partial sink tile during the realization of a single $d$-subflow}
	\label{fig:partial_source_and_sink}
\end{figure}

\paragraph*{Realizing multiple subflows}

The previously described movement patterns may be compactly combined into a single schedule of makespan $\mathcal{O}(d)$
that realizes up to $d$ many $d$-subflows at once.

\begin{lemma}
	Consider a tile $T\in\scaffold_5$ that is not a total sink or total source with respect to~$G_{\scaffold_5}$ and a sequence $S_{\scaffold_5} := (H^1_{\scaffold_5}, \dots, H^\ell_{\scaffold_5} )$ of $d$-subflows of
	$G_{\scaffold_5}$ with $\ell \leq d$.
	There exists a stable schedule of makespan $\mathcal{O}(d)$ that realizes all $\ell$ many $d$-subflows.
	\label{lem:partial_d_subflows}
\end{lemma}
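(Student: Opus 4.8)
The plan is to realize the subflows $H^1_{\scaffold_5}, \dots, H^\ell_{\scaffold_5}$ in this order, reusing the machinery behind \Cref{lem:partial_sink_source_construction}, but amortizing its one-time cost over the whole sequence. Realizing a single $d$-subflow as before costs $\mathcal{O}(d)$, almost all of which is spent on \emph{setup}: the interior reconfiguration via \Cref{the:local_od} that produces a push-stable configuration with the outgoing robots positioned against the correct shared boundaries, together with the local \textsc{RotateSort} (\Cref{the:rotatesort}) that embeds those robots into the boundary. The actual pushing/pulling wave that transfers robots between tiles is, by contrast, cheap. A naive sequential composition would pay the $\mathcal{O}(d)$ setup $\ell$ times, giving $\mathcal{O}(\ell d) = \mathcal{O}(d^2)$; the point is to pay it only once.

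First I would perform a single combined setup. Using \Cref{the:local_od}, I reconfigure the interior of $T$ into a push-stable configuration in which, for each side $s$ of $T$ carrying outgoing flow, the robots leaving through $s$ over the \emph{entire} sequence $S_{\scaffold_5}$ are gathered into one contiguous block adjacent to the shared boundary at $s$, stratified by subflow index so that the robots of $H^1_{\scaffold_5}$ sit outermost (first to leave) and those of $H^\ell_{\scaffold_5}$ innermost. Since $T$ is neither a total sink nor a total source with respect to $G_{\scaffold_5}$, it has a side without outgoing flow; on that side I use the partial-sink/partial-source construction from \Cref{lem:partial_sink_source_construction} to reserve, for each $i$, a private layer $d+i$ along which the incoming robots of $H^i_{\scaffold_5}$ can travel (note $d+i \le 2d$ fits inside the tile since $\ell\le d$). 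The pairwise matching of incoming and outgoing robots (Phase~3.3.2) is computed for all subflows jointly, choosing each path's layer according to its subflow index, so that paths of different subflows are layer-disjoint and, within a subflow, crossing-free as before. This preparation is one application of \Cref{the:local_od} plus a constant number of \textsc{RotateSort} calls, hence $\mathcal{O}(d)$.

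Then I would execute the $\ell$ pushing waves in a pipelined fashion. Each outgoing block acts like a conveyor belt feeding the neighbouring tile: launching wave $i$ pushes the current outermost stratum (the $H^i_{\scaffold_5}$ robots) through the boundary, while the incoming robots of $H^i_{\scaffold_5}$ enter and ride inward along their reserved layer $d+i$. Because the blocks are pre-stratified and the incoming paths are layer-separated, a new wave can be started every $\mathcal{O}(1)$ steps, so after $\mathcal{O}(d)+\ell = \mathcal{O}(d)$ steps all $\ell$ subflows are realized and $T$ is again push-stable. The same schedule is run in every tile of $\scaffold_5$ in parallel; at a shared directed edge the sending and receiving tiles agree, for each $i$, on both the number of robots exchanged in $H^i_{\scaffold_5}$ and the step at which the exchange occurs, so no boundary cell is ever contested and no swap is needed. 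Connectivity throughout follows stage by stage from the arguments of \Cref{lem:partial_sink_source_construction} applied to each individual wave.

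The main obstacle I expect is the bookkeeping that makes the pipeline safe: guaranteeing that the incoming robots of an early subflow, once absorbed into $T$, neither break push-stability nor obstruct the stratified outgoing blocks or the reserved layers still needed by later subflows; and synchronizing the conveyor motions across a shared boundary so that the robot $T$ emits in wave $i$ is exactly the one its neighbour is scheduled to receive in its copy of $H^i_{\scaffold_5}$ at that step. Total sinks and total sources are deliberately excluded from the statement and are handled separately, as already noted for the single-subflow case.
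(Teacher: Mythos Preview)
Your approach is essentially the paper's: a single $\mathcal{O}(d)$ setup via \Cref{the:local_od} that stacks the outgoing robots in queues of depth at most $\ell$ against each outgoing side, ordered by subflow index, followed by $\ell$ single-step pushes. The paper's proof does exactly this and is comparably terse.

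One detail in your proposal does not work as stated. You assign a single ``private layer $d+i$'' to \emph{all} paths of subflow $H^i_{\scaffold_5}$ and then claim the within-subflow paths are ``crossing-free as before.'' But in Phase~3.3.2, crossing-freeness is obtained precisely by giving each matched pair its \emph{own} layer; two nested paths forced onto the same ring layer necessarily overlap. Since a single $d$-subflow can have up to $d$ matched pairs, one layer per subflow is not enough, and if you genuinely pipeline overlapping waves you would need up to $\ell\cdot d$ layers, which the tile does not have. The paper sidesteps this by making the $\ell$ pushes strictly sequential, one transformation each: every push uses its own matching and the full range of layers exactly as in the single-subflow case, and because the pushes do not overlap in time, the same layers are reused from one push to the next. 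No cross-subflow layer separation is needed, so the pipelining bookkeeping you anticipate as the main obstacle simply does not arise.
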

\begin{proof}
	The first portion of this schedule consists of an application of~\Cref{the:local_od} that prepares for the final $\ell$ transformation steps that realize one subflow each, see~\Cref{fig:multiple_paths_on_hulls}.
	This is achieved by stacking the agents against the target tile's boundary in successive layers, for which the order of agents is based on the order of subflows in $S_{\scaffold_5}$, with later subflows occupying locations on higher-order layers.
	As $\ell \leq d$, no queue may be more than $d$ agents long.
	The result is a push-stable configuration of~$T$, which takes into account the special cases discussed in~\Cref{lem:partial_sink_source_construction}.
	
	Next, we apply~\Cref{the:local_od} to embed the queues of agents in the wall between their current and their target tile, placing corresponding scaffold agents at the very back of each queue, see~\Cref{fig:multiple_paths_on_hulls-a}.
	This leaves only the actual realization steps to be determined.
	
	These consist of $\ell$ successive pushing operations based on matchings between incoming and outgoing agents, ensuring that each intermittent configuration of the schedule is stable.
\end{proof}

\begin{figure}[h]
	\centering
	\begin{subfigure}[b]{0.24\linewidth}
		\centering
		\def\svgwidth{\columnwidth}
		\import{./figures/}{circulation_realization_full_a.svg.pdf_tex}
		\caption{}
		\label{fig:multiple_paths_on_hulls-a}
	\end{subfigure}\hfil
	\begin{subfigure}[b]{0.24\linewidth}
		\centering
		\def\svgwidth{\columnwidth}
		\import{./figures/}{circulation_realization_full_b.svg.pdf_tex}
		\caption{}
		\label{fig:multiple_paths_on_hulls-b}
	\end{subfigure}\hfil
	\begin{subfigure}[b]{0.24\linewidth}
		\centering
		\def\svgwidth{\columnwidth}
		\import{./figures/}{circulation_realization_full_c.svg.pdf_tex}
		\caption{}
		\label{fig:multiple_paths_on_hulls-c}
	\end{subfigure}\hfil
	\begin{subfigure}[b]{0.24\linewidth}
		\centering
		\def\svgwidth{\columnwidth}
		\import{./figures/}{circulation_realization_full_d.svg.pdf_tex}
		\caption{}
		\label{fig:multiple_paths_on_hulls-d}
	\end{subfigure}\hfil
	\caption{A tile $T\in\scaffold_5$ during the realization of $\ell$ $d$-subflows.}
	\label{fig:multiple_paths_on_hulls}
\end{figure}

\paragraph*{Total sinks and sources}
As noted before, total sinks and sources form a special case that we handle separately from the described push-stable patterns.

\begin{lemma}
	Consider a tile $T\in\scaffold_5$ that is a total sink or total source with respect to $G_{\scaffold_5}$ and a sequence $S_{\scaffold_5} := (H^1_{\scaffold_5}, \dots, H^\ell_{\scaffold_5} )$ of $d$-subflows of
	$G_{\scaffold_5}$ with $\ell \leq d$.
	There exists a stable schedule of makespan~$\mathcal{O}(d)$ that realizes all $\ell$ many $d$-subflows.
	\label{lem:total_d_subflows}
\end{lemma}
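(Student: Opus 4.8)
The plan is to reduce the total-source case to the total-sink case, and then to \emph{absorb} the incoming robots into a reserved part of the interior, exploiting that one batch of $\ell\le d$ many $d$-subflows moves at most $4d^2$ robots across the edges at $T$, whereas a tile's interior has $(5cd-2)^2=\Theta(c^2d^2)$ cells. Since every move is reversible, a stable schedule realizing a sequence of $d$-subflows at a total \emph{sink} can be played backwards---reversing also the order of the subflows---to realize the corresponding sequence at a total \emph{source}, so it suffices to treat a fixed total sink $T$. A separate argument is needed precisely because the push-stable path matching of the previous lemmas routes each incoming robot through a matched outgoing robot, and a total sink has no outgoing flow at all.

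First I would record two facts. No robot ever leaves $T$: after the Phase~3.1 preprocessing no edge at $T$ is bidirectional, so $T$ being a total sink means all incident edges carry flow inward only, hence so do all of their subflows. And the incoming robots fit: since the realization of $G_{\scaffold_5}$ never fills a tile over capacity, the number of robots currently in $T$ plus the at most $4d^2$ that will enter during $S_{\scaffold_5}$ does not exceed $(5cd-2)^2$. I would then begin the schedule with a single application of \Cref{the:local_od}---run simultaneously with the corresponding preparation step of \Cref{lem:partial_d_subflows} for the other tiles---that rearranges the robots currently in $T$ into one connected ``core'' filling the interior except for four small rectangular pockets, one next to the midpoint of each incoming wall, of width at most $d$ and depth at most $\ell$, each dimensioned to hold exactly the robots that will cross that edge during the batch. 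Using the half-wall reassignment device employed throughout Phase~3.3, the pocket at an incoming wall is made reachable from the neighbouring tile in a single step. The core meets the scaffold along every wall segment outside the four notches and each pocket meets it at its notch, so this configuration is connected; the preparation costs $\mathcal{O}(d)$ transformations.

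The realization is then $\ell$ successive single transformations: in step $i$, for every incoming edge $e$, the robots that $H^i_{\scaffold_5}$ sends across $e$ are pushed---as one simultaneous chain, through the reassigned half-wall and into the wall-adjacent layer of $T$'s pocket at $e$---shoving the robots already in that pocket (from subflows $1,\dots,i-1$) one layer deeper. The four chains of a step are vertex-disjoint, each is a valid push, and it transfers exactly the prescribed number of robots across $e$, so after step $i$ the subflows $H^1_{\scaffold_5},\dots,H^i_{\scaffold_5}$ have been realized. Connectivity holds at every step because the core never moves and stays attached to the scaffold, while each pocket always holds a solid, wall-anchored block of robots---and this is exactly the place where we use that a single batch touches only $\mathcal{O}(d^2)$ of the $(5cd-2)^2$ interior cells. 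The makespan is $\mathcal{O}(d)$ for the preparation plus $\ell\le d$ push steps, hence $\mathcal{O}(d)$. The main obstacle, and essentially the only content of the argument, is this connectivity check: the push-stable invariant that certifies stability for the other tiles is unavailable for a total sink (and symmetrically breaks down for a total source), so stability must be argued by hand, and the immobile core---available precisely because the per-batch traffic is a vanishing fraction of a tile's interior---is what makes this work; the geometric placement of the four pockets and the half-wall reassignments are routine in light of the earlier sections.
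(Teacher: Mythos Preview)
Your approach is close to the paper's, but one claim is wrong as stated: you assert that the existing interior robots of a total sink can be arranged into a core \emph{filling} the interior minus the four pockets. A total sink may start with arbitrarily few interior robots---even none---so in general no such core exists; your connectivity argument leans on this (``the core meets the scaffold along every wall segment outside the four notches''), and your closing sentence identifies the $4d^2 \ll (5cd-2)^2$ comparison as the crux, when in fact that bound only limits how many robots \emph{enter}, not how many are already present. The fix is immediate: drop ``filling'' and simply park whatever interior robots exist in any connected block touching the scaffold and disjoint from the four pocket rectangles (there is room, since the pockets occupy at most $4d^2$ of the $(5cd-2)^2$ interior cells). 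Connectivity then follows because the scaffold is intact, the core touches it somewhere, and each pocket's contents form a wall-anchored---though not necessarily \emph{solid}, once the per-subflow counts differ across steps---connected block.

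With that correction your argument is sound and parallel to the paper's, which instead partitions the interior into four triangles along the tile's diagonals, one per wall, places the existing robots in their outer portions, and lets each step's entering robots push that triangle's contents one layer toward the centre. Both schemes reserve a per-edge region for the at most $d^2$ incoming robots, both cost $\mathcal{O}(d)$ preparation plus $\ell$ push steps, and both obtain the total-source case by reversal; your rectangular pockets with an immobile core are a minor geometric variant of the paper's triangular sectors with a mobile one.
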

\begin{proof}
	We divide the interior space of $T$ into four disjoint triangles along the diagonals of the tile, see~\Cref{fig:total_sink_filling}.
	Without loss of generality, assume that $T$ is a total sink.
	No more than $d^2$ robots can enter at any given edge as a result of $S_{\scaffold_5}$, and the resulting number of robots in the tile's interior cannot be larger than $(5cd-2)^2$.
	This means that it is sufficient to leave the corresponding number of positions at the innermost section of each triangle unoccupied.
	The motion of entering robots cannot cause any issues with stability and takes no more than $\ell$ transformations.
	This implies a total makespan of $\mathcal{O}(d) + \ell$.
	
	As every schedule is invertible, the opposite movement pattern may be applied to total sources;
	pushing outwards from the very center of the tile while only moving robots within a single triangle region does not cause any disconnections of the configuration.
\end{proof}

\begin{figure}[h]
	\centering

	\begin{subfigure}[b]{0.19\linewidth}
		\centering
		\def\svgwidth{\columnwidth}
		\import{./figures/}{total_sink_filling_a.svg.pdf_tex}
		\caption{}
		\label{fig:total_sink_filling-a}
	\end{subfigure}\hfil
	\begin{subfigure}[b]{0.19\linewidth}
		\centering
		\def\svgwidth{\columnwidth}
		\import{./figures/}{total_sink_filling_b.svg.pdf_tex}
		\caption{}
		\label{fig:total_sink_filling-b}
	\end{subfigure}\hfil
	\begin{subfigure}[b]{0.19\linewidth}
		\centering
		\def\svgwidth{\columnwidth}
		\import{./figures/}{total_sink_filling_c.svg.pdf_tex}
		\caption{}
		\label{fig:total_sink_filling-c}
	\end{subfigure}\hfil
	\begin{subfigure}[b]{0.19\linewidth}
		\centering
		\def\svgwidth{\columnwidth}
		\import{./figures/}{total_sink_filling_d.svg.pdf_tex}
		\caption{}
		\label{fig:total_sink_filling-d}
	\end{subfigure}\hfil
	\begin{subfigure}[b]{0.19\linewidth}
		\centering
		\def\svgwidth{\columnwidth}
		\import{./figures/}{total_sink_filling_e.svg.pdf_tex}
		\caption{}
		\label{fig:total_sink_filling-e}
	\end{subfigure}\hfil
	\caption{A total sink $T\in\scaffold_5$ during the
		realization of $\ell$ many $d$-subflows. As indicated by the triangular marks, each side has
		a unique space into which robots may push.}
	\label{fig:total_sink_filling}
\end{figure}

\paragraph*{Realizing all subflows}

By applying~\Cref{lem:partial_d_subflows,lem:total_d_subflows}, up to $d$ many $d$-subflows of
$G_{\scaffold_5}$ may be realized in~$\mathcal{O}(d)$ transformations.
As we created $\mathcal{O}(d)$ such subflows in~\Cref{subsec:flow_partitioning}, all of them may be realized through $\nicefrac{\mathcal{O}(d)}{d} = \mathcal{O}(1)$ repetitions.
These repetitions require a total of $\mathcal{O}(d)$~transformations.

\subsection{Phase~\num{4}: Boundary flow}
\label{sec:boundary_reordering}

The next phase of our algorithm deals with the movement of robots that are part of the scaffold.
As described in \Cref{sec:scaffold_construction}, the robots forming the scaffold in a tiled configuration must remain in the $1$-neighborhood of their target tile over $\scaffold_5$.
The intermediate mapping step in the construction process guarantees that this remains the case even after that phase concludes.
However, the scaffold does not necessarily consist of the same robots in the tiled configurations $C_s'$ and $C_t'$.
For any given tile $T\in\scaffold_5$, up to $20cd-4$ robots exist in its neighborhood $N_1[T]$ that need to become part of its boundary structure.

This observation allows us to model the necessary exchange as a $(20cd-4)$-flow.
Using the techniques discussed in \Cref{subsec:flow_partitioning} with some minor modifications, we obtain and realize a $(d,\mathcal{O}(1))$-partition of this new flow graph $G_{\scaffold_5}^s$ as follows.

\paragraph*{Creating a planar and unidirectional flow}
Just as with the prior flow $G_{\scaffold_5}$, bidirectional and crossing edges need to be removed from the flow before the partitioning process.
This can be achieved in the same manner as before.
By applying~\Cref{the:local_od}, we place the robots that need to be exchanged in the wall between adjacent tiles, and subsequently exchange them by means of local rotations.
As at most~$20cd-4$ robots need to be swapped between any two adjacent tiles, this process takes~$\mathcal{O}(d)$ steps in total.
Once this process has been applied to all tiles, $G_{\scaffold_5}^s$~is bounded from above by $3(20cd-4)$.

\paragraph*{Partitioning the flow}
We again seek to decompose the flow into smaller components that can be realized separately.
A~$(d,\mathcal{O}(1))$-partition can be computed by the following method.
By applying~\Cref{lem:flow_component_separation}, we compute a $(3(20cd-4), 2)$-partition of $G_{\scaffold_5}^s$, decomposing the flow into acyclic and cyclic components $G_\rightarrow^s$ and $G_\bigcirc^s$.
Using modified versions of the partitioning algorithms from~\Cref{subsec:flow_partitioning}, we then compute a $(d,\mathcal{O}(1))$-partition of each of them.

By simply modifying the final step of the algorithm used to obtain~\Cref{lem:circulation_partitioning}, we can compute a  $(d,\mathcal{O}(1))$-partition of $G_\bigcirc^s$.
Thus, we obtain the following.
\smallskip
\begin{lemma}
	We can compute a $(d,\mathcal{O}(1))$-partition of a circulation $G_\bigcirc^s$ which is a $(k\cdot d)$-flow for some constant $k\in\mathbb{N}$ in polynomial time.
\end{lemma}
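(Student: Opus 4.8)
The plan is to re-run the proof of \Cref{lem:circulation_partitioning} essentially verbatim, changing only the modulus used in the final labeling step. We are given that $G_\bigcirc^s$ is totally cyclic and, after the removal of bidirectional and crossing edges described above, is a $(k\cdot d)$-flow for a constant $k\in\mathbb{N}$ (concretely, $k = \mathcal{O}(c) = \mathcal{O}(1)$, since the preprocessing merely triples the bound $20cd-4$ and $c$ is fixed). As in \Cref{lem:circulation_partitioning}, I would first compute a $(1,h)$-partition $\mathbb{P}_\bigcirc$ of $G_\bigcirc^s$ into simple cycles with $h$ at most the number of scaffold robots (subdividing any self-intersecting cycle), split it by orientation into $\mathbb{P}_\circlearrowright$ and $\mathbb{P}_\circlearrowleft$, and apply the geometric peeling algorithm to each, obtaining the four sets $\mathbb{P}_\circlearrowright^1, \mathbb{P}_\circlearrowright^2, \mathbb{P}_\circlearrowleft^1, \mathbb{P}_\circlearrowleft^2$ in which any two cycles of the same set are edge-disjoint or nested. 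As in the proof of \Cref{lem:circulation_partitioning}, this nesting structure induces a forest $F$ on each such set (cf.\ \Cref{fig:geometric_peeling_tree}). None of these steps depend on the magnitude of the flow, so they transfer without change.

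The single modification is in forming the output subflows: instead of labeling every cycle by its depth in $F$ modulo $k\cdot d$, we label it by its depth modulo $k$, and define $G_{\bigcirc_i}^s$ as the union of all cycles with label $i$. This produces $k$ subflows per set $\mathbb{P}_X^j$, hence $\mathcal{O}(1)$ subflows in total, and they clearly sum to $G_\bigcirc^s$. To see that each $G_{\bigcirc_i}^s$ is a $d$-flow, fix an edge $e$ and recall from the proof of \Cref{lem:circulation_partitioning} that, within a fixed set $\mathbb{P}_X^j$, all cycles containing $e$ lie on a single root-to-node path in $F$. Since $G_\bigcirc^s$ is a $(k\cdot d)$-flow and each cycle carries unit flow on each of its edges, at most $k\cdot d$ cycles contain $e$; grouping them by residue modulo $k$, any one residue class contains at most $\lceil kd/k\rceil = d$ of them, so $e$ has weight at most $d$ in every $G_{\bigcirc_i}^s$. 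Flow conservation is additive over unions of cycles, so each $G_{\bigcirc_i}^s$ is again a circulation, and we obtain a $(d,\mathcal{O}(1))$-partition of $G_\bigcirc^s$ into totally cyclic flows.

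For the running time, the cycle decomposition, the orientation test, and the peeling procedure run in polynomial time exactly as in \Cref{lem:circulation_partitioning}, and the relabeling is linear in the size of $F$. I do not expect a genuine obstacle here: the only points requiring care are verifying that $k$ is indeed a constant (which holds because $c$ is treated as a constant from Phase~1 onward) and observing that the forest/peeling analysis depends only on the geometric nesting of the cycles and not on their weights, so that reducing the per-edge bound from $\Theta(d^2)$ to $\Theta(d)$ is precisely what lets the modulus — and hence the number of subflows — drop from $\Theta(d)$ to $\Theta(1)$.
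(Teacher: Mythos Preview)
Your proposal is correct and follows essentially the same approach as the paper: both proofs reuse the peeling-and-nesting argument of \Cref{lem:circulation_partitioning} verbatim and change only the modulus in the depth-labeling step from $k\cdot d$ to $k$, then verify that the $(k\cdot d)$ upper bound on edge multiplicity yields at most $d$ cycles per residue class. Your write-up is slightly more explicit about the ceiling and about additivity of flow conservation, but there is no substantive difference.
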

\begin{proof}
	We refer to~\Cref{lem:circulation_partitioning} for the first two steps of the algorithm.
	Then, let $\mathbb{P}_X^j\in \{\mathbb{P}_\circlearrowright^1, \mathbb{P}_\circlearrowright^2,
	\mathbb{P}_\circlearrowleft^1, \mathbb{P}_\circlearrowleft^2\}$ refer to a partition constructed in the second step.
	Additionally, let ${F = (\mathbb{P}^j_X, E_F)}$ refer to the forest induced by the nesting properties of cycles within the partitions.
	
	By labeling the vertices of this forest by their depth in $F \bmod k$, it is possible to construct~$\mathcal{O}(1)$ flows $\{G_{\bigcirc_1}^s, G_{\bigcirc_2}^s, \dots \}$, each being the union of cycles sharing a~label.

	It remains to show that each flow $G_{\bigcirc_i}^s$ with $i\in[1,k]$ is a $d$-subflow.
	For this, consider any two cycles $u,v\in \mathbb{P}_X^j$ that share a common edge $e$.
	By construction, one of the two cycles must lie within the other.
	Without loss of generality, assume that $v$ lies in $u$.
	This implies that there exists a path from $v$ to its root via $u$ in $F$, with all cycles on the path between
	$v$ and $u$ sharing the edge $e$ as well.
	As all edges in $G_\bigcirc^s$ are bounded from above by $k\cdot d$, the cycles containing $e$ lie on a path of length no more than $k\cdot d$ in $F$.
	Thus, $e$ has a weight of at most $\nicefrac{k\cdot d}{k} = d$ in every $G_{\bigcirc_i}^s$.
	We conclude that each flow $G_{\bigcirc_i}^s$ is a $d$-subflow of $G_\bigcirc^s$.
\end{proof}

We now adjust~\Cref{lem:supplydemand_partitioning} to compute a $(d,\mathcal{O}(1))$-partition of $G_\rightarrow^s$.
\begin{lemma}
	We can compute a $(d,\mathcal{O}(1))$-partition of an acyclic flow  $G_\rightarrow^s$ which is a $(k\cdot d)$-flow for some constant $k\in\mathbb{N}$ in polynomial time.
\end{lemma}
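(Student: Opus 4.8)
The plan is to re-run the argument of \Cref{lem:supplydemand_partitioning} essentially unchanged; the only genuine difference is quantitative. There, the acyclic flow $G_\rightarrow$ had edge values bounded by $\Theta(d^2)$, which is why the greedy grouping produced $\mathcal{O}(d)$ subflows. Here, after the preprocessing described above, $G_\rightarrow^s$ is only a $(k\cdot d)$-flow (concretely, bounded above by $3(20cd-4)$), and this single factor of $d$ saved in the edge bound is exactly what turns the $\mathcal{O}(d)$ count into an $\mathcal{O}(1)$ count.

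Concretely, I would first decompose $G_\rightarrow^s = (\scaffold_5, E, \supplydemandflow)$ into $1$-subflows, each a directed path connecting a supply vertex to a demand vertex. Since $G_\rightarrow^s$ is planar and unidirectional after preprocessing, it may be viewed as a directed forest, and we handle each tree $A$ separately: fix an arbitrary root, order the paths $(P_1, P_2, \dots)$ of $A$ by link distance to that root, and greedily assign each $P_i$ to the first set $S_j$ containing no path through the first edge of $P_i$, creating a new set if none qualifies (sets are shared across the trees of $G_\rightarrow^s$). Because the head of any directed edge in the grid dual has in-degree at most three, each edge lies on at most three paths of any single set; grouping the sets into blocks of $\nicefrac d3$ sets and letting each block induce the flow that counts path multiplicities then yields subflows with edge values at most $\nicefrac d3 \cdot 3 = d$, each acyclic as a subgraph of the acyclic $G_\rightarrow^s$.

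It remains to bound the number of blocks, which is where the improved flow bound enters. Write the edge bound of $G_\rightarrow^s$ as $k' d$ for a constant $k'$ (e.g.\ $k' = 60c$, since $3(20cd-4) \le 60cd$). In the $1$-subflow decomposition, at most $k'd$ paths can pass through any fixed edge. Now suppose the greedy procedure opens an $M$-th set, triggered by a path $P_i$ whose first edge is $e_1$: then each of the $M-1$ earlier sets already contains a path through $e_1$ (otherwise $P_i$ would have been placed there), and $P_i$ contributes one further path through $e_1$; since the sets partition the paths, these are $M$ pairwise distinct paths through $e_1$, forcing $M \le k'd$. Partitioning these at most $k'd$ sets into blocks of $\nicefrac d3$ sets thus produces at most $\lceil 3k' \rceil = \mathcal{O}(1)$ blocks, i.e., a $(d,\mathcal{O}(1))$-partition of $G_\rightarrow^s$ into acyclic flows. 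All steps — the flow decomposition, the link-distance sorting, the greedy assignment, and the final grouping — run in polynomial time.

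The only point requiring care is the constant bookkeeping: one must make sure that the edge bound actually used is the linear $3(20cd-4)$ and not the interior bound $3(5cd-2)^2$ of \Cref{lem:supplydemand_partitioning}, since it is precisely the linearity in $d$ that collapses the number of blocks from $\mathcal{O}(d)$ to $\mathcal{O}(1)$. The block size $\nicefrac d3$ and the ``at most three paths per edge per set'' fact are purely combinatorial properties of the grid dual graph and the greedy rule, so they carry over from \Cref{lem:supplydemand_partitioning} verbatim, and nothing else in that argument changes.
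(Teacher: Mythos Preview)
Your proposal is correct and follows essentially the same approach as the paper: both refer back to the algorithm of \Cref{lem:supplydemand_partitioning} and argue that the linear edge bound $k\cdot d$ (rather than the quadratic one) caps the number of greedy sets at $\mathcal{O}(d)$, so grouping into blocks of $\nicefrac{d}{3}$ yields $\mathcal{O}(1)$ subflows. Your counting is in fact spelled out more carefully than the paper's terse version.
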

\begin{proof}
	See the proof of \Cref{lem:supplydemand_partitioning} for the algorithm itself.
	We know that $G_\rightarrow^s$ is bounded from above by $k\cdot d$.
	Suppose that the algorithm constructs $k+1$ flows, implying that the algorithm encountered a configuration in which a path $P_i$ could not be assigned to the existing $\lambda = 3k \cdot \nicefrac{d}{3}$ sets $S_1, \dots, S_\lambda$, spawning a new set $S_{\lambda+1}$.
	This implies that each set $S_1, \dots, S_\lambda$ contained a path $P_j$ that shares the first edge $e_1$ of~$P_i$, meaning that there were at least $k\cdot d +1 > k\cdot d$ paths containing $e_1$, implying
	a flow value of more than $k\cdot d$.
\end{proof}

\paragraph*{Realizing a single subflow}
We now argue that the computed subflows can be realized in $\mathcal{O}(d)$ transformations, and begin with the realization of a single subflow.

\begin{lemma}
	Let $H_s = (\scaffold_5, E_{\scaffold_5}, \mathit{f}_s')$ be a $d$-subflow of $G_{\scaffold_5}^s$.
	We can efficiently compute a stable schedule of makespan $\mathcal{O}(d)$ that realizes $H_s$.
\end{lemma}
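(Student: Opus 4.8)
The plan is to mirror the interior-flow realization of \Cref{sec:subflow_realization}: a $d$-subflow $H_s$ of $G_{\scaffold_5}^s$ moves at most $d$ robots across each wall between adjacent tiles of $\scaffold_5$, and I would realize it with the same push-stable machinery, using \Cref{the:local_od} to reduce the boundary case to the interior case. The one genuinely new difficulty is that the robots we must relocate are precisely the scaffold robots that certify global connectivity; the guiding principle is therefore to never vacate a scaffold position before a replacement robot occupies it, so that the silhouette stays a valid tiled configuration — and hence connected — at every intermediate step.

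Concretely, I would split the realization into a constant number of stages, each of makespan $\mathcal{O}(d)$ and each carried out in parallel over all tiles, which \Cref{the:local_od} permits. \emph{Setup:} for a tile $T$, let $\Lambda_T$ (with $|\Lambda_T|\le 4d$) be the robots leaving $T$, at most $d$ per wall. Using the interior--boundary exchange of \Cref{lem:interior_boundary_mixing,the:local_od}, first swap each robot of $\Lambda_T$ with an interior robot of $T$ that does not itself leave $T$, so that the departing robots sit in the interior and placeholder robots hold their old scaffold positions; then, again via \Cref{the:local_od}, bring $T$'s interior into a push-stable configuration in which the departing robots are stacked in the first interior layer against the wall they will cross, ordered as the matching will require, while a complete frame of robots remains in that first layer. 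As in the scaffold construction of \Cref{lem:scaffold_od}, a sparse interior may be padded with pseudo-robots (empty cells) for this frame in the manner of \Cref{lem:labeled_interior_od}; since $H_s$ opens at most $d$ holes in any one wall, the intact remainder of the wall together with this frame keeps $T$ — and thus the global scaffold — connected.

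\emph{Matching and push:} exactly as in Phases~3.3.2--3.3.3, apply \Cref{the:rotatesort} locally to embed the departing robots of $T$ into the shared walls, pair them with the incoming robots of the neighbours by crossing-free paths through the boundary and first interior layers, and then perform a single simultaneous pushing step; this is the motion of \Cref{lem:partial_d_subflows,lem:total_d_subflows} with $\ell=1$, which already accounts for partial and total sinks and sources, so it applies verbatim to $G_{\scaffold_5}^s$. \emph{Cleanup:} within each target tile $T'$, one more application of \Cref{the:local_od} swaps the just-arrived robots — now in $T'$'s boundary — with $T'$'s placeholder robots, restoring $T'$'s interior; composed with the setup stage this realizes $H_s$ exactly. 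Each stage is $\mathcal{O}(1)$ invocations of \Cref{the:local_od}, \Cref{the:rotatesort}, or a single push, all polynomial-time, for a total makespan of $\mathcal{O}(d)$.

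The step I expect to be the main obstacle is verifying that every intermediate configuration of the push stage is connected, in particular for tiles that are total sources of $G_{\scaffold_5}^s$ (where the triangle-emptying argument of \Cref{lem:total_d_subflows} is the right analogue) and for tiles that simultaneously lose robots through one wall and gain robots through another. This is precisely where the hypothesis that $H_s$ is a $d$-subflow — rather than the full $(20cd-4)$-flow $G_{\scaffold_5}^s$ — is essential: it bounds the number of holes opened in any wall by $d$, which is exactly the slack that the push-stable invariant of \Cref{sec:subflow_realization} is built to absorb, and it lets the inner frame together with the placeholders certify connectivity throughout.
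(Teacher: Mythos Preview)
Your guiding principle---never vacate a scaffold position before a replacement occupies it---is exactly what breaks. The swap in your Setup stage requires, for each tile $T$, at least $|\Lambda_T|$ interior robots that are not themselves leaving, to serve as placeholders in the boundary. But after Phase~3 a tile may have arbitrarily few interior robots; in particular it may consist of nothing but its boundary, so $\eta = 20cd-4$ and there are zero interior robots available. Your ``pseudo-robot'' padding from \Cref{lem:labeled_interior_od} does not help here: an empty cell cannot hold a scaffold position, and an inner frame built from empty cells certifies no connectivity at all. So for such tiles the very first step of your plan is infeasible, and the subsequent sentence ``since $H_s$ opens at most $d$ holes in any one wall, the intact remainder of the wall together with this frame keeps $T$ connected'' is asserting connectivity through a frame that may not exist.

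The paper takes the opposite tack: it \emph{permits} holes to open in the scaffold during the push and argues connectivity anyway. When $\eta-\eta_{\text{out}}\geq 20cd-4$ the interior machinery of \Cref{sec:subflow_realization} applies verbatim. The delicate case is $\eta-\eta_{\text{out}}<20cd-4$, which (crucially) can only occur at a flow-conserving tile, so $\eta_{\text{in}}=\eta_{\text{out}}$. Then every outgoing side of $T$ is adjacent to an incoming side of a neighbour, so the push itself cannot disconnect the scaffold; afterwards the at most $\eta_{\text{in}}\le d$ holes are patched by pushing the just-arrived robots along and into the boundary in $\mathcal{O}(d)$ steps. The key idea you are missing is this case distinction and the observation that source tiles can never drop below $20cd-4$ robots, which localizes the hole-patching problem to tiles where incoming robots are guaranteed to be available for the repair.
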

\begin{proof}
	The exact movement comes down to the number of robots within each tile the flow passes through.
	
	Consider any tile $T\in\scaffold_5$ that has non-zero flow in $H_s$ and let $\eta\leq(5cd)^2$ be the initial number of robots within the tile (including the scaffold).
	Additionally, let $\eta_\text{out}\leq d$ denote the exact number of robots that need to leave $T$ to realize $H_s$ at its location.
	Each tile always contains at least its boundary robots, so $\eta\geq 20cd-4$.
	The existing movement patterns from~\Cref{sec:subflow_realization} may be utilized to realize the flow at $T$'s location if $\eta-\eta_\text{out}\geq 20cd-4 $.
	
	On the other hand, if $\eta-\eta_\text{out} < 20cd-4$, there are not sufficiently many robots within the tile's interior to close up the
	holes in the scaffold through which the $\eta_\text{out}$ robots leave it as part of the pushing motion.
	Note that this can only occur at flow-conserving tiles, because source tiles cannot end up with less robots than
	$20cd-4$.
	This means that there are $\eta_\text{in} = \eta_\text{out}$ robots entering $T$ as part of the same flow.
	
	As every outgoing side of $T$ is adjacent to an incoming side of another tile $T'$, no part of the scaffold can
	become disconnected through the pushing motion.
	Subsequently, there are at most $\eta_\text{in}$ unoccupied positions in the boundary of $T$.
	By pushing along and into the boundary from the middle of the line of $\eta_\text{in}$ robots that were pushed into~$T$,
	those robots may patch all holes in $\mathcal{O}(\eta_\text{in}) =~\mathcal{O}(d)$ transformations, as depicted in~\Cref{fig:scaffold_stability}.
	Overall, this realizes a single $d$-subflow of $G_{\scaffold_5}^s$ in $\mathcal{O}(d)$ steps.
\end{proof}

\begin{figure}[h]
	\centering
	\def\svgwidth{\columnwidth}
	\import{./figures/}{scaffold_flow_stability.svg.pdf_tex}
	\caption{We can employ the displayed movement patterns to exchange boundary robots between adjacent tiles.}
	\label{fig:scaffold_stability}
\end{figure}

\paragraph*{Realizing all subflows}
It is now easy to see that all subflows can be realized in a total of $\mathcal{O}(d)$ transformations; thus, we obtain the following.

\begin{lemma}
	For any pair $(C_s,C_t)$ with diameter $d$, we can compute a stable schedule with makespan $\mathcal{O}(d)$ that realizes $G_{s}$, transforming a labeled tiled
	configuration $C_s'$ into another labeled tiled configuration $C_t'$, in polynomial time.
\end{lemma}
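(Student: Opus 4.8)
The plan is to assemble the pieces of Phase~4 into one $\mathcal{O}(d)$ schedule, following the same template as the interior-flow realization of Phase~3. First I would recall that after scaffold construction every robot destined for the boundary of a tile $T\in\scaffold_5$ already lies in $N_1[T]$, so the required reassignment of scaffold robots is faithfully captured by the boundary flow graph $G_s := G_{\scaffold_5}^s$, which by construction is a $(20cd-4)$-flow. I would then preprocess $G_s$ into a planar, unidirectional flow exactly as for $G_{\scaffold_5}$: using \Cref{the:local_od} to gather the robots that must cross a given wall into that wall and swapping them by local rotations, which removes all bidirectional and crossing edges in $\mathcal{O}(d)$ transformations and leaves $G_s$ bounded from above by $3(20cd-4)=\mathcal{O}(d)$.

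Next I would decompose and partition. Applying \Cref{lem:flow_component_separation} gives a $(3(20cd-4),2)$-partition of $G_s$ into an acyclic flow $G_\rightarrow^s$ and a totally cyclic flow $G_\bigcirc^s$. To each I would apply the boundary-flow variant of the corresponding partitioning result -- the modification of \Cref{lem:circulation_partitioning} for the circulation and the modification of \Cref{lem:supplydemand_partitioning} for the acyclic component -- each yielding a $(d,\mathcal{O}(1))$-partition because the input flows are now $(k\cdot d)$-flows for constant $k$ rather than $(k\cdot d^2)$-flows. Taking the union of the two partitions produces a $(d,\mathcal{O}(1))$-partition of $G_s = G_\rightarrow^s + G_\bigcirc^s$.

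It then remains to realize the $\mathcal{O}(1)$ many $d$-subflows. For a single $d$-subflow I would reuse the push-stable movement patterns of \Cref{sec:subflow_realization}: as long as each tile keeps at least its $20cd-4$ boundary robots, the same pushing schedule applies directly. The only new case is a flow-conserving tile that would temporarily drop below $20cd-4$ robots while its $\eta_{\text{out}}\leq d$ outgoing robots leave; there the same flow forces $\eta_{\text{in}}=\eta_{\text{out}}$ incoming robots, and after the pushing motion these are used to patch the at most $\eta_{\text{in}}$ holes in the boundary of $T$ in a further $\mathcal{O}(\eta_{\text{in}})=\mathcal{O}(d)$ steps (cf.~\Cref{fig:scaffold_stability}). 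Concatenating the $\mathcal{O}(1)$ subflow schedules -- each of makespan $\mathcal{O}(d)$, each taken over a bipartite parity class of tiles so that parallel execution stays stable -- yields the claimed stable schedule of makespan $\mathcal{O}(1)\cdot\mathcal{O}(d)=\mathcal{O}(d)$ turning $C_s'$ into $C_t'$, and every constituent computation (flow, decomposition, peeling, matching, pushing) runs in polynomial time.

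The step I expect to be the main obstacle is the stability bookkeeping during realization: scaffold robots are precisely the ones certifying global connectivity, so every hole punched in a boundary to let a robot pass must be shown never to disconnect the configuration. The saving observation is that every outgoing side of a tile abuts an incoming side of a neighbor, so any hole is immediately ``backed'' by an arriving robot; I would make sure this holds simultaneously across all tiles of a fixed parity class, so that the patching argument and the resulting parallel schedule remain valid at every intermediate configuration.
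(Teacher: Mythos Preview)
Your proposal is correct and follows essentially the same approach as the paper. The paper's own proof of this lemma is just a two-line summary---there are $\mathcal{O}(1)$ many $d$-subflows by the preceding partitioning lemmas, and each is realized in $\mathcal{O}(d)$ by the single-subflow lemma, hence $\mathcal{O}(d)$ total---whereas you have unrolled the entire Phase~4 pipeline (preprocessing, decomposition, partitioning, single-subflow realization with the hole-patching case) into the proof of this one statement; your added remark about bipartite parity classes is not something the paper invokes here, but it does no harm.
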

\begin{proof}
	The entirety of $G_{\scaffold_5}^s$ may be realized by applying the previous approach to each of the $d$-subflows, bringing all scaffold robots into their correct target tiles.
	As there are $\mathcal{O}(1)$ such $d$-subflows, the total makespan of this  is $\mathcal{O}(d)$.
\end{proof}

\subsection{Phases~\num{5} and~\num{6}: Scaffold deconstruction}
\label{subsec:scaffold-deconstruction}

Once Phase~\num{4} concludes, we have reached a tiled configuration in which every tile contains precisely the robots that it would in a tiled configuration $C_t'$ of the target configuration.
This means that we can reconfigure into $C_t'$ by a single application of~\Cref{the:local_od}, which forms the entirety of Phase~\num{5}.

As Phase~\num{6} is a reverse of Phase~\num{2}, this concludes the description of the algorithm.
Because each phase takes $\mathcal{O}(d)$~transformation steps, this proves~\Cref{the:c_lower}.

\section{Conclusions and further considerations}
\label{sec:further-considerations}
We have provided new results for efficient coordinated motion planning for a labeled swarm of robots.
In particular, we resolved two major open problems for connected reconfiguration:
in the labeled case, a stretch factor of $\Omega(\sqrt{n})$ may be inevitable for $n$ robots, and constant stretch can be achieved for scaled arrangements of labeled robots;
previously, the former was unknown for any kind of connected reconfiguration, while the latter was only known for the (considerably easier) unlabeled case.

These results pose a number of relevant follow-up questions and provide insights into related problems.
In the following, we briefly discuss both direct implications and further research questiona.

\subsection{Colored reconfiguration}
\label{subsec:colored}

This paper focuses on \emph{labeled} robots, so each robot has a known start and end
position; on the other hand, 
Fekete et al.~\cite{connected-motion-journal} considered the unlabeled version of the problem,
in which robots can freely be exchanged to achieve a desired final shape.
A natural generalization of both is
\emph{colored} reconfiguration,
in which we have $\ell\in \mathbb{N}$ different classes of objects that are exchangeable within
each class.
In this setting, the unlabeled case corresponds to $\ell=1$, while the labeled case amounts to $\ell=N$.

By employing a minimum bottleneck matching between objects in the respective
color classes to assign final positions from the initial configuration,
before applying our labeled reconfiguration, it is straightforward to see that all our results also 
hold for the colored version of the problem.

\subsection{Achievable makespans}
\label{subsec:achievable-makespans}
An interesting open problem arises from considering lower bounds on stretch factors.
Although we provided a lower bound of $\Omega(\sqrt{n})$ on the achievable
stretch factor in some scenarios, we believe that constant stretch schedules
exist in most instances.
More specifically, we conjecture that the constant $c^*$ introduced in~\Cref{the:c_lower} is as low as~$2$, i.e., constant stretch
is achievable for any pair of configurations of scale $2$.
The thin counterexample with scale~$1$ in~\Cref{the:sqrt-n-stretch} is heavily restricted by the global implications of local movement, which appear to lose relevance quickly with growing scale.
Moreover, it is unknown whether $\mathcal{O}(\sqrt{n})$ stretch 
can always be achieved, in particular for thin configurations.

\subsection{Decentralized methods}
\label{subsec:distributed}
Another important set of questions arises from running the ensuing protocols in
a largely decentralized fashion.
This involves two aspects of parallelization, distinguishing between
distributed methods for (1) carrying out the computations and (2) performing
the actual motion control. 

We are confident that both issues
can be addressed to some extent by making use of the hierarchical
structure of our approach, which operates on (I) the macroscopic
tile graph, as well as on (II) the set of robots within tiles. Making
use of the canonical algorithmic structures and reconfigurations within tiles,
it should be relatively straightforward to do this at the individual tile levels (II)
based on local operations (for computation) and protocols (for motion control).
It is plausible that the macrocscopic aspects (I) can be addressed by 
making use of distributed methods for computing network flows, such
as~\cite{jiang2013parallel}; using these in our context would require embedding the
high-level flow graph into the scaffold framework and implementing the
corresponding distributed algorithms as protocols into our lower-level scaffold
structures. 

However, many of the involved details appear to be quite intricate
and go beyond the main purpose of this paper.
Consequently, these and other aspects are left for future work.

\bibliography{references}

\end{document}